\documentclass[runningheads]{llncs}

\usepackage[T1]{fontenc}
\usepackage{graphicx}
\usepackage{amsmath}
\usepackage{amssymb}
\usepackage{mathtools}
\usepackage{subcaption}
\usepackage[natbib=true, style=authoryear]{biblatex}

\begin{document}
\title{Custom Keep-Alive Cache Policies}
%
%

\author{Sushirdeep Narayana\inst{1,2}\ \and
Ian A. Kash\inst{2}}
\authorrunning{}
%
\institute{Wesleyan University, Middletown, CT, USA \\
\email{snarayana@wesleyan.edu, snaray25@uic.edu}\\
\and
University of Illinois at Chicago, Chicago, IL, USA\\
\email{iankash@uic.edu}}
\maketitle              
\begin{abstract}
 We study the market design of keep-alive caching policies applicable in serverless computing. Prior work has assumed that the cost of a cache miss (cold start) is uniform across all customer applications. However, the cost of a cache miss depends on the customer's application. We investigate the market design where the customers submit a bid for their cost of a cache miss. We design a cache allocation policy based on online learning from a mixture of fixed allocation experts. We show that our custom cache allocation policy is asymptotically efficient and monotonically non-increasing with respect to the submitted bid. We examine two ways of charging customers to achieve good incentives. In the first payment scheme the customers are charged based on Myerson's theory, whereas in the second payment scheme the customers are charged their externality. We show via a mix of simulations and theory that both schemes have desirable revenue and incentive properties.
 
\keywords{Serverless Computing \and Pricing  \and Keep-Alive Caching \and Mechanism Design .}
\end{abstract}

\section{Introduction}

In serverless computing (also known as Function-as-a-Service, or FaaS) customers provide code to be invoked in response to a trigger, such as an HTTP request or a timer.  A cloud provider takes responsibility for provisioning the necessary hardware and software resources to run the code, with the customer typically only charged based on the time those resources are being used to run the code. This leaves the cloud provider with a challenging problem about when to keep these resources provisioned.  If resources are already provisioned when a function is invoked, the response has a very low latency because the code can start running immediately (a ``warm start'').  If the resources are not provisioned, there will be a delay while provisioning occurs (a ``cold start''), which may be problematic for some applications.  Warm starts are always preferable, but there is an opportunity cost in that the provisioned resources could have been used for other purposes.

This creates a form of caching problem where the cloud provider needs to decide how long to keep the resources provisioned once the function invocation completes (the ``keep-alive window'').  In contrast to traditional caches which have a fixed size, cloud providers can flexibly adjust how much of their overall pool of resources they are allocating to serverless computing. 
Thus, prior work has modeled this as having a cost-per-unit-time to keep a function cached~\citep{narayana2023keep}.  Prior work has developed caching policies designed to optimize the tradeoff between the cost of cold starts and the cost of caching using historical data~\citep{shahrad2020serverless,narayana2023keep}.  It has also developed techniques to reduce the latency of cold starts, thus reducing (but not eliminating) the need for caching~\citep{oakes2018sock,mohan2019agile,roy2022icebreaker}.

However, the cost of a cold start varies from customer to customer and application to application depending on the need for low latency.  Recognizing that cold starts are extremely costly in some cases, AWS Lambda offers customers the option to have their function permanently cached in exchange for paying for the resources the entire time, not just when the function code is running.  Azure Functions recently launched a similar feature, suggesting customizing cache policies in response to customer costs is an important practical consideration.  However, the current market design provides only a limited opportunity for this.  Such permanent caching may not be optimal for all customers.  Furthermore, whether it is optimal depends not only on the customer's cost for a cold start (which seems reasonable to know) but also on the stochastic process governing function invocations (about which the customer may have much less understanding).

We examine an alternative market design where customers report a single number representing their cost each time one of their function invocations experiences a cold start.  Based on this, an online learning algorithm can automatically customize the length of the keep-alive window, achieving approximately high efficiency as if the optimal length of the keep-alive window (among the options available) had been selected in advance. Prior work has developed sophisticated online learning algorithms for more general settings which can guarantee monotonicity of allocations~\citep{babaioff2015truthful} or approximate VCG allocations and payments~\citep{kandasamy2023vcg}.  In contrast, we show that our setting has nice structure such that the classic exponential weights algorithm is inherently monotone.  The structure we exploit (single parameter utilities combined with an allocation approach also controlled by a single parameter), seems natural in other settings as well, suggesting our approach may be useful beyond the specific market we study.

We study two different ways of charging customers to incentivize truthful reporting of cold-start costs and compensate the cloud provider for the costs of providing the caching.  The first is the standard \citet{myerson1981optimal} approach of charging payments based on the calculation of an integral of the allocation.  This is the integral of a parameterized softmax, and does not appear to have a closed form in general, although it does in special cases and can be evaluated numerically.  This guarantees that the resulting mechanism is incentive compatible, but the resulting payments are somewhat complex, which  may make it less attractive to customers.  In general, it also doesn't provide strong guarantees to the cloud provider about the relationship between the revenue received and the cost the cloud provider's incurs. However, we provide evidence that revenue and costs are reasonably close for two natural stochastic processes and on a trace from Mircosoft Azure.

The second, motivated by the fact that our algorithm is trying to approximate the efficient outcome, is to charge the customer their externality, as in VCG.  Here, the payments correspond to charging customers exactly the caching cost incurred by the provider, so we have a guarantee that the revenue exactly matches the provider's costs. Furthermore, payments are easy to explain to customers and consistent with the current market design of allowing permanent caching in exchange for full payment of the resulting costs.  In general, such approximate VCG payments need not be incentive compatible, but again we provide empirical evidence that they generally have good incentive properties in our setting.

\section{Related Work}
\label{related_work}

 \citet{shahrad2020serverless} describe a system that uses historical data to optimize keep-alive windows and also consider adding items back to the cache at a later time if invocations are predictably periodic.  \citet{narayana2023keep} show that the class of keep-alive policies we study is optimal for a class of arrival processes including Poisson and Hawkes and that Hawkes processes are a relevant model for many existing serverless customers.  However, both rely on historical data and optimize only based on the arrival process, not the customer's personal cold start cost.  \citet{lin2020serverless} discuss the desirability of customized policies, but do not provide a market design to implement them.

\citet{babaioff2015truthful} show how to make bandit algorithms with monotone allocation rules incentive compatible and design one such monotone bandit algorithm for a stochastic ad auction setting.  Motivated by providing service to cloud computing customers and advertising auctions, \citet{kandasamy2023vcg} develop a sophisticated bandit algorithm that learns to approximate VCG allocations and payments.  In contrast, we work in the classic online learning from experts setting rather than the bandit setting as once the invocation occurs we can determine whether every possible caching policy would have a cold start and how long it kept the function in the cache.  Furthermore, we exploit known structure in the cost function while their approaches assume it to be general.  Thus, we are able to use a standard algorithm such as multiplicative weights.  We believe our work explores an interesting new point in the design space combining online learning algorithms with information elicitation.

There is also literature that focuses on the use of bandit algorithms to set posted prices. \citet{babaioff2015dynamic} design revenue-maximizing online posted-price mechanisms where the seller has limited supply based on the UCB1 algorithm where the index of an arm (price) is defined according to the estimated expected total payoff from this arm given the known constraints. \citet{roth2020multidimensional} investigate the problem of dynamic pricing of distinct types of indivisible goods, when buyers having unit-demand buyers drawn independently from an unknown distribution arrive in an online manner.
Our work shares the approach of exploiting known structure of the loss function, but has different goals (efficiency and incentives rather than revenue maximization).

There has also been work on incentivizing exploration in bandits without payments using a Bayesian framework. \citet{mansour2020bayesian} model a social planner's recommendation policy under incentive compatibility constraints induced by the agent's Bayesian priors.
Our work shares the focus on choosing exploration actions on behalf of an agent, but uses payments to achieve good incentives.

\section{Preliminaries}
\label{preliminaries_model}

First, we review some background on serverless computing to better understand our market design challenge. 
In traditional caches, an object is evicted from the cache when space is required, so the key decision is \emph{what} to keep in the cache. In contrast, the decision in keep-alive caches is \emph{when} and for \emph{how long} to keep the object in the cache.  Keep-alive caches essentially assume the potential size of the cache is infinite because in cloud contexts, such as serverless computing, hardware resources can be freely added or removed from the cache.
The reason we don't simply want to cache everything in keep-alive cache systems is that doing so ties up resources that could instead be used for other purposes.  Thus, there is an (opportunity) cost based on the duration of keeping an application image in the cache.  There is also the traditional caching cost of a cache miss, which in the context of serverless computing corresponds to a cold start.  The keep-alive cache policy thus trades off the cost of caching a user's application against the cost of cold starts. 


Let $\mathcal{H}_{i} = \{t_{1}, \; t_{2}, \; \cdots, \; t_{i}\}$ denote the history of $i$ previous {\em arrivals} or {\em requests} from the user's application.  Here, $t_j$ denotes the arrival time of the $j$-th request. Let $x_{i} = t_{i} - t_{i-1}$ denote the \textit{i}-th inter-arrival time. We consider the simplest form of a keep-alive cache policy, known as a \textit{Time-To-Live} (TTL) policy. This has the form of a time window after the most recent arrival during which the customer's application image is kept in the cache (each subsequent arrival resets the TTL). 
We denote the length of such a time window by $\tau$, and when we want to specify parameters on which a particular window length depends, we put them in the subscript (e.g. $\tau_{\mathcal{H}_{i}}$ for dependence on history).

An application encounters a warm start if its arrival (i.e.~the invocation of the function) occurs when the keep-alive window is active, analogous to a cache hit ($x_i \leq \tau$). An application has a cold start when the keep-alive window is not active at the time of arrival, analogous to a cache miss ($x_i > \tau$). 

Prior work has sought to identify an optimal choice of $\tau $ for an application either theoretically based on known parameters of the arrival process or empirically based on historical arrival data~\cite{shahrad2020serverless,narayana2023keep}.
In contrast, we seek to (a) adapt a (randomized) window length automatically using online learning techniques and (b) use pricing to elicit a customer's cost for cold starts and customize the window length accordingly.




\begin{definition}
A {\em custom keep-alive cache policy}  takes the submitted report $\hat{\theta}$ of the cost of a cold start from the user as an input, and after the $i$-th arrival of the user's application outputs a keep-alive window length of $\tau_{i+1} = \tau_{\hat{\theta},\mathcal{H}_{i}}$ for the $(i+1)$-st inter-arrival, while charging a payment $p_i=p_{\hat{\theta}, \mathcal{H}_{i}}$ for the $i$th arrival. In our approach $\tau_{i+1}$ is randomized, but as is standard $p_i$ can be deterministic or randomized, as preferred.
\end{definition}

This formulation assumes the user makes a single report of their cost for a cold start when first deploying their application and that the cost does not vary over time so can be applied to all future arrivals.
Given $\tau_i$ and $x_i = t_i - t_{i-1}$ we can now define the loss experienced by both agents.  The customer experiences a cold start if the window length is shorter than the inter-arrival.  We capture this with the following indicator.
\begin{equation*}
 cs(\tau_i, x_{i}) = \begin{cases}
& 0, \quad \text{if} \quad x_i \leq \tau_i\\
& 1, \quad \text{if} \quad x_i > \tau_i
\end{cases}       
\end{equation*}
This results in a loss to the customer of $\theta \cdot cs(\tau_i,x_i)$, which captures the loss in the quality of service experienced by the customer due to the latency delay during a cold start.  This loss depends on the customer's true type $\theta$, which is not known to the cloud provider.

The cloud provider receives a loss based on the resource (primarily memory) ``wasted'' while caching as they are not in active use but are unavailable for other uses. 
Following~\citet{narayana2023keep}, we assume that the cost of keeping an application in cache is a fixed constant $c_p$ per unit time.  The length of time this memory is held is the entire inter-arrival for a warm start, but limited by the keep-alive window for a cold start.
\begin{equation*}
 wm(\tau_i, x_{i}) = \begin{cases}
& c_{p} \cdot x_i, \quad \text{if} \quad x_i \leq \tau_i\\
& c_{p} \cdot \tau_i, \quad \text{if} \quad x_i > \tau_i
\end{cases}       
\end{equation*}

The two losses, wasted memory loss and cold start loss, combine to give the overall loss.

\begin{definition}
The cumulative loss $L_{i}(\hat{\theta}, \mathcal{H}_{i}, \theta)$, or social cost, over \textit{i} arrivals is  
\begin{equation*}
\begin{split}
 L_{i}(\hat{\theta}, \mathcal{H}_{i}, \theta) & = \sum_{j=1}^{i} wm(\tau_{\hat{\theta}, \mathcal{H}_{j-1}}, x_{j}) + \sum_{j=1}^{i} \theta \cdot cs(\tau_{\hat{\theta}, \mathcal{H}_{j-1}}, x_{j}) .
\end{split}
\end{equation*}

where $\hat{\theta}$ is the submitted report, and $\theta$ is the true cost of a cold start for the customer. 
\end{definition}

In the definition, the cumulative loss depends on the report $\hat{\theta}$ only through the sequence of keep-alive window lengths it induces.  Thus, in a slight abuse of notation we can also apply it to an arbitrary fixed window length $\tau$ as 
\begin{equation*}
\begin{split}
 L_{i}(\tau, \mathcal{H}_{i}, \theta) & = \sum_{j=1}^{i} wm(\tau, x_{j}) + \sum_{j=1}^{i} \theta \cdot cs(\tau, x_{j}) .
\end{split}
\end{equation*}

\section{Learning a Mixture of Fixed Keep-Alive Policies}
\label{learn-experts}

For cloud providers to deliver good results for both themselves and their customers, we take as our primary objective to optimize custom keep-alive policies in terms of the cumulative loss $L_{i}(\hat{\theta}, \mathcal{H}_{i}, \theta)$. In prior work, \citet{narayana2023keep} and \citet{shahrad2020serverless} developed a closed form of the optimal keep-alive policy when the cloud provider has knowledge of the distributions of the arrivals. However, our work focuses on deriving keep-alive policies where the cloud provider may not know this distribution. In this section, we examine the design of custom keep-alive policies that get close to a fixed optimal keep-alive policy via online learning.


Our custom keep-alive policy is starts from a finite set of keep-alive window lengths $\mathcal{T} = \{ \tau_{1}, \tau_{2},   \cdots, \tau_m\}$.  Our custom keep-alive policy is then constructed using the standard exponential weights approach from online learning.  We assign positive weights to every fixed policy $\tau_{j} \in \mathcal{T}$ and select one with probability proportional to the weights. 


\begin{definition} \label{exp_prob_wts}
The exponentially weighted average custom cache policy chooses fixed policy $\tau_{j} \in \mathcal{T}$ after $n$ arrivals with probability
\begin{equation*}
Pr( \tau_{\hat{\theta},\mathcal{H}_{n}} =  \tau_j )  = \frac{e^{-  L_{n} (\tau_j, \mathcal{H}_{n}, \theta) }}{\sum_{\tau_{k} \in \mathcal{T}} e^{ -  L_{n}  (\tau_k, \mathcal{H}_{n}, \theta)}}
\end{equation*}

\end{definition}


Exponential weights are an attractive approach because they have strong guarantees about the expected {\em regret} of the resulting policy.

\begin{definition}
The regret of the custom policy with respect to $\tau_j \in \mathcal{T}$ after $n$ arrivals is denoted
\begin{align*}
\begin{split}
R_{n}(\hat{\theta},\tau_j, \mathcal{H}_{n}, \theta) & = L_{n} (\hat{\theta}, \mathcal{H}_{n},\theta) - L_{n} (\tau_j, \mathcal{H}_{n}, \theta)  .
\end{split}  
\end{align*}
\end{definition}

Exponential weights guarantee that the regrets are sub-linear in $n$ for every $\tau_j \in \mathcal{T}$, meaning our custom policy does almost as well as the best fixed policy in hindsight.  This is attractive in that this is achieved without needing to know the parameters of the arrival process in advance.  Prior work has shown that such fixed policies are a strong baseline because there often exists a fixed policy that is a good approximation of the optimal policy \citep{shahrad2020serverless,narayana2023keep}. 
While we focus on exponential weights as it is so common and simple, our results likely apply to other regret minimization algorithms as the structure of our domain naturally promotes desirable properties, as the results in the next section show.

\section{Properties of Custom Policies Learned from Fixed Policies}
\label{properties-custom}

In this section, we examine the properties of exponentially weighted average custom keep-alive policies.  We focus on those properties that only depend on the keep-alive policy itself, deferring those that also depend on the payments to Section~\ref{sec:payments}.  We begin by defining the two relevant properties.
\begin{enumerate} 
    \item  Efficiency. We call a custom keep-alive policy \textit{asymptotically efficient}\footnote{Since our algorithm only optimizes over $\tau \in \mathcal{T}$, strictly speaking the outcome achieved is asymptotically maximal-in-range~\citep{nisan2007computationally}.} if the custom policy has a vanishing per-round expected regret. That is, 
    \begin{equation*}
     \lim_{n \rightarrow \infty}\max_{\tau \in \mathcal{T}} E\left[\frac{R_{n}(\hat{\theta},\tau, \mathcal{H}_{n}, \theta)}{n}\right] \rightarrow 0  
    \end{equation*}
    \item Monotonicity. A custom keep-alive policy is  monotone if the probability (over the randomness in the policy) of a cold start is monotonically non-increasing in $\hat{\theta}$. Mathematically, 
\begin{equation*}
\frac{\partial \; [Pr(\tau_{\hat{\theta},\mathcal{H}_{n}} \; < x_{i})]}{\partial \hat{\theta}} \; \leq 0    
\end{equation*}
\end{enumerate}

Asymptotic efficiency is immediate from the standard regret guarantees of exponential weights.\footnote{Strictly speaking this requires that all $\tau \in \mathcal{T}$ are finite to provide an upper bound on the per-arrival loss.  In practice, the difference between a large but finite $\tau$ and an infinite one is negligible, so we make use of $\tau = \infty$ when convenient.  Furthermore it relies on appropriately setting a learning-rate parameter that we omit for brevity in our exposition and also do not use in our experiments because it is unnecessary for good performance. (This omission can be interpeted as setting it to 1.)}  
It provides a strong guarantee of optimizing the cumulative loss approximately as well as the best fixed window length in hindsight.
Such asymptotic guarantees provide reliable performance for applications that have a reasonable number of arrivals.  In practice, it may be desirable to use heuristic policies for the first few arrivals to limit costs due to applications that are only ever invoked a handful of times.  See related discussion in Section~\ref{sec:Azure}.

Monotonicity is a natural property in its own right in terms of acting in a manner that is interpretable by customers: if they report a higher cost of cold starts they will receive (weakly) fewer cold starts.
It is also a desirable technical property we use when establishing properties related to incentives in Section~\ref{sec:payments}.
One of our key insights is that the structure of our setting makes algorithms like exponential weights inherently monotone, as the following lemma captures.
 
\begin{lemma} \label{lem:monotone}
The exponentially weighted average custom policy learned from a mixture of fixed keep-alive policies is monotone in $\hat{\theta}$.
\end{lemma}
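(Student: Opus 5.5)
Note that in Definition~\ref{exp_prob_wts} the weights are computed with the reported cost $\hat{\theta}$ in place of $\theta$. Fix the history $\mathcal{H}_n$ and write the loss of each expert as an affine function of $\hat{\theta}$:
\[
L_n(\tau_j,\mathcal{H}_n,\hat{\theta}) = W_j + \hat{\theta} C_j, \qquad W_j=\sum_{k=1}^n wm(\tau_j,x_k), \quad C_j=\sum_{k=1}^n cs(\tau_j,x_k).
\]
Order the experts so that $\tau_1<\tau_2<\cdots<\tau_m$. The structural fact I will use is that longer windows incur weakly fewer cold starts on every inter-arrival, since $cs(\tau,x)=\mathbf{1}[x>\tau]$ is non-increasing in $\tau$. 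Hence $C_1\ge C_2\ge\cdots\ge C_m$. The probability of a cold start on a given inter-arrival $x$ is $Pr(\tau_{\hat{\theta},\mathcal{H}_n}<x)=\sum_{j:\tau_j<x} p_j(\hat{\theta})$, where $p_j(\hat\theta)\propto e^{-W_j-\hat{\theta}C_j}$. This is the mass the softmax places on a lower set $\{1,\dots,k\}$ of the ordered experts, so it suffices to show that this lower-tail mass is non-increasing in $\hat{\theta}$ for every $k$. Equivalently, the distribution of the chosen window should increase in first-order stochastic dominance as $\hat\theta$ grows.

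I will differentiate directly. For the softmax, $\partial p_j/\partial\hat{\theta} = p_j(\bar C - C_j)$, where $\bar C=\sum_l p_l C_l$ is the expected number of cold starts under $p$. Summing over $j\le k$ gives
\[
\frac{\partial}{\partial\hat\theta}\sum_{j\le k}p_j = P_k\bigl(\bar C - \bar C_{\le k}\bigr),
\]
where $P_k=\sum_{j\le k}p_j$ and $\bar C_{\le k}$ is the $p$-weighted average of $C_j$ over $j\le k$. Because the $C_j$ are non-increasing in $j$, the average over the lower set is at least the average over the upper set. Since $\bar C$ is a convex combination of these two conditional averages, $\bar C\le \bar C_{\le k}$, and the derivative is $\le 0$.

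An alternative check is the likelihood-ratio view: $p_j(\hat\theta')/p_j(\hat\theta)\propto e^{-(\hat\theta'-\hat\theta)C_j}$ is non-decreasing in $j$ for $\hat\theta'>\hat\theta$, and monotone likelihood ratio implies stochastic dominance.

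The main obstacle is the bookkeeping that connects the statement's derivative to a lower-set probability. When several $\tau_j$ coincide with or straddle $x_i$, I need $\{\tau_j<x_i\}$ to be exactly a prefix of the sorted experts, which holds because of the sorting. I also need the monotonicity $C_j\ge C_{j+1}$ to be pointwise in the history rather than only in expectation. The wasted-memory terms $W_j$ enter only as $\hat\theta$-independent multiplicative constants, so they do not affect the sign of the derivative. Once these points are in place, the result holds for every history $\mathcal{H}_n$, and therefore also in expectation over the arrival process.
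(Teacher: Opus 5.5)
Your proposal is correct and follows essentially the same route as the paper: differentiate the softmax mass on the prefix of experts with $\tau_j<x_i$, and use the fact that cumulative cold starts are non-increasing in $\tau$. Your expression $P_k(\bar C-\bar C_{\le k})$ is exactly the paper's numerator divided by the squared normalizer; the only difference is how the sign is shown (the paper cancels the antisymmetric block $j,k\le u$ and bounds the cross terms, you use a convex-combination argument), which is cosmetic.
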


The proof of Lemma \ref{lem:monotone} is provided in the Appendix. To prove that the expected rate of a cold start of the exponentially weighted average keep-alive policy is monotonically non-increasing in $\hat{\theta}$ we first construct the expression for the probability of a cold start. Then, we examine the sign of the partial derivative of the probability of the cold start with respect to $\hat{\theta}$. Our proof of Lemma~\ref{lem:monotone} does not rely on the full details of our settings, but rather only that (a) the loss used is affine in $\hat{\theta}$ and (b) the coefficient on $\hat{\theta}$ is monotone in $\tau$.  In our setting, this corresponds to saying that the number of cold starts is decreasing in $\tau$.  But this analysis would apply more broadly to other related challenges in cloud resource allocation (e.g.~the number of timed-out requests is decreasing in the number of servers allocated to processing them) and challenges in other market domains (when implementing an auto-bidder for an auction, increasing the bid results in more auction wins).  Thus, our analysis is more broadly relevant to a variety of domains, particularly when implemented by a platform owner who has complete information about counterfactual outcomes.

\section{Payments}
\label{sec:payments}

Our prior analysis has focused on properties that solely depend on the policy.  We now examine two ways of charging customers and their effects on incentives and the ability of the cloud provider to recover its costs.  As we will see, the two end up being quite similar in practice although they provide contrasting guarantees about customer incentives and provider cost recovery.

In order to discuss customer incentives, we must first define their objective.  Customers strive to minimize their total cost, which combines both their cold starts and the amount they are charged by the provider.  We can express this for the \textit{i}-th inter-arrival for a customer with type $\theta$ as
\begin{equation*}
\text{cost}_{i}(p_{i}, \tau_i, x_{i}, \theta) = p_{i} + \theta \cdot cs(\tau_{i}, x_{i})   
= \begin{cases}
p_{i} , \quad \text{if} \quad x_{i} \; \leq \tau_{i}\\
p_{i} + \theta, \quad \text{if} \quad x_{i} > \tau_{i}
\end{cases}  
\end{equation*}  

We can now formally define our desiderata, which consist of two incentive properties for the customer and a revenue concern for the provider.

\begin{enumerate}
    \item Incentive Compatibility (IC). A custom keep-alive policy is said to be \textit{incentive compatible} if reporting the true cost of a cold start is a dominant strategy for the customer. That is, for all reports $\hat{\theta}$,
    $$\sum_i \text{cost}_{i}(p_{i}(\theta), \tau_{\theta}, x_{i}, \theta) \; \leq \sum_i \text{cost}_{i}(p_{i}(\hat{\theta}), \tau_{\hat{\theta}},x_{i}, \theta).$$  We say the payment rule is $\varepsilon$-IC if $$\sum_i \text{cost}_{i}(p_{i}(\theta), \tau_{\theta}, x_{i}, \theta) \; \leq \sum_i \text{cost}_{i}(p_{i}(\hat{\theta}), \tau_{\hat{\theta}},x_{i}, \theta)+n\varepsilon.\footnote{We allow $\varepsilon$ to scale with $n$ as we believe it is adequate that the incentive to deviate per arrival is small.}$$
    \item Individual Rationality (IR). A policy is said to be \textit{individually rational} if the customer does not pay more than her cost of a cold start to prevent one. Formally, we want for all $\hat{\theta}$ that $\sum_i p_{i}(\hat{\theta}) \; \leq \sum_i \hat{\theta} \cdot cs(\tau_{i}, x_{i})$. We say the payment rule is $\varepsilon$-IR if $\sum_i p_{i}(\hat{\theta}) \; \leq \sum_i\hat{\theta} \cdot cs(\tau_{i}, x_{i})+n\varepsilon$
    \item Cost Recovery (CR). A payment rule is said to be \textit{cost recovering} if the customer's payment covers the wasted memory cost incurred by the cloud provider. Formally, $p_{i} \; \geq wm(\tau_{i}, x_{i})$. 
\end{enumerate}

In general, individual rationality is important to ensure the willingness of customers to participate in a marketplace.  Here, it perhaps has somewhat less bite because cold starts are only one of the considerations faced by serverless customers.  Still, both our payment schemes satisfy it, at least approximately.

More concerningly, if we wish customers to truthfully reveal their cold start cost our payment scheme should lead to an incentive compatible mechanism.  Our allocation rule is an approximation of VCG's, but it is well known that approximate versions of VCG need not be incentive compatible in general.  Our two payment schemes represent different approaches to this challenge.  The first builds on our observation from Section~\ref{properties-custom} that our allocation rule is monotone in $\hat{\theta}$.  Thus, we can follow Myerson's approach to computing appropriate payments to guarantee ex-post incentive compatibility.  The second simply charges VCG-style payments and explores whether in this setting the result is approximately incentive compatible.

One reason to consider the second scheme, even if it has weaker incentive guarantees, is that it provides a stronger guarantee to the cloud provider that the revenue will cover the cost of caching.  While certainly beneficial, this may not be an essential feature for the marketplace.  Current systems like AWS Lambda and Azure Functions provide some amount of caching for free as part of their overall service~\citep{shahrad2020serverless}, making up the cost in some of the other fees they charge the customer.  Still, being able to correctly attribute this cost to customers to the extent they incur it is likely to result in a fairer set of charges to customers and a more efficient marketplace overall.

In the remainder of this section, we introduce our payment schemes formally, analyze what theoretical guarantees they provide for our desiderata, and supplement this with empirical analysis where their guarantees are only approximate.

\subsection{Myerson Payments}

Lemma~\ref{lem:monotone} shows that our allocation rule is monotone in $\hat{\theta}$, so we can directly apply Myerson's machinery to derive a payment function which guarantees incentive compatibility.  The following proposition summarizes this derivation, which applies to {\em any} monotone policy, not merely the one we study.

\begin{proposition} \label{myerson_lemma}
Given the reported cost of a cold start $\hat{\theta}$ by a customer, a monotone custom keep-alive cache policy is incentive compatible and individually rational if the payment after the \textit{i}-th arrival is
\begin{equation*}
p_{i}(\hat{\theta}) = \int_{0}^{\hat{\theta}} Pr( \tau_{\hat{\theta} = y,\mathcal{H}_{i}} < x_{i}) \; dy - \hat{\theta} \cdot \Big( Pr( \tau_{\hat{\theta}, \mathcal{H}_{i}} < x_{i} ) \Big).   
\end{equation*}
\end{proposition}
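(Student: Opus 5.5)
The plan is to fix a history $\mathcal{H}_i$ and the realized inter-arrival $x_i$, and reduce each arrival to a one-dimensional single-parameter problem. I will then verify the Myerson envelope conditions directly rather than re-deriving them. The probabilities in Definition~\ref{exp_prob_wts} are computed from the losses $L_i(\tau_k,\mathcal{H}_i,\hat\theta)$ of the \emph{fixed} policies. These depend on the arrival times, which the report does not affect, and on the report only through the parameter $\hat\theta$. So for each $i$ I define
\[
a_i(y) \;=\; Pr\big(\tau_{\hat\theta=y,\mathcal{H}_i} < x_i\big),
\]
a function of the single variable $y$. By Lemma~\ref{lem:monotone} (or by the assumed monotonicity of the policy in general), $a_i$ is non-increasing in $y$, and trivially $0\le a_i\le 1$. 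The customer's expected cost at arrival $i$ under report $\hat\theta$ with true type $\theta$ is $p_i(\hat\theta)+\theta\,a_i(\hat\theta)$. The sums in the IC and IR definitions are over arrivals and the argument is per-arrival, so it suffices to prove the inequalities for each $i$ separately and then sum.

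\textbf{Incentive compatibility.} Substituting the proposed payment gives
\[
C_i(\hat\theta;\theta) \;=\; \int_0^{\hat\theta} a_i(y)\,dy + (\theta-\hat\theta)\,a_i(\hat\theta), \qquad C_i(\theta;\theta)=\int_0^{\theta} a_i(y)\,dy .
\]
Taking the difference,
\[
C_i(\hat\theta;\theta)-C_i(\theta;\theta) \;=\; \int_\theta^{\hat\theta}\big(a_i(y)-a_i(\hat\theta)\big)\,dy .
\]
I then split into two cases. If $\hat\theta>\theta$, every $y$ in the range satisfies $y\le\hat\theta$, so $a_i(y)\ge a_i(\hat\theta)$ and the integral is nonnegative. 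If $\hat\theta<\theta$, I rewrite the difference as $\int_{\hat\theta}^{\theta}\big(a_i(\hat\theta)-a_i(y)\big)\,dy$; here $y\ge\hat\theta$, so $a_i(y)\le a_i(\hat\theta)$ and the integrand is again nonnegative. Hence truth-telling minimizes expected cost for every history and every $x_i$, which is ex-post IC (in expectation over the policy's internal randomization). Summing over $i$ gives the stated inequality. Integrability is not an issue because $a_i$ is monotone and bounded.

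\textbf{Individual rationality, and the main obstacle.} A side remark: monotonicity gives $\int_0^{\hat\theta}a_i\ge\hat\theta a_i(\hat\theta)$, so $p_i\ge 0$ and the customer is never paid. The step I expect to be delicate is IR, because the formal definition $\sum_i p_i\le\sum_i\hat\theta\,cs$ does not follow from monotonicity alone: $\int_0^{\hat\theta}a_i$ can exceed $2\hat\theta a_i(\hat\theta)$. I would therefore prove IR in the Myerson sense, which matches the verbal description that the customer pays no more than the cold starts are worth. The outside option (no caching, or reporting $0$) yields a cold start with cost $\theta$ and zero payment, since $p_i(0)=0$. The truthful cost satisfies $C_i(\theta;\theta)=\int_0^\theta a_i\le \theta$ because $a_i\le 1$. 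So participating truthfully is never worse than always suffering cold starts; equivalently, the net payment is bounded by the value of the prevented cold starts, $\theta(1-a_i(\theta))$. I would state explicitly that this is the IR notion being established, and sum over arrivals to conclude.
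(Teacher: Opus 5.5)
Your proposal is correct and follows the paper's route. The paper gives no proof beyond invoking Myerson's characterization for the monotone allocation of Lemma~\ref{lem:monotone}, and your identity $C_i(\hat\theta;\theta)-C_i(\theta;\theta)=\int_\theta^{\hat\theta}\bigl(a_i(y)-a_i(\hat\theta)\bigr)\,dy\ge 0$ is that argument written out per arrival.

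Your IR caveat is also right. The literal inequality $\sum_i p_i\le\sum_i\hat\theta\,cs$ fails in general: for $\mathcal{T}=\{0,\infty\}$ and large $\hat\theta$, the payment stays bounded away from zero while $\hat\theta\,a_i(\hat\theta)\to 0$. The bound $p_i(\hat\theta)\le\hat\theta\,(1-a_i(\hat\theta))$ that you prove is what Myerson's argument actually delivers. One small fix: take the outside option to be no caching only, since reporting $0$ gives expected cost $\theta\,a_i(0)$ rather than $\theta$.
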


Instantiating this with the exponential weights rule for computing the probability of each window length yields a concrete formula.

\begin{corollary} \label{exponential_myerson}
For the exponential weighted average policy, the following payment rule is IC and IR: 
\begin{align*}
\begin{split}
p_{i} (\hat{\theta}) & = \int_{0}^{\hat{\theta}} \sum_{\tau_{j} \in \mathcal{T}| \tau_{j} < x_{i}}\frac{ e^{-  L_{i}(\tau_{j}, \mathcal{H}_{i}, y)}}{\sum_{\tau_{k} \in \mathcal{T}} e^{-  L_{i}(\tau_{k}, \mathcal{H}_{i}, y)}} \; dy 
\\ & - \hat{\theta} \cdot  \frac{\sum_{\tau_{j} \in \mathcal{T} | \tau_{j} < x_{i}} e^{-  L_{i}(\tau_{j}, \mathcal{H}_{i}, \hat{\theta})} }{\sum_{\tau_{k} \in \mathcal{T}} e^{-  L_{i}(\tau_{k}, \mathcal{H}_{i}, \hat{\theta})}}.   
\end{split}
\end{align*}
\end{corollary}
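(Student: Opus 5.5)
The corollary follows by instantiating Proposition~\ref{myerson_lemma} with the exponentially weighted average policy of Definition~\ref{exp_prob_wts}. Lemma~\ref{lem:monotone} already establishes monotonicity, so the proposition applies and yields IC and IR. What remains is to compute $Pr(\tau_{\hat{\theta},\mathcal{H}_i} < x_i)$ explicitly and substitute it into both terms of the payment formula.

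\textbf{Step 1: the probability of a cold start.} Once the inter-arrival $x_i$ is realized, the event $\tau_{\hat\theta,\mathcal{H}_i} < x_i$ is the event that the randomly chosen fixed window lies in $\{\tau_j \in \mathcal{T} : \tau_j < x_i\}$. Summing the exponential-weights probabilities over this subset gives
\begin{equation*}
Pr(\tau_{\hat\theta,\mathcal{H}_i} < x_i) = \frac{\sum_{\tau_j \in \mathcal{T} \mid \tau_j < x_i} e^{-L_i(\tau_j,\mathcal{H}_i,\hat\theta)}}{\sum_{\tau_k \in \mathcal{T}} e^{-L_i(\tau_k,\mathcal{H}_i,\hat\theta)}}.
\end{equation*}
The weights are computed with the reported cost $\hat\theta$ in place of the unknown $\theta$, since the provider only has access to the report. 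This is how Definition~\ref{exp_prob_wts} should be read: the policy is parameterized by $\hat\theta$.

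\textbf{Step 2: the integral term.} Replacing $\hat\theta$ by the dummy variable $y$ gives the integrand $Pr(\tau_{\hat\theta = y,\mathcal{H}_i} < x_i)$. This is the same ratio with $L_i(\cdot,\mathcal{H}_i,y)$. Each loss is affine in $y$, so the integrand is continuous and bounded in $[0,1]$, and the integral over $[0,\hat\theta]$ is well defined.

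\textbf{Step 3: assemble.} Substituting both expressions into Proposition~\ref{myerson_lemma} yields exactly the displayed formula. IC and IR are then inherited from the proposition.

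The only point requiring care is the index and timing convention. The statement uses $L_i$ and $\mathcal{H}_i$ for the $i$-th payment, whereas the window facing $x_i$ is determined by the history before that arrival. I would state the corollary with the same indexing as Proposition~\ref{myerson_lemma}, so that the substitution is literal. I would also note that ties $\tau_j = x_i$ count as warm starts, which matches the strict inequality in the sum.
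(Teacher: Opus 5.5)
Your proposal is correct and takes the same route as the paper. The paper obtains the corollary by substituting the exponential-weights cold-start probability (the sum over $\tau_j < x_i$ of softmax weights, as written in the appendix proof of Lemma~\ref{lem:monotone}) into Proposition~\ref{myerson_lemma}, with monotonicity supplied by that lemma. Your side remarks also correctly flag notational sloppiness in the paper that does not affect the argument: evaluating the weights at $\hat\theta$ rather than $\theta$, and the $\mathcal{H}_i$ versus $\mathcal{H}_{i-1}$ timing. Monotonicity holds for any fixed history, because cumulative cold starts are non-increasing in $\tau$.
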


The function being integrated is continuous in $y$ and this integral can easily be computed numerically, but we are not aware of a closed form for it in general.  However, the special case where the mixture of fixed policies consists of only the two extreme fixed policies, that is, $\mathcal{T}  = \{\tau = 0, \tau = \infty\}$ has a closed form.  This is of interest because if the arrival process is Poisson, then one of the two is optimal (although the process parameter is needed to determine which)~\citep{narayana2023keep}.

\begin{corollary} \label{payment_poison}
The exponentially weighted average custom policy when the mixture of fixed policies is comprised of $\mathcal{T} = \{\tau = 0, \tau = \infty\}$ has a closed form payment rule of
\begin{equation*}
 p_{i}(\hat{\theta}) = \hat{\theta} \cdot  \frac{e^{ i \cdot \hat{\theta} - c_{p} \cdot \sum_{j = 1}^{i} x_{j}}}{1 + e^{i \cdot \hat{\theta} - c_{p} \cdot \sum_{j = 1}^{i} x_{j} }} +  \frac{1}{i} \cdot \log \Big( \frac{1 + e^{ -c_{p} \cdot \sum_{j = 1}^{i} x_{j}}}{1 + e^{ i \cdot \hat{\theta} - c_{p} \cdot \sum_{j = 1}^{i} x_{j}}} \Big)    
\end{equation*}
\end{corollary}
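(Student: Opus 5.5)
The plan is to specialize Corollary~\ref{exponential_myerson} to $\mathcal{T}=\{0,\infty\}$, which reduces the payment to a one-dimensional logistic integral with an elementary antiderivative. Incentive compatibility and individual rationality then follow from Corollary~\ref{exponential_myerson} (equivalently Proposition~\ref{myerson_lemma} together with Lemma~\ref{lem:monotone}). So the only real work is to evaluate the formula in closed form.

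First I compute the two cumulative losses, writing $S_i=\sum_{j=1}^{i}x_j$ and assuming inter-arrivals are strictly positive.
\begin{itemize}
\item For $\tau=0$, every arrival satisfies $x_j>0=\tau$. So every arrival is a cold start and $wm(0,x_j)=c_p\cdot 0=0$, giving $L_i(0,\mathcal{H}_i,y)=i\,y$.
\item For $\tau=\infty$, every arrival is a warm start with wasted memory $c_p x_j$, giving $L_i(\infty,\mathcal{H}_i,y)=c_pS_i$, independent of $y$.
\end{itemize}
Since $x_i>0$, only the window $\tau=0$ satisfies $\tau<x_i$. Hence the cold-start probability is
\begin{equation*}
\frac{e^{-iy}}{e^{-iy}+e^{-c_pS_i}}=\frac{1}{1+e^{iy-c_pS_i}}.
\end{equation*}

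Next I integrate this from $0$ to $\hat\theta$. Writing $\frac{1}{1+e^{iy-a}}=1-\frac{e^{iy-a}}{1+e^{iy-a}}$ with $a=c_pS_i$, the antiderivative is
\begin{equation*}
y-\tfrac{1}{i}\log\bigl(1+e^{iy-a}\bigr).
\end{equation*}
The integral therefore equals
\begin{equation*}
\hat\theta-\tfrac{1}{i}\log\frac{1+e^{i\hat\theta-a}}{1+e^{-a}}.
\end{equation*}

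Finally I subtract the term $\hat\theta/(1+e^{i\hat\theta-a})$ from Corollary~\ref{exponential_myerson}. This uses the identity
\begin{equation*}
\hat\theta-\frac{\hat\theta}{1+e^{i\hat\theta-a}}=\hat\theta\,\frac{e^{i\hat\theta-a}}{1+e^{i\hat\theta-a}}.
\end{equation*}
Flipping the logarithm's argument to absorb the minus sign then gives exactly the stated formula.

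I expect no serious obstacle; the calculation is routine. The only points needing care are the boundary conventions. We need $x_j>0$ so that $\tau=0$ always yields a cold start with zero memory cost. The $\tau=\infty$ expert must be treated as incurring $c_p x_j$, consistent with the paper's footnote on using $\tau=\infty$. The cold-start event uses the strict inequality $\tau<x_i$, matching the indicator $cs$.
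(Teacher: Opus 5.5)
Your proposal is correct and follows the paper's approach: the corollary is Corollary~\ref{exponential_myerson} instantiated with the two experts $\tau=0$ (loss $iy$) and $\tau=\infty$ (loss $c_p\sum_{j=1}^{i}x_j$), so the cold-start probability becomes the logistic function $1/(1+e^{iy-c_p\sum_{j=1}^{i}x_j})$, and you integrate it correctly. The paper writes out no separate derivation, and your antiderivative together with the boundary remarks ($x_j>0$, $wm(\infty,x_j)=c_p x_j$, strict inequality $\tau<x_i$) supplies the computation it leaves implicit.
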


While this formula remains complicated, it can be observed that the behavior of the payment rule is dependent on the sign of the exponents in the payment formula, that is, $i \cdot \hat{\theta} - c_{p} \cdot \sum_{j = 1}^{i} x_{j}$. 
This form is suggestive because the  optimal policy in the case of the Poisson process is determined by the sign of $\frac{c_{p}}{\hat{\theta}} - \lambda$ \citep{narayana2023keep}.  In particular, $\mathbb{E}[x_{j}] = \frac{1}{\lambda}$, meaning that, at least in expectation, the two are consistent.

While we know of no guarantee that this payment rule is even approximately cost recovering, this observation provides some intuition for why we should expect reasonable performance in this regard.  In many cases the allocation rule converges toward the efficient one in the sense that the optimal window length has probability essentially 1.  In such cases, by the standard analysis of VCG, we know the average payment must converge toward the average wasted memory cost experienced by the cloud provider.  Of course, such convergence is not guaranteed, particularly with more complex arrival processes.  Nevertheless, our empirical results show it remains close in practice for key processes of interest.

\subsection{Externality Payments}

Given that the policy is an approximation of VCG, a natural alternative payment rule is to simply change the externality, which in this context is exactly the wasted memory.  This has the advantage of being simple to explain to customers (in the same way that the generalized second price auction (GSP) is easier to explain than VCG in advertising auctions), is consistent with the current market design where customers who ask to always be cached pay the full cost of that caching, and guarantees cost recovery.

However, its incentive properties are more delicate.  First, it is only approximately individually rational.  Due to exploration, even a report of $\hat{\theta}=0$ will end up with some caching during the learning process.  However, due to the asymptotic efficiency guarantee the customer will do approximately at least as well as the best fixed window length.  As long as $0 \in \mathcal{T}$, which seems a reasonable decision in practice since some customers may have no need for low latency, this guarantees $\varepsilon$-IR in expectation over the randomness of the policy for sufficiently large $n$.

For incentive compatibility, exponential weights guarantee that the expected cumulative regret is non-negative~\citep{gofer2016lower}.  If every possible report leads to a worse outcome than the fixed policies, this would be enough to guarantee approximate IC, with the approximation determined by the regret bound. 
Unfortunately, even if the regret is non-negative for every fixed policy for a given report, it is still possible that if the regrets were instead computed relative to the type rather than the report the resulting regret would have been negative.  Our experimental results show this does indeed happen in practice.  Luckily, it appears that the regret in this case is never {\em too} negative, and whenever this holds we can guarantee approximate IC.

To make this precise, let 
\begin{equation*}
{
\kappa(\hat{\theta}) = \left(\sum_i wm(\tau_{\hat{\theta},\mathcal{H}_{i}},x_i), \sum_i cs(\tau_{\hat{\theta}, \mathcal{H}_{i}},x_i)\right)
}
\end{equation*}
denote the function mapping $\hat{\theta}$ to the resulting expected combination of wasted memory and cold starts.
In a slight abuse of notation, we similarly let $\kappa(\tau) = (\sum_i wm(\tau, x_i), \sum_i cs(\tau, x_i))$ denote the combination of wasted memory and cold starts that result from a fixed policy $\tau \in \mathcal{T}$.

\begin{lemma}
\label{lem:IC}
Let $R_n$ be an upper bound on the policy's expected regret after $n$ arrivals and $$D(\theta)=\max(0,\max_{\hat{\theta}} \min_{\tau \in \mathcal{T}} ((\kappa(\tau) - \kappa(\hat{\theta})) \cdot (1,\theta))).$$  Externality Payments are $1/n(R_n+\max_\theta \cdot D(\theta))$-IC.
\end{lemma}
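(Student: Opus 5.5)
Under externality payments the customer pays $p_i = wm(\tau_i,x_i)$. The total cost of a type-$\theta$ customer who reports $\hat\theta$ is therefore $\sum_i wm(\tau_{\hat\theta,\mathcal{H}_i},x_i) + \theta\sum_i cs(\tau_{\hat\theta,\mathcal{H}_i},x_i) = \kappa(\hat\theta)\cdot(1,\theta)$, in expectation over the policy's randomness. This is exactly the social cost $L_n(\hat\theta,\mathcal{H}_n,\theta)$ evaluated at the true type $\theta$. Incentive compatibility thus reduces to comparing $\kappa(\theta)\cdot(1,\theta)$ with $\kappa(\hat\theta)\cdot(1,\theta)$ for an arbitrary report $\hat\theta$. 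I would bound this difference by routing it through a fixed policy $\tau\in\mathcal{T}$, using the triangle-type decomposition
\begin{equation*}
\kappa(\theta)\cdot(1,\theta) - \kappa(\hat\theta)\cdot(1,\theta) = \big[\kappa(\theta)-\kappa(\tau)\big]\cdot(1,\theta) + \big[\kappa(\tau)-\kappa(\hat\theta)\big]\cdot(1,\theta).
\end{equation*}

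\textbf{First term.} The truthful report runs exponential weights on exactly the losses $L_n(\cdot,\mathcal{H}_n,\theta)$ it is measured against. Hence $[\kappa(\theta)-\kappa(\tau)]\cdot(1,\theta) = E[R_n(\theta,\tau,\mathcal{H}_n,\theta)] \le R_n$ for every $\tau\in\mathcal{T}$, by the assumed regret bound.

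\textbf{Second term.} Here we have freedom to choose $\tau$ after seeing $\hat\theta$. We pick $\tau$ to minimize $(\kappa(\tau)-\kappa(\hat\theta))\cdot(1,\theta)$. Since the first-term bound holds uniformly in $\tau$, this choice is legitimate, and the minimum is at most $\max_{\hat\theta}\min_{\tau}(\kappa(\tau)-\kappa(\hat\theta))\cdot(1,\theta)\le D(\theta)$. The outer $\max(0,\cdot)$ in $D$ only weakens the bound. Summing the two pieces gives
\begin{equation*}
\kappa(\theta)\cdot(1,\theta) \le \kappa(\hat\theta)\cdot(1,\theta) + R_n + D(\theta) \le \kappa(\hat\theta)\cdot(1,\theta) + R_n + \max_\theta D(\theta).
\end{equation*}
Writing the slack as $n\varepsilon$ with $\varepsilon = \frac1n(R_n+\max_\theta D(\theta))$ yields the claimed $\varepsilon$-IC, matching the paper's convention of per-arrival slack.

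\textbf{Main obstacle.} The delicate step is being careful about what ``expected'' means. The sequence $\tau_{\hat\theta,\mathcal{H}_i}$ is random, while the history $\mathcal{H}_n$ is fixed (adversarial or realized), so I would interpret $\kappa(\hat\theta)$ as an expectation over the policy's coins for fixed arrivals. Under this interpretation both the regret bound and IC hold in expectation. I would also check that $R_n$ is uniform over $\tau$ and over the type $\theta$ used as the loss scale, which holds for the standard exponential-weights bound given bounded per-arrival losses (the finite-$\tau$ caveat in the footnote). Otherwise $R_n$ should be read as $R_n(\theta)$, which the statement implicitly maximizes over.
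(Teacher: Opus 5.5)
Your proposal is correct and follows the paper's argument. Under externality payments the customer's cost equals the social cost, truthful reporting is within $R_n$ of the ex-post best fixed $\tau$, and any misreport can beat that $\tau$ by at most $D(\theta)$. Your decomposition through the minimizing $\tau$ (which is just the ex-post optimal fixed policy, since $\kappa(\hat\theta)$ does not depend on $\tau$) is a more explicit write-up of the paper's three-sentence proof.
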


\begin{proof}
With externality payments, the customer's cost is exactly the social cost.  The expected loss is worse than the ex-post optimal fixed policy loss by at most $R_n$.  In turn, the most any point achievable by any misreport can improve on this for a given $\theta$ is, by definition, $D(\theta)$.
\end{proof}

The tightness bound in Lemma~\ref{lem:IC} is determined by how much $\kappa(\hat{\theta})$ dips ``below'' the Pareto frontier of the $\kappa(\tau)$.  If it is always above the frontier then $D(\theta)= 0$ and the bound reduces to simply $R_n/n$ which vanishes asymptotically.  If it is ever below it, $D(\theta)$ calculates how much improvement is possible for a given $\theta$.  Furthermore, $D(\theta)$ is bounded because of the ordering on $\tau \in \mathcal{T}$. Nothing can have less wasted memory than the smallest choice of $\tau$ while nothing can have fewer cold starts than the largest.

Figure~\ref{fig_externality} illustrates a hypothetical scenario giving rise to a positive, but small value of $D(\theta)$.  It illustrates a $\kappa(\theta)$ curve which enables slightly better tradeoffs than the convex hull of the $\kappa(\tau)$,  Thus, for some values of $\theta$ that have no regret relative to any fixed $\tau$ (all the $\kappa(\tau)$ points are above the line of tradeoffs among which the customer is indifferent to $\kappa(\theta)$), there are choices of $\hat{\theta}$ below the line that the customer would have preferred to report.  $D(\theta)$ measures the largest that this gap gets.  In contrast, if the $D(\theta)$ curve were entirely inside the convex hull of $D(\tau)$, every choice of $\hat{\theta}$ would be dominated by some $\tau \in \mathcal{T}$ and $D(\theta)$ would be uniformly zero.

While Lemma~\ref{lem:IC} provides good incentive guarantees when $\kappa$ is well behaved (and weaker ones in general), we do not have a theoretical guarantee of such good behavior. Nevertheless, our empirical results show it typically occurs in practice.

\begin{figure}[t!] 
    \centering
     \includegraphics[height=2.1 in]{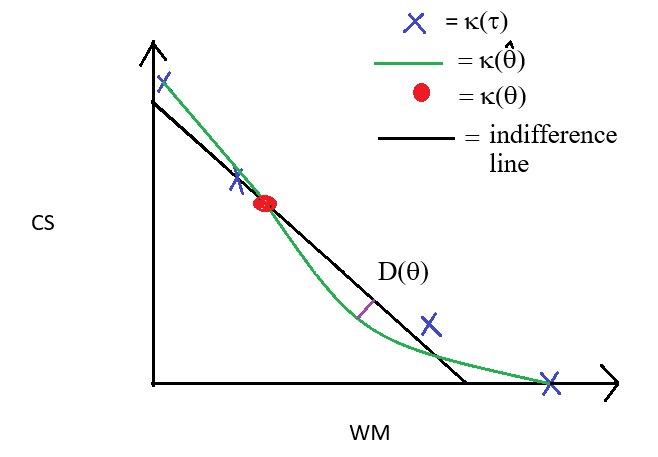}
    \caption{Illustration of IC for Externality Payments}
    \label{fig_externality}
\end{figure}

\subsection{Simulations} \label{simulations}

Thus far we have seen that Myerson payments guarantee IC while externality payments guarantee CR.  While neither guarantees the other's property, we have presented intuitive explanations for why we may still expect good results.  In the remainder of this section, we examine their behavior empirically and show that this does indeed seem to occur in practice.


Based on prior work, we simulated applications that follow two different arrival processes: Poisson and Hawkes.  We use a uniform random distribution to select the parameter of the Poisson process.  We use \textit{Ogata's modified thinning algorithm} \citep{ogata1981lewis} to generate the samples of the Hawkes process.  We generate 200 sample points of application arrivals in a single realization of the Hawkes process. The parameters of the Hawkes process were chosen from uniform random distributions. More specifically, the parameter $\lambda_0$ was sampled from uniform random distribution $\mathcal{U}(0.0,1.0)$, alpha and beta from uniform random distribution $\mathcal{U}(0.0,5.0)$. We use a set of fixed window experts $\mathcal{T} = \{0.0, 1.0, 2.0, 4.0, 8.0, 16.0, 32.0, 64.0 \}$, and customers have a set of possible values of $\theta$ and $\hat{\theta}$ for their true and reported cold start costs of $\{0.0, 0.125, 0.25, 0.50, 1.0, 2.0, 4.0, 8.0, 16.0, \\ 32.0, 64.0 \}$, which are powers of 2 covering a wide range. 
We normalize the cost for the provider to keep the application image in the cache per unit time to be 1 unit. That is, $c_{p} = 1$. Overall, 100 simulation runs were performed with Poisson process arrivals and Hawkes process arrivals.



Table~\ref{tab:regret} examines the incentives of the two payment rules, as measured by the customer's regret $\rho = \sum_i \text{cost}_{i}(p_{i}, \tau_{\theta}, x_{i}, \theta) - \max_{\hat{\theta}} \sum_i \text{cost}_{i}(p_{i}, \tau_{\hat{\theta}},x_{i}, \theta)$.\footnote{This notion of the customer's regret is a standard way to measure violations of IC and should not be confused with the {\em policy's regret} relative to fixed policies.}  A regret that is always 0 means the payment rule is IC while positive regrets quantify the incentive of the customer to misreport.  The table reports both how often the regret is non-zero and how large it is on average in comparison to the customer's total cost.  All values are averaged over 100 runs and all values of $\theta$.  For $\overline{\rho}$, we report the average only of the {\em non-negative} regrets to focus on the strength of the incentive to misreport when one exists.  For the total cost,  we report the standard deviation to illustrate the substantial variation across runs and types.

Since Myerson's rule is IC, the regrets are uniformly zero.  For the Externality rule, non-zero regrets do occur a meaningful fraction of the time, but when they do are small on average (less than two percent of the cost).  This shows that the circumstances discussed in the context of Lemma~\ref{lem:IC} that lead to good incentives tend to occur in practice.  

Table~\ref{tab:cost_rec} examines the same simulation to analyze cost recovery.  If  $\sum_i wm_i - \sum_i p_i = 0$, the provider exactly recovers the cost of the wasted memory.  This time, it is the Externality rule that, by construction, exactly covers its costs.   Consistent with the intuition from Corollary~\ref{payment_poison}, on average the extent to which Myerson's rule covers its costs is essentially indistinguishable from zero.  So while any given customer may end up paying more or less than their true wasted memory cost, at the scale of the entire market the provider essentially achieves cost recovery.

\begin{table*}[t]
\centering
\begin{tabular}{| p{20mm}| p{15mm}| p{15mm}|p{15mm}| p{15mm}|p{15mm}|p{15mm}|}
\hline
Payment Scheme &  \multicolumn{3}{c|}{Poisson Process} &  \multicolumn{3}{c|}{Hawkes Process}\\
\cline{1-7}
& \%  $\rho >0$ & $\overline{\rho}$ & $\sum_i cost_i$ &\%  $\rho >0$ & $\overline{\rho}$ & $\sum_i cost_i$\\
\hline
Myerson's Rule &   0.0  &  $0.0 \pm 0.0$ &  $347.83 \pm 265.71 $  & 0.0  &  $0.0 \pm 0.0$ &  $260.89 \pm 232.88 $\\
\hline
Externality Rule &    22.15 & $5.15 \pm 2.48$   &  $353.50 \pm 267.14$  &  19.90  & $2.93 \pm 1.00 $    &   $262.93 \pm 231.33$\\
\hline
\end{tabular}
\caption{Summary statistics of positive regret  with both payment schemes for 100 simulation runs \vspace{-5mm}}
\label{tab:regret}
\end{table*}

\begin{table*}[t]
\centering
\begin{tabular}{| p{20mm}| p{15mm}| p{15mm}| p{15mm}| p{15mm}| p{15mm}| p{15mm}|}
\hline
 &  \multicolumn{3}{c|}{Poisson Process} &  \multicolumn{3}{c|}{Hawkes Process}\\
\cline{1-7}
& $\sum_i wm_i - \sum_i p_i$ & $\sum_i p_i$ & $\sum_i wm_i$ &  $\sum_i wm_i - \sum_i p_i$ & $\sum_i p_i$ & $\sum_i wm_i$\\
\hline
Myerson's Rule &   $-3.93 \pm 20.58$ &  $236.27 \pm 152,84 $ & $240.20 \pm 154.11$  & $-0.005 \pm 18.01$    & $177.00 \pm 128.90$  &  $177.01 \pm 126.40$   \\
\hline
Externality Rule &   $0.0  \pm 0.0$ &   $240.20 \pm 154.11$ & $240.20 \pm 154.11$   &   $0.0  \pm 0.0$  & $177.01 \pm 126.40$  & $177.01 \pm 126.40$     \\
\hline
\end{tabular}
\caption{Summary statistics for Cost Recovery  with both payment schemes for 100 simulation runs \vspace{-5mm}}
\label{tab:cost_rec}
\end{table*}

\begin{figure*}[t!]
    \centering
    \begin{subfigure}[t]{0.48\textwidth}
        \includegraphics[height=1.73in]{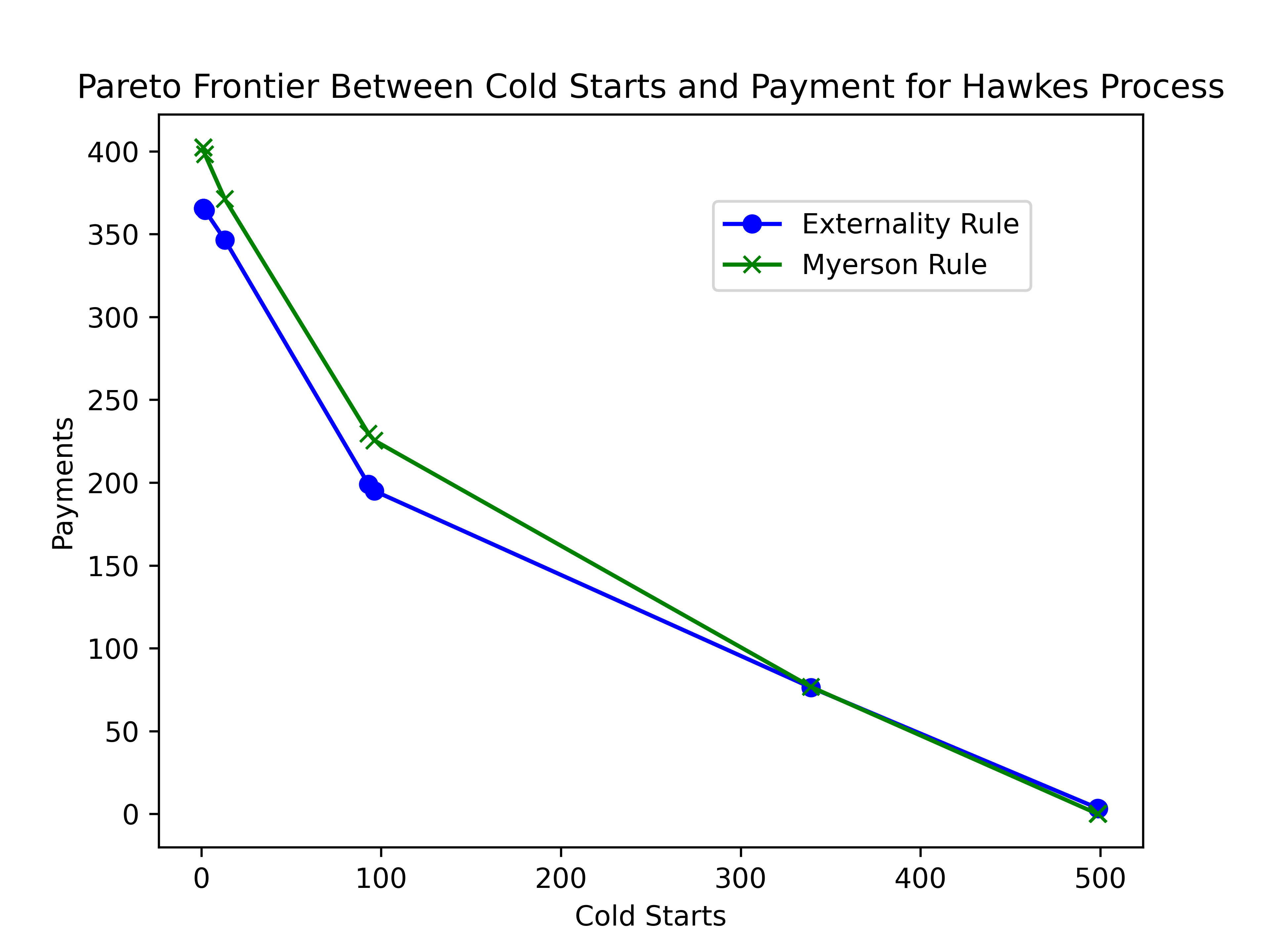}
        \caption{Hawkes process application}
    \end{subfigure}%
    \hfill 
    \begin{subfigure}[t]{0.48\textwidth}
        \includegraphics[height=1.73in]{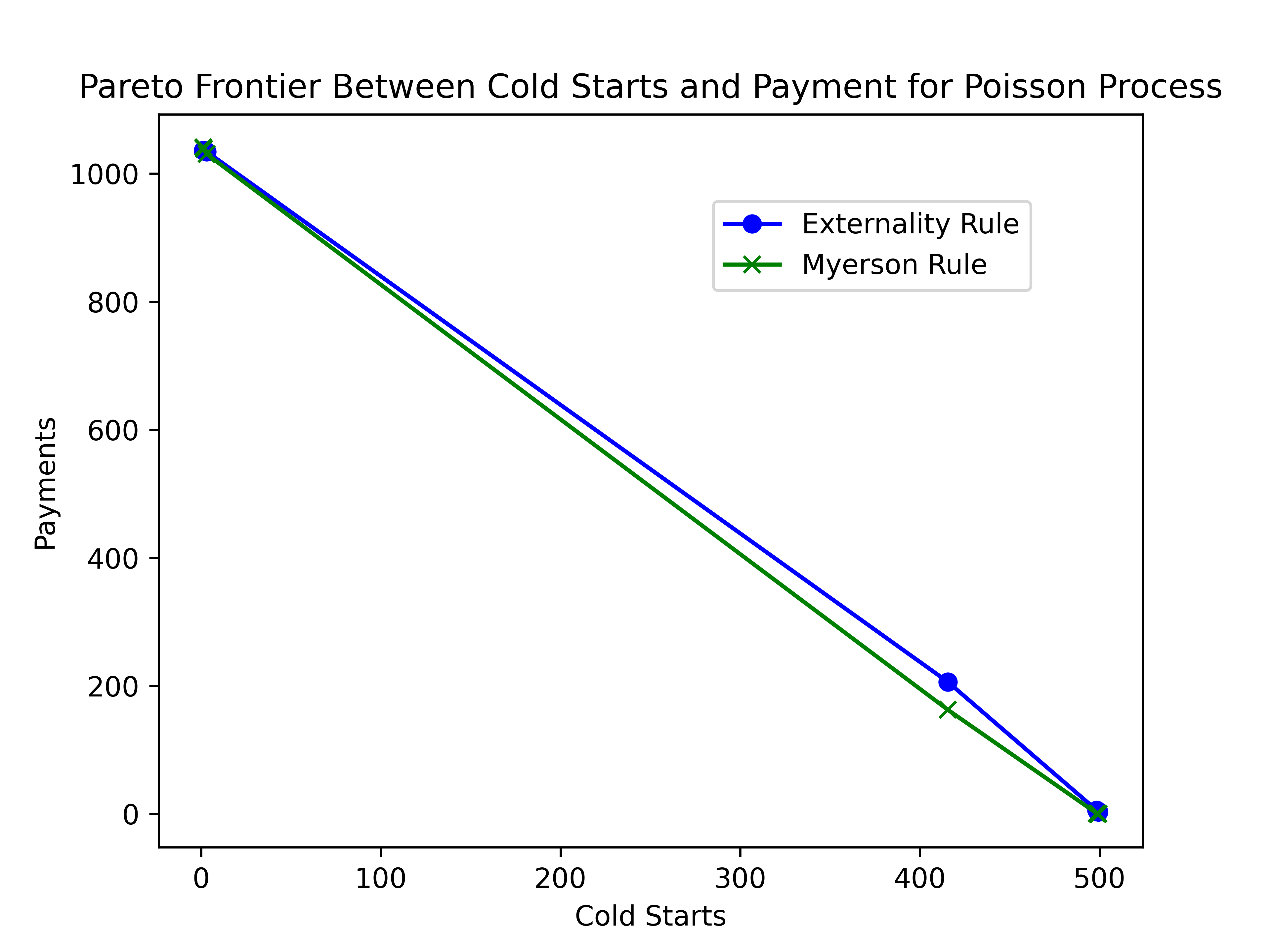}
        \caption{Poisson process applications}
    \end{subfigure}    
   \caption{Trade-off curve of Payments vs number of cold starts for Hawkes and Poisson Process applications}
    \label{poisson_fig}
\end{figure*}



Overall, despite contrasting theoretical guarantees the two payment rules achieve similar performance in practice.  As Figure~\ref{poisson_fig} shows in two example runs, this is because the payments charged by the two rules tend to be quite similar in practice despite their very different approaches.  In some cases, as on the left, Myerson charges slightly more, while in others, as on the right, it charges slightly less.  (See the supplemental material for additional examples which demonstrate this behavior.)  Thus, the choice between the two in practice may come down to which property is considered more important to strictly guarantee.  Additionally, the simpler nature of the Externality rule may be attractive.

\subsection{Experiments on Azure Data trace}
\label{sec:Azure}

Our simulation results so far have been derived on synthetic scenarios using natural choices of arrival process.  We conclude our analysis using a more realistic scenario based on a subset of the Microsoft Azure Data trace released by \citet{shahrad2020serverless}.\footnote{These traces are available at \url{https://github.com/Azure/AzurePublicDataset}}. In the Azure Data trace, an application consists of multiple functions with each function responsible for completing a specific task. The Azure Data traces collect invocation counts of functions that belong to an application binned in 1-minute intervals. We examine the regret and the cost recovery of both the payment rules for applications in a subset of the Azure Data trace identified as good fits to Hawkes processes using the methodology of \cite{narayana2023keep}.  This restricts to a set of applications where our underlying fixed policies are known to be a good choice.

Within this set of applications, we limit to applications with at most 180 arrivals.
This avoids excessive computation from applications with a very large number of arrivals as well as avoids underflow issues since the computation of weights, and Myerson payments involve exponents of large negative numbers, which can be problematic even with the log-sum-exp trick. After these two filtering steps, 3173 applications remained. We use as our fixed keep-alive window lengths $\mathcal{T}=\{ 5 \; \text{minutes},  10 \;\text{minutes}, 20 \;\text{minutes}, 30 \;\text{minutes},\\ 45 \;\text{minutes}, 60 \;\text{minutes}, 90 \;\text{minutes}, 120 \;\text{minutes}.\}$, which match those considered in prior work on this trace\cite{shahrad2020serverless,narayana2023keep}. We assume that the customers have a set of possible values of $\theta$ and $\hat{\theta}$ for their true and reported cold start costs of $\{0.0, 5.0, 10.0, 20.0, 30.0, 45.0, 60.0\}$, which again match the costs of cold starts used in prior work~\cite{narayana2023keep}. 

Table \ref{azure_table} examines the regret and the cost recovery of the two payment schemes on a subset of the Azure Data trace. First, we observe that the results from Table \ref{azure_table} are broadly consistent with those obtained from the performance of the payment scheme on simulated data.
That is, Myerson's rule has a cost recovery close to zero and much smaller than the variation. While the externality rule yields a somewhat higher IC violation, it is still on the order of ten percent of the overall cost, which remains modest.

\begin{table*}[t]
\centering
\begin{tabular}{| p{20mm}| p{18mm}| p{15mm}| p{15mm}| p{15mm}| p{15mm}| p{15mm}|}
\hline
 &  \multicolumn{6}{c|}{Azure Data}\\
\cline{1-7}
& $\sum_i wm_i  - \sum_i p_i$ & $\sum_i p_i$ & $\sum_i wm_i$ &\%  $\rho >0$ & $\overline{\rho}$ & $\sum_i cost_i$\\
\hline
Myerson's Rule &  $-4.15 \pm 76.75 $   & $165.45 \pm 225.75$  &   $169.59 \pm 223.40$ &   $0.0$  &  $0.0 \pm 0.0$ & $377.98 \pm 303.47$    \\
\hline
Externality Rule & $0.0 \pm 0.0$ &  $169.59 \pm 223.40$ & $169.59 \pm 223.40$  & $16.09$   & $47.55 \pm 6.48$   & $452.99 \pm 271.87$  \\
\hline
\end{tabular}
\caption{Summary statistics for Positive Regret and Cost Recovery  with both payment schemes for Azure Data trace \vspace{-5mm}}
\label{azure_table}
\end{table*}

\begin{figure*}[t!]
    \centering
    \begin{subfigure}[t]{0.48\textwidth}
        \centering
        \includegraphics[height=1.73in]{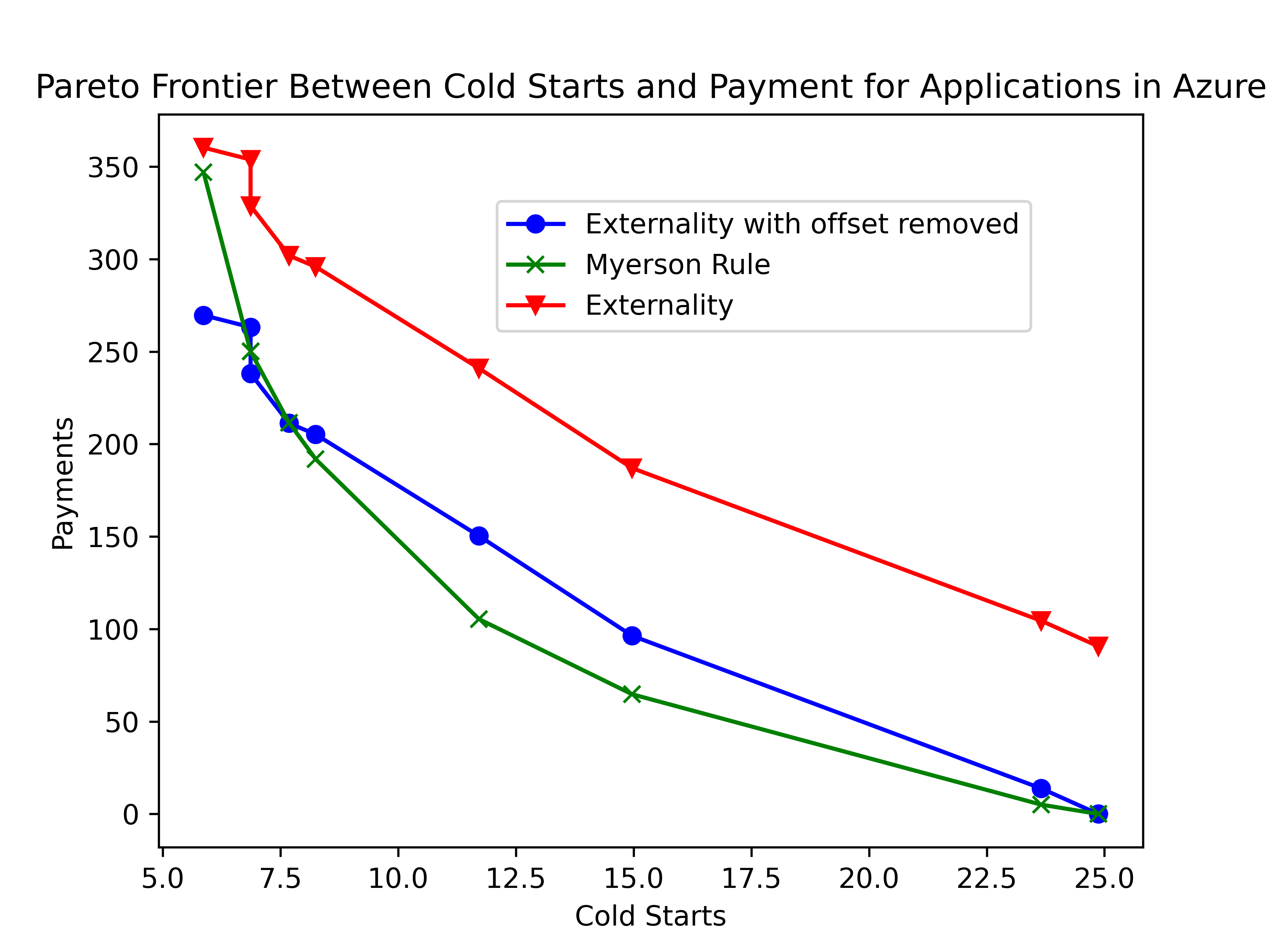}
        \caption{Azure application 1}
    \end{subfigure}%
    \hfill 
    \begin{subfigure}[t]{0.48\textwidth}
        \centering
        \includegraphics[height=1.73in]{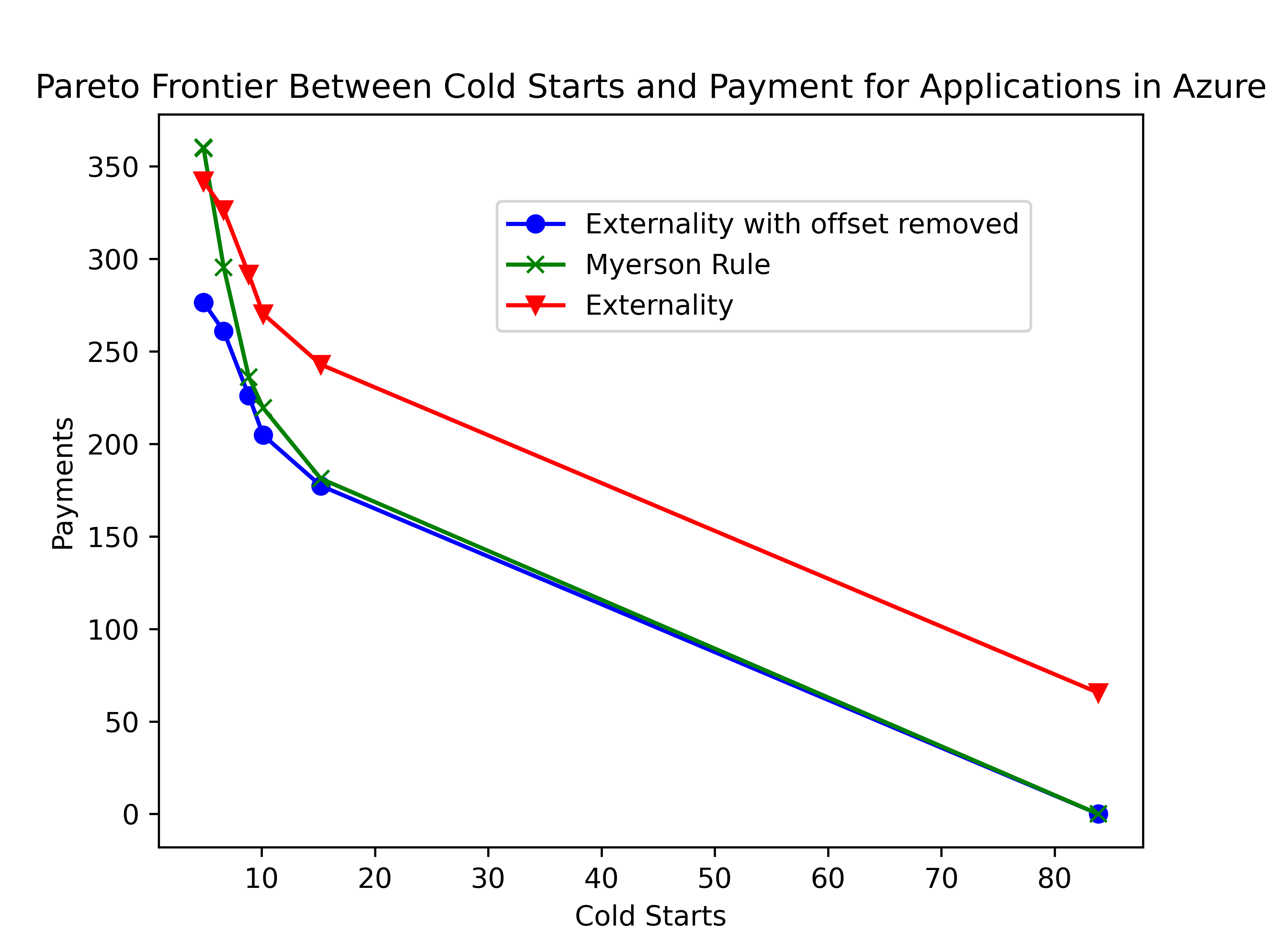}
        \caption{Azure application 2}
    \end{subfigure}    
   \caption{Trade-off curve of payments vs cold starts for Azure applications}
    \label{fig_azure}
\end{figure*}

Figure \ref{fig_azure} shows the trade-off curves for the payments charged and the number of cold starts for randomly sampled applications from the Data set. In some applications, we observe an initial high externality payment in the first arrival (where we explore uniformly at random), which sets a high cumulative externality price for the remaining arrivals. To get a better comparison between the two payment schemes, we subtract this initial externality payment (termed as "offset" in Figure \ref{fig_azure}) from all the externality payments.  This yields a result much closer to what we saw in our synthetic simulations, which lacked such large inter-arrivals.  This result suggests that in practice it may be better to adopt some default policy for the first few arrivals (e.g. no or limited caching) and only using the learned policy and charging customers subsequently.  This may also address the somewhat larger regrets in processes with a small number of arrivals driven by regret that a more extreme value of $\hat{\theta}$ was not reported, which would have led to faster learning during this initial period.

\section{Discussion}

This work contributes to the theory of cloud market design by identifying it as one special case where monotonicity of allocation comes ``for free'' with a standard online learning technique, in contrast to prior general work which required specialized algorithms.  We also establish that, in addition to the standard way of using monotonicity to achieve incentive compatibility, simply charging externality payments achieves good incentive guarantees as long as the online learning algorithm was ``well behaved'' ex post, a property that does not generally hold for approximations of VCG.

Our work on custom cache policy allocation with Myerson payment rule and VCG payment rule seems to side-step Myerson-Satterthwaite impossibility (\citeyear{myerson1983efficient}) of achieving incentive compatibility with efficiency without violating individual rationality or budget balance. 
As studied, our mechanism achieves only asymptotic efficiency, and our claims of cost recovery are only empirical.
However, if we consider a variant of our setting where the arrivals are all known {a priori} and we can calculate the exact optimal threshold instead of learning it, then our mechanism would be exactly VCG and so satisfy efficiency and budget balance exactly.
A reason this is possible is because the cloud provider is acting as both the auctioneer and the owner of the resource. This means the budget balance constraint is relaxed to cost recovery.  
This changes the structure of the mechanism that allows it to evade the Myerson-Satterthwaite impossibility result. \citet{cramton1987dissolving} give another example where the impossibility can be evaded due to a different mechanism structure.

On the practical side, we supplement this theory by simulations of customers with Poisson and Hawkes arrival processes and a trace of Microsoft Azure data.  The main takeaway here is that despite contrasting approaches and guarantees the two ways of calculating payments end up being quite similar in practice.  This perhaps supports the use of the simpler externality payments which make it substantially easier to communicate to customers how their charges were determined and, as our results show, surprisingly good incentive properties in the form of having low customer regret. Indeed, payments based on metered usage of some resource are widespread in cloud markets, making this market design particularly natural.  On the other hand, if exact incentive compatibility is considered essential, our results show it does surprisingly well at covering the cost of providing caching despite the lack of guarantees, and we provide some theoretical intuition for why this should be the case.

While we have focused on a practical market design for a particular application, our approach is more general.  It relies on the domain being single parameter and also having a single parameter to control allocations that behaves in a monotone manner.  We have argued that this structure exists both in aspects of the cloud marketplace and in other marketplaces such as auction platforms. Thus, our technical approach is relevant beyond the particular marketplace we study.

Our results suggest a number of interesting open questions.  We studied a particular regret minimization algorithm, but the way exponential weights is naturally monotone in our setting seems likely to apply to other natural regret minimization algorithms. Characterizing the class of algorithms to which this applies is an interesting challenge.  It would also be interesting to develop algorithms that exploit the particular structure of our domain, for example, that avoid discretizing the set of possible window lengths in the manner of \citet{kleinberg2008multi} or that achieve better regret, particularly when there are only a small number of arrivals by avoiding large values of $\tau$ when $\theta$ is small.  Finally, it would be interesting to exploit the substantial structure in both $\kappa(\hat{\theta})$ and $\kappa(\tau)$ to provide tighter bounds on the incentive and cost recovery properties of our payment schemes.

\section*{Acknowledgments}

This material is based upon work supported by the National Science Foundation under Grant No. 2110707.

%
%
%
%
\printbibliography

\clearpage
\newpage


\appendix

\section*{Supplementary Material}


\section{Proof of Lemma 5.1}
\label{app:lem}

\begin{lemma}
The exponentially weighted average custom policy learned from a mixture of fixed keep-alive policies is monotone in $\hat{\theta}$.
\end{lemma}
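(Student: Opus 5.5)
Fix the history $\mathcal{H}_n$ and an inter-arrival value $x$. Write the loss of each fixed policy as an affine function of the report: $L_n(\tau_j,\mathcal{H}_n,\hat\theta) = W_j + \hat\theta C_j$. Here $W_j=\sum_{i\le n} wm(\tau_j,x_i)$ and $C_j=\sum_{i\le n} cs(\tau_j,x_i)$. Order $\mathcal{T}$ so that $\tau_1<\tau_2<\dots<\tau_m$. Since $cs(\tau,x_i)$ is non-increasing in $\tau$ for every $x_i$, the cold-start counts satisfy $C_1\ge C_2\ge\dots\ge C_m$. The only structure the argument uses is this affine dependence on $\hat\theta$ together with the monotone coefficient $C_j$.

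The probability of a cold start is the total weight on the windows shorter than $x$, namely $\{j: \tau_j<x\}=\{1,\dots,k\}$ for some $k$. So, with $w_j(\hat\theta)=e^{-W_j-\hat\theta C_j}$,
\begin{equation*}
P(\hat\theta)=\frac{A(\hat\theta)}{A(\hat\theta)+B(\hat\theta)},\qquad A=\sum_{j\le k} w_j,\quad B=\sum_{j>k} w_j .
\end{equation*}
Each weight satisfies $w_j'=-C_j w_j$. This gives
\begin{equation*}
P'=\frac{A'B-AB'}{(A+B)^2}=\frac{1}{(A+B)^2}\sum_{j\le k}\sum_{l>k} w_jw_l\,(C_l-C_j).
\end{equation*}
Every pair has $j\le k<l$, so $C_l\le C_j$ and each term is non-positive. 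Hence $\partial P/\partial\hat\theta\le 0$, which is the monotonicity property in Section~\ref{properties-custom}.

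An equivalent way to see this is that $P'$ equals minus the covariance between $C_J$ and the indicator $\mathbf{1}[J\le k]$, where $J$ is drawn from the softmax. Both quantities are non-increasing in $J$, so the covariance is non-negative. The general case is the cross-product computation above.

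Two small points need care. First, the case $k\in\{0,m\}$ is trivial, since then $P$ is constant. Second, monotonicity must hold for every $x$, including the realized $x_{n+1}$. This is immediate because the argument is uniform in $k$.

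I expect the main obstacle to be the cross-term bookkeeping: showing that the numerator collapses to a sum over mixed pairs $j\le k<l$. Within-group terms cancel, because $A'B-AB'$ only involves products of one term from $A$ with one from $B$. Once this is written out, the sign is immediate from $C_l\le C_j$.

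If a learning rate $\eta$ is present, every $C_j$ is multiplied by $\eta>0$, which does not change the sign, so the result still holds.
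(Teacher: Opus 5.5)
Your proof is correct and follows the same route as the paper's appendix proof. Both write the cold-start probability as a ratio of exponential weights, differentiate with the quotient rule, and reduce the numerator to cross terms $w_j w_l (C_l - C_j)$ with $\tau_j < x \le \tau_l$; monotonicity of cumulative cold starts in $\tau$ then makes each term non-positive. Your $A'B - AB'$ form makes the cancellation of same-group terms immediate, whereas the paper sums over all of $\mathcal{T}$ and then discards the within-group pairs, but this is a presentational difference only.
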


\begin{proof}
We prove that the expected rate of a cold start of the exponentially weighted average keep-alive policy is monotonically non-increasing in $\hat{\theta}$ by examining the partial derivative of the probability of the cold start with respect to $\hat{\theta}$. Without loss of generality, we assume that the fixed keep-alive policies in the mixture $\mathcal{T}$ satisfy $\tau_{1} \leq \tau_{2} \leq \cdots \leq \tau_{j} \leq \tau_{j+1} \leq \cdots$. In an exponentially weighted average keep-alive policy, the probability of a cold start during the $i$-th arrival is expressed as the sum of the probabilities over all the fixed policies that encounter a cold start loss during the $i$-th arrival.  The probability of a cold start of the exponentially weighted average keep-alive policy is given by the following expression,

\begin{align*}
Pr( \tau_{\hat{\theta}, \mathcal{H}_{i}} < x_{i} ) =  \sum_{\tau_{j} \in \mathcal{T} | \tau_{j} < x_{i}} \; \frac{ e^{-  L_{i}(\tau_{j} , \mathcal{H}_{i}, \theta)} }{\sum_{\tau_{k} \in \mathcal{T}} e^{-  L_{i}( \tau_{k}, \mathcal{H}_{i}, \theta)}}   
\end{align*}

The cumulative loss for a fixed keep-alive window $\tau_{j}$ can be computed as the sum of the cumulative wasted memory accumulated, and the cumulative loss for cold starts.  The cumulative loss $L_{i}(\tau_j, \mathcal{H}_{i}, \theta)$, or social cost, over \textit{i} arrivals is  
\begin{equation*}
\begin{split}
 L_{i}(\tau_{j}, \mathcal{H}_{i}, \theta) & = \sum_{j=1}^{i} wm(\tau_{j}, x_{j}) + \sum_{j=1}^{i} \theta \cdot cs(\tau_{j}, x_{j}) \\
 & = WM(\tau_j, \mathcal{H}_{i})  + \theta \cdot CS(\tau_{j}, \mathcal{H}_{i}) 
\end{split}
\end{equation*}

\noindent where $\hat{\theta}$ is the submitted report, $\theta$ is the true cost of a cold start for the customer, $WM(\tau_j,  \mathcal{H}_{i})  = \displaystyle \sum_{j=1}^{i} wm(\tau_{j}, x_{j})$ denotes the cumulative wasted memory, and $CS(\tau_j,  \mathcal{H}_{i}) = \displaystyle \sum_{j=1}^{i}  cs(\tau_{j}, x_{j})$  denotes the cumulative cold starts.

Expanding the cumulative loss of the fixed policy $\tau_{j}$ we have,    

\begin{align*}
\begin{split}
Pr( \tau_{\hat{\theta}, \mathcal{H}_{i}} < x_{i} )
&= \sum_{\tau_{j} \in \mathcal{T} | \tau_{j} < x_{i}} \; \frac{ e^{- WM(\tau_{j}, \mathcal{H}_{i}) - \hat{\theta} \cdot CS(\tau_{j}, \mathcal{H}_{i}) }}{\sum_{\tau_{k} \in \mathcal{T}} e^{-  L_{i}(\tau_{k}, \mathcal{H}_{i}, \theta)}}\\
\end{split}
\end{align*}

\noindent We take the partial derivative of the probability of a cold start with respect to $\hat{\theta}$ using the quotient rule to obtain,

\begin{align*}
\begin{split}
& \frac{\partial \; [Pr(\tau_{\hat{\theta}, \mathcal{H}_{i}} \; < x_{i})]}{\partial \hat{\theta}} \\ & = \sum_{j | \tau_{j} < x_{i}} \; \Bigg( \frac{\big(\sum_{\tau_{k} \in \mathcal{T}} e^{-  L_{i}(\tau_{k}, \mathcal{H}_{i}, \theta )} \big) \cdot \big( e^{- WM(\tau_{j}, \mathcal{H}_{i}) - \hat{\theta} \cdot CS(\tau_{j}, \mathcal{H}_{i})} \big) \cdot (-CS(\tau_{j}, \mathcal{H}_{i}))}{\big(\sum_{\tau_{k} \in \mathcal{T}} e^{-  L_{i}(\tau_k, \mathcal{H}_{i}, \theta)} \big)^{2}} \\
& - \frac{e^{- WM(\tau_{j}, \mathcal{H}_{i}) -  \hat{\theta} \cdot CS(\tau_{j}, \mathcal{H}_{i})} \cdot \big( \sum_{\tau_{k} \in \mathcal{T}} e^{-  WM(\tau_k, \mathcal{H}_{i})  - \hat{\theta} \cdot CS(\tau_{k}, \mathcal{H}_{i})} \cdot (-CS(\tau_{k}, \mathcal{H}_{i}))\big)}{\big( \sum_{\tau_{k} \in \mathcal{T}} e^{-  L_{i}(\tau_{k}, \mathcal{H}_{i}, \theta)} \big)^{2}}\Bigg)
\end{split}
\end{align*}

\noindent Since, we need to determine the sign of the partial derivative $\dfrac{\partial \; [Pr(\tau_{\hat{\theta}, \mathcal{H}_{i}} \; < x_{i})]}{\partial \hat{\theta}}$, and as the denominator is always positive, $\displaystyle \big( \sum_{\tau_{k} \in \mathcal{T}} e^{-  L_{i}(\tau_{k}, \mathcal{H}_{i},\theta)} \big)^{2} \; \geq 0$, we examine the numerator as follows,

\begin{align*}
\begin{split}
& \text{Numerator of } \frac{\partial \; [Pr(\tau_{\hat{\theta}, \mathcal{H}_{i}} \; < x_{i})]}{\partial \hat{\theta}} \\
& = \sum_{j| \tau_{j} < x_{i}} \; \sum_{\tau_{k} \in \mathcal{T}} e^{-  L_{i}(\tau_k, \mathcal{H}_{i}, \theta)}  \cdot  e^{- WM(\tau_{j}, \mathcal{H}_{i}) -  \hat{\theta} \cdot CS(\tau_{j}, \mathcal{H}_{i}) }  \cdot (-CS(\tau_{j}, \mathcal{H}_{i})) \\
& -  \sum_{j | \tau_{j} < x_{i}} \; e^{- WM(\tau_{j}, \mathcal{H}_{i}) -   \hat{\theta} \cdot CS(\tau_{j}, \mathcal{H}_{i})} \cdot \sum_{\tau_{k} \in \mathcal{T}} e^{-  WM(\tau_{k}, \mathcal{H}_{i}) -  \hat{\theta} \cdot CS(\tau_{k}, \mathcal{H}_{i})} \cdot (-CS(\tau_{k}, \mathcal{H}_{i})) \\
& = \sum_{ j| \tau_{j} < x_{i}} \; \sum_{\tau_{k} \in \mathcal{T}} e^{-  WM(\tau_{k}, \mathcal{H}_{i}) - \hat{\theta} \cdot CS(\tau_{k}, \mathcal{H}_{i})}  \cdot  e^{- WM(\tau_{j}, \mathcal{H}_{i})  - \hat{\theta} \cdot CS(\tau_{j}, \mathcal{H}_{i}) }  \cdot (-CS(\tau_{j}, \mathcal{H}_{i})) \\
& +  \sum_{ j | \tau_{j} < x_{i}} \; \cdot \sum_{\tau_{k} \in \mathcal{T}} e^{-  WM(\tau_{k}, \mathcal{H}_{i})  -  \hat{\theta} \cdot CS(\tau_{k}, \mathcal{H}_{i})} \cdot e^{- WM(\tau_{j}, \mathcal{H}_{i}) - \hat{\theta} \cdot CS(\tau_{j}, \mathcal{H}_{i})} \cdot CS(\tau_{k}, \mathcal{H}_{i}) \\
& = \sum_{ j| \tau_{j} < x_{i}} \; \sum_{\tau_{k} \in \mathcal{T}} e^{-  WM(\tau_{k}, \mathcal{H}_{i})  - \hat{\theta} \cdot CS(\tau_{k}, \mathcal{H}_{i})}  \cdot  e^{- WM(\tau_{j}, \mathcal{H}_{i}) -  \hat{\theta} \cdot CS(\tau_{j}, \mathcal{H}_{i})}  \cdot (-CS(\tau_j, \mathcal{H}_{i}) + CS(\tau_{k}, \mathcal{H}_{i})) \\
\end{split}    
\end{align*}

Let us assume there are $m$ fixed policies in the mixture of fixed policies. That is, $\mathcal{T} = \{\tau_{1}, \tau_{2}, \cdots, \tau_{m} \}$.  Without loss of generality, we also assume that the largest length of the fixed keep-alive window to encounter a cold- start for the $i$-th arrival is $\tau_{u}$. Since, a keep-alive policy $\tau$ encounters a cold start if $\tau < x_{n}$. Hence, if a fixed policy with length $\tau_{u}$ encounters a cold start $\tau_{u} < x_{i}$, then all the fixed policies with keep-alive windows lower than $\tau_{j} \leq \tau_{u}$ also encounter a cold start. Hence, as per our assumption the fixed policy with length $\tau_{u+1}$ encounters a warm start $\tau_{u+1} \;\geq  x_{i}$. This implies all the fixed policies with keep-alive windows larger than $\tau_{j} \geq \tau_{u+1}$ also encounter a warm start for the $i$th arrival. Now, the numerator of the partial derivative can be further simplified as follows,

\begin{align*}
\begin{split}
\text{Numerator of } \frac{\partial \; [Pr(\tau_{\hat{\theta}, \mathcal{H}_{i}} \; < x_{i})]}{\partial \hat{\theta}} & \quad \qquad \\
= \sum_{j \in [1,u]} \; \sum_{k \in [1,m]} e^{-  WM(\tau_{k}, \mathcal{H}_{i})  - \hat{\theta} \cdot CS(\tau_{k},\mathcal{H}_{i})}  \cdot  e^{- WM(\tau_{j}, \mathcal{H}_{i})  - \hat{\theta} \cdot CS(\tau_{k}, \mathcal{H}_{i})} & \\ \cdot (-CS(\tau_{j}, \mathcal{H}_{i}) + CS(\tau_{k}, \mathcal{H}_{i}))  \quad \text{(3)}  &\\ 
= \sum_{j \in [1,u]} \; \sum_{k \in [u+1,m]} e^{-  WM(\tau_{k}, \mathcal{H}_{i})  -  \hat{\theta} \cdot CS(\tau_{k}, \mathcal{H}_{i})}  \cdot  e^{- WM(\tau_{j}, \mathcal{H}_{i})  - \hat{\theta} \cdot CS(\tau_{j}, \mathcal{H}_{i})} & \\ \cdot (-CS(\tau_{j}, \mathcal{H}_{i}) + CS(\tau_{k}, \mathcal{H}_{i})) \quad \text{(4)} & \\
\end{split}    
\end{align*}

\noindent In Equation 3, the terms of the summation for $j = k = 1,2, \cdots, u$ cancel out since $\tau_{j} = \tau_{k}$. Now, we examine the signs of the terms in Equation 4, 
\begin{align*}
\begin{split}
e^{-  WM(\tau_{k}, \mathcal{H}_{i}) - \hat{\theta} \cdot CS(\tau_{k}, \mathcal{H}_{i})} \; & \geq 0  \\
e^{- WM(\tau_{j}, \mathcal{H}_{i}) -  \hat{\theta} \cdot CS(\tau_{j}, \mathcal{H}_{i})} \; & \geq 0 \\
-CS(\tau_{j}, \mathcal{H}_{i}) + CS(\tau_{k}, \mathcal{H}_{i}) \; & \leq 0 \quad  \text{for} \; j \in [1,u], \text{and} \; k \in [u+1,m]  \quad (\text{4c})\\
\end{split}    
\end{align*}

\noindent From Equation 4c, we observe that $-CS(\tau_{j}, \mathcal{H}_{i}) + CS(\tau_{k}, \mathcal{H}_{i}) \; \leq 0 \quad  \text{for} \; j \in [1,u], \text{and} \; k \in [u+1,m]$ because the lower fixed keep-alive windows accumulate more cold starts because a cold start occurs if $\tau < x_{i}$. Hence, the sign of the numerator of the partial derivative is negative. 

Therefore, $\dfrac{\partial \; [Pr(\tau_{\hat{\theta}, \mathcal{H}_{i}} \; < x_{i})]}{\partial \hat{\theta}} \; \leq 0$.

\end{proof}

Our proof of Lemma~\ref{lem:monotone} does not rely on the full details of our settings, but rather only that (a) the loss used is affine in $\hat{\theta}$ and (b) the coefficient on $\hat{\theta}$ is monotone in $\tau$.  In our setting this corresponds to saying that the number of cold starts is decreasing in $\tau$.  But this analysis would apply more broadly to other related challenges in cloud resource allocation (the number of timed-out requests is decreasing in the number of servers allocating) and challenges in other market domains (when implementing an auto-bidder for an auction, increasing the bid results in more auction wins).  Thus our analysis is more broadly relevant to a variety of domains, particularly when implemented by a platform owner who has complete information about counterfactual outcomes.

\section{Omitted Figures}

We generated 20 plots of the trade-off between $\sum_i p_i(\hat{\theta})$, and $\sum_i cs_i(\tau_{\hat{\theta}, \mathcal{H}_{i}},x_i)$ for the Poisson process and the Hawkes process. A sample plot for the Poisson process and the Hawkes process is presented in the main manuscript. The remaining 19 plots for the Poisson process and the Hawkes process are shown below. Figures \ref{poisson_figure1}, \ref{poisson_figure2}, and \ref{poisson_figure3} are the trade-off curves generated when the arrivals of application invocations were simulated as the Poisson process. Figures \ref{hawkes_figure1}, \ref{hawkes_figure2}, and \ref{hawkes_figure3} are the trade-off curves generated when the arrivals of application invocations were simulated as the Hawkes process. We observe that most of the Pareto frontier points for the Poisson process are close to the extremes, that is, large number of cold starts with low payment or very few cold starts with large payment. This is because the optimal (in expectation) keep-alive policies for the Poisson process is $\tau_{\hat{\theta}, \mathcal{H}_{i}} = 0$ or $\tau_{\hat{\theta}, \mathcal{H}_{i}} = \infty$. Of course, ex-post occasionally some intermediate option is part of the Pareto frontier by chance.

The Hawkes process examples generally result in a more ``interesting'' Pareto frontier.  Again, many of the examples have a close match between the two payment rules.  While a few have visually larger gaps, note that the y-axis scale varies and these generally correspond to processes with relatively rapid arrival rates and so lower payments than other examples.

We generated figures of the trade-off plots between $\sum_i p_i(\hat{\theta})$, and \\ $\sum_i cs_i(\tau_{\hat{\theta}, \mathcal{H}_{i}},x_i)$ for 10 randomly sampled Azure applications from the Azure Data trace. Two sample plots are presented in the main manuscript. The remaining 8 plots are shown in Figure \ref{figure_azure_appendix1}, and Figure \ref{figure_azure_appendix2} .
They continue the key trend of the two payment rules being very similar once the offset is taken into account.

\begin{figure*}[t!]
    \centering
    \begin{subfigure}[t]{0.44\textwidth}
        \centering
        \includegraphics[height=1.7in]{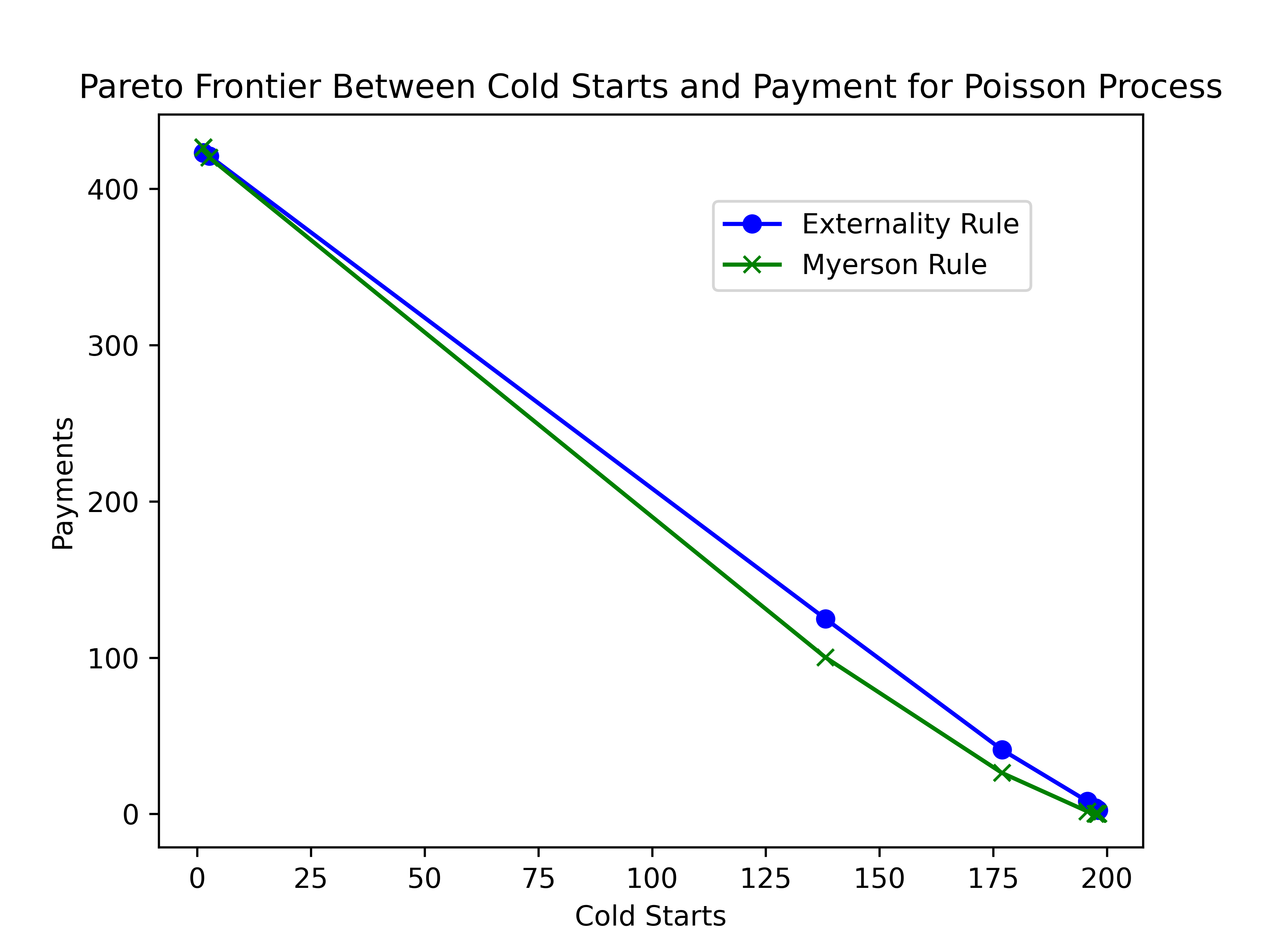}
        \caption{Poisson process $\lambda = 0.436$}
    \end{subfigure}%
    ~ 
    \begin{subfigure}[t]{0.44\textwidth}
        \centering
        \includegraphics[height=1.7in]{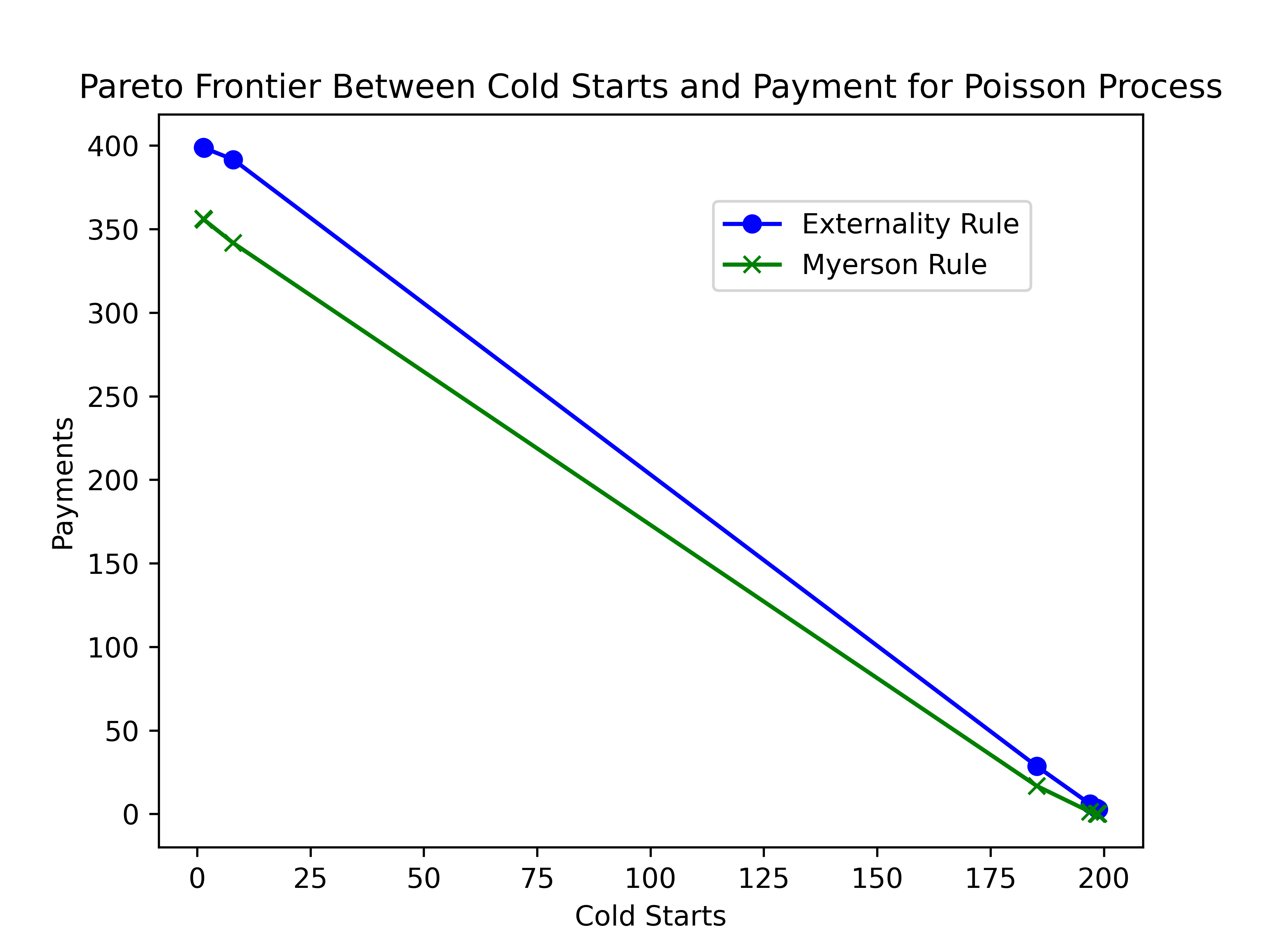}
        \caption{Poisson process $\lambda = 0.485$}
    \end{subfigure}    
     
    \begin{subfigure}[t]{0.44\textwidth}
        \centering
        \includegraphics[height=1.7in]{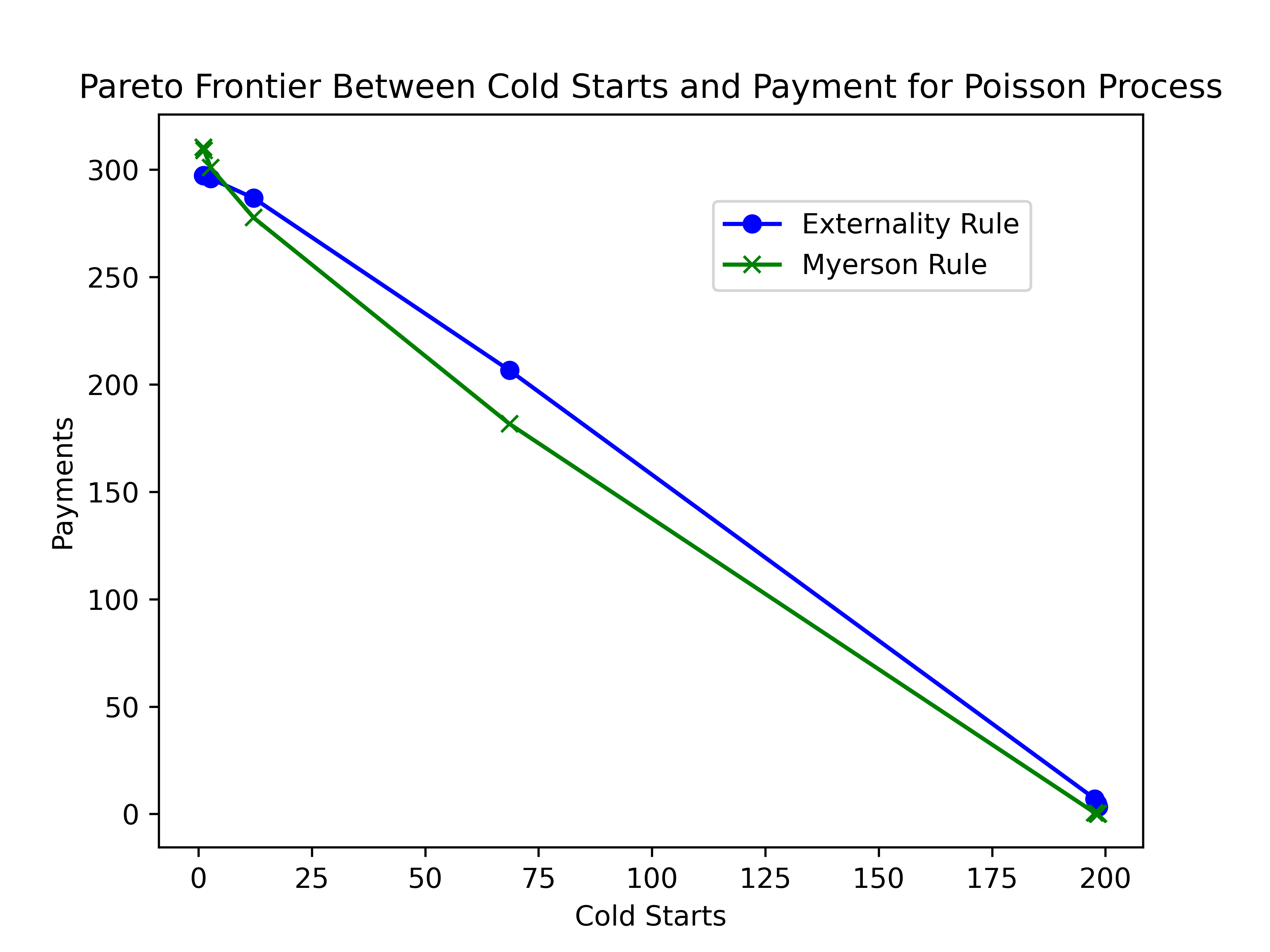}
        \caption{Poisson process $\lambda = 0.612$}
    \end{subfigure}        
    ~
    \begin{subfigure}[t]{0.44\textwidth}
        \centering
        \includegraphics[height=1.7in]{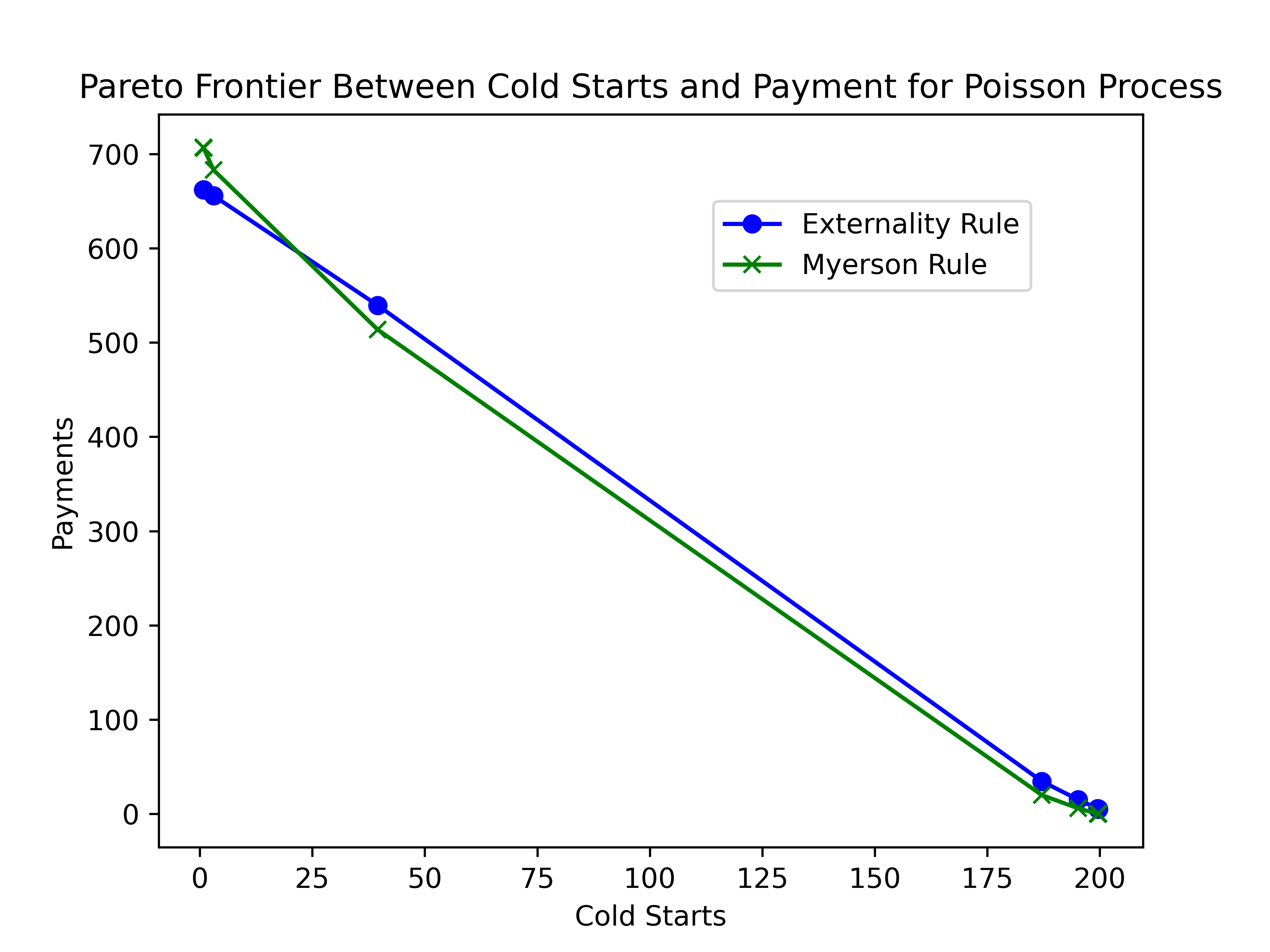}
        \caption{Poisson process $\lambda = 0.319$}
    \end{subfigure}    
     
    \begin{subfigure}[t]{0.44\textwidth}
        \centering
        \includegraphics[height=1.7in]{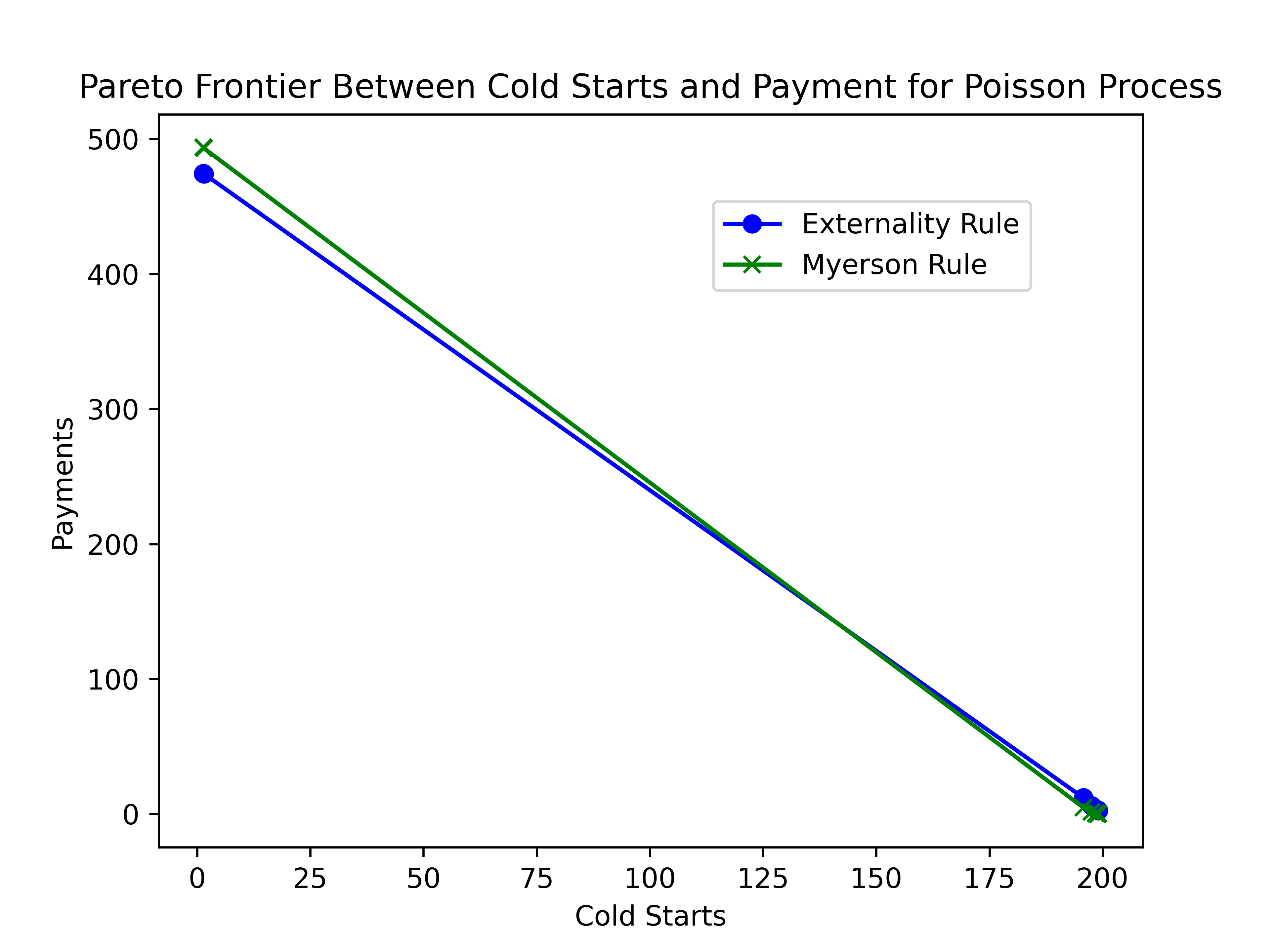}
        \caption{Poisson process $\lambda = 0.443$}
    \end{subfigure}    
    ~ 
    \begin{subfigure}[t]{0.44\textwidth}
        \centering
        \includegraphics[height=1.7in]{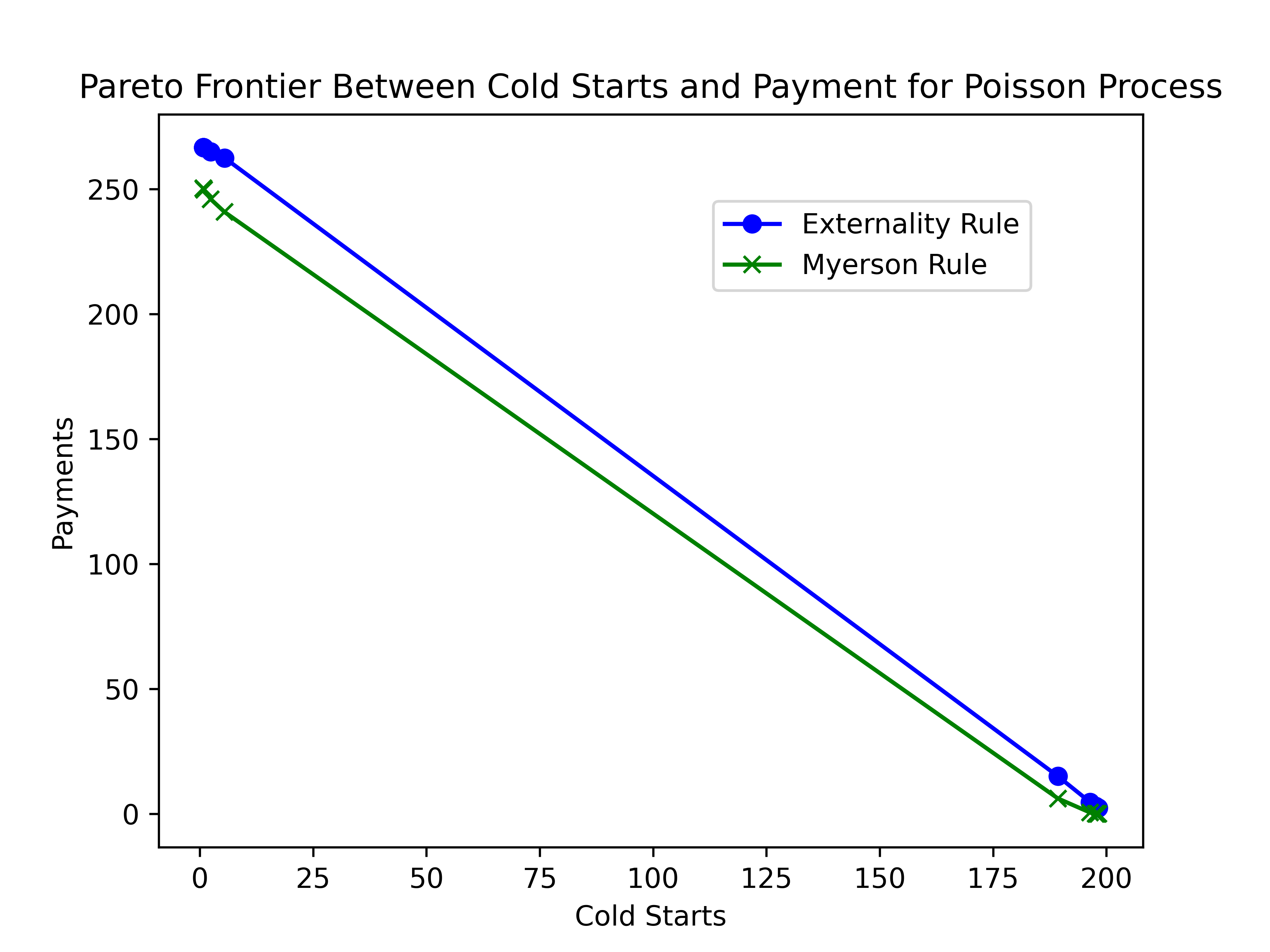}
        \caption{Poisson process $\lambda = 0.718$}
    \end{subfigure}            

    \begin{subfigure}[t]{0.44\textwidth}
        \centering
        \includegraphics[height=1.7in]{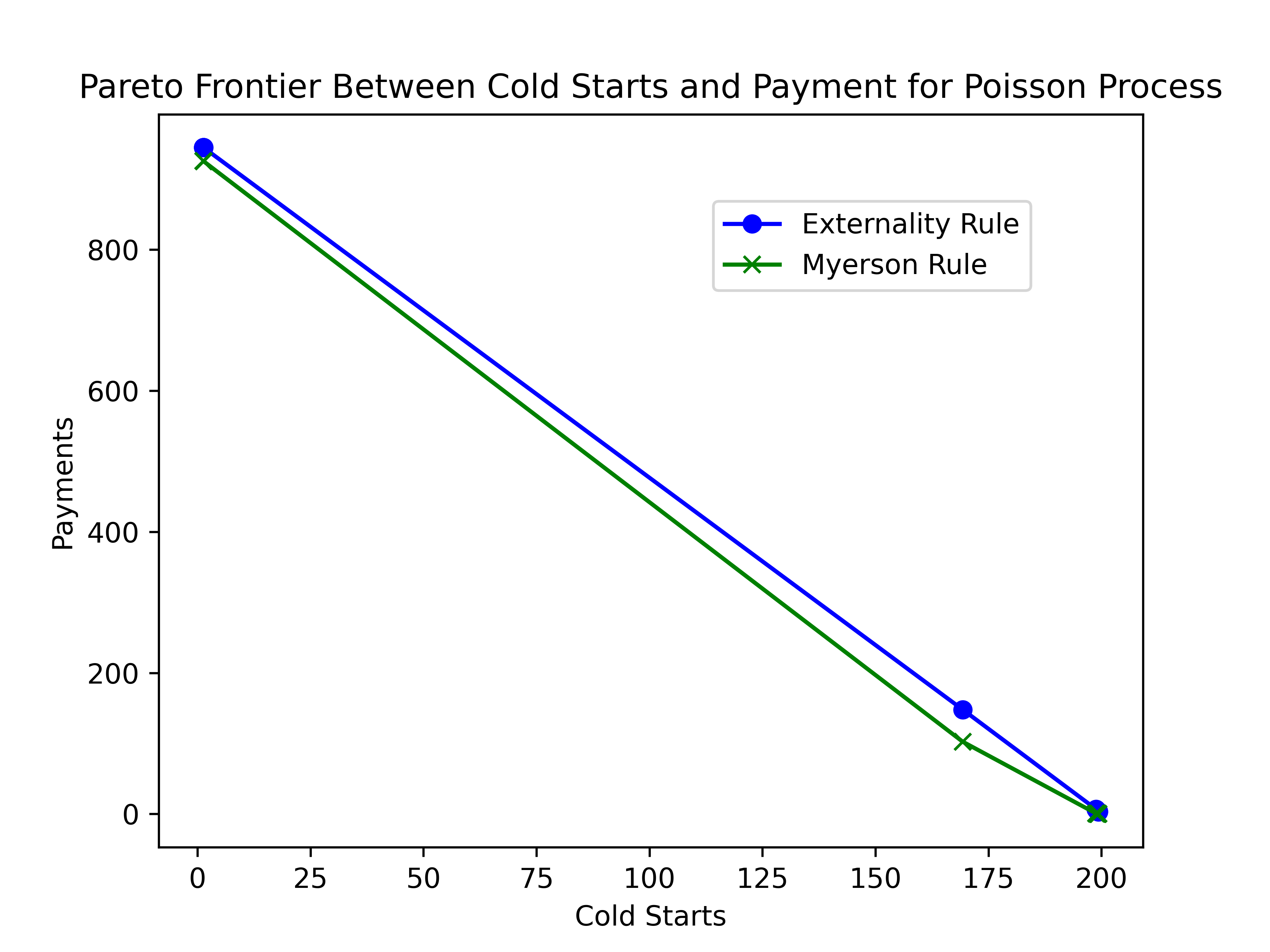}
        \caption{Poisson process $\lambda = 0.215$}
    \end{subfigure}    
    ~ 
    \begin{subfigure}[t]{0.44\textwidth}
        \centering
        \includegraphics[height=1.7in]{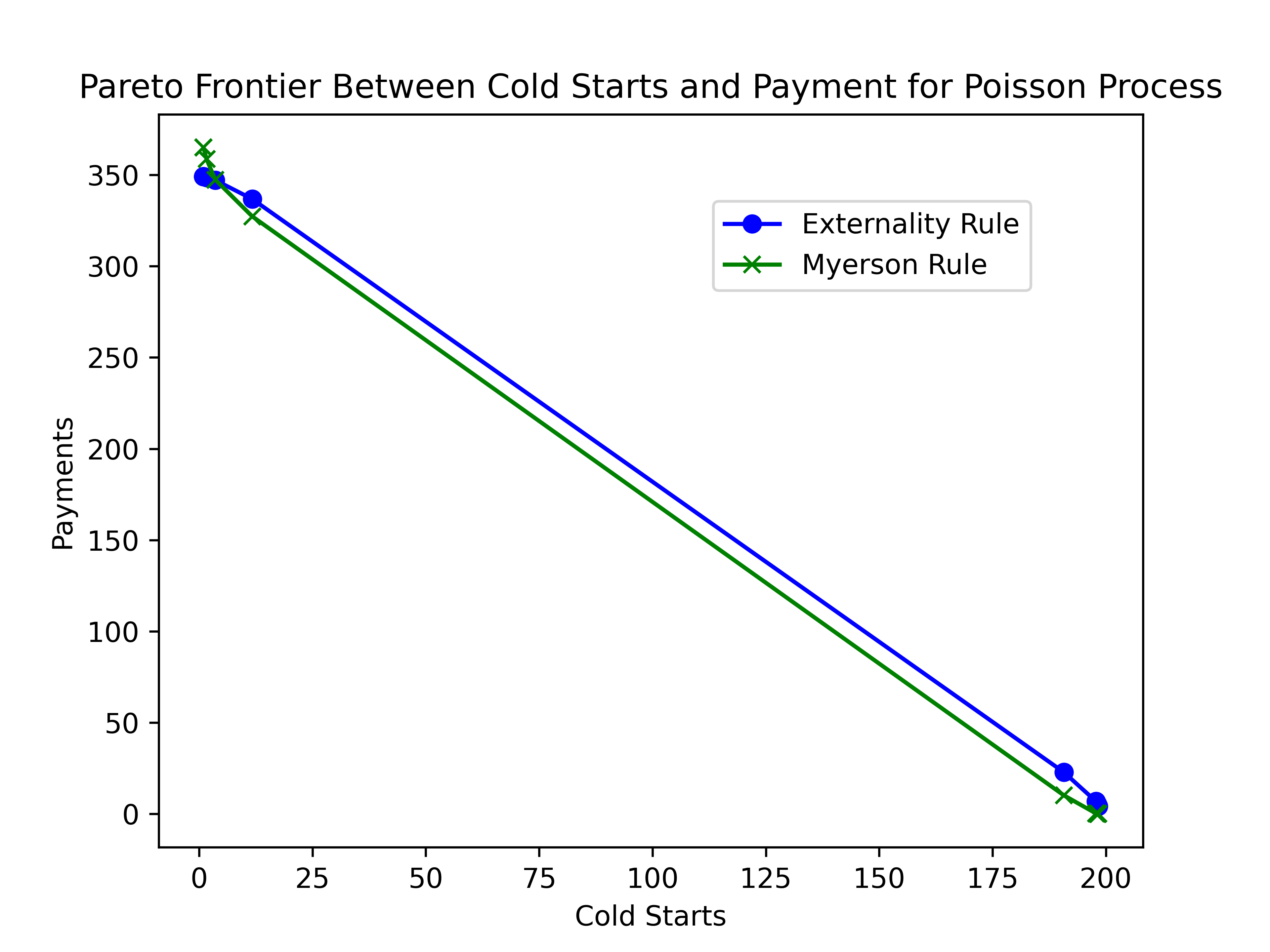}
        \caption{Poisson process $\lambda = 0.576$}
    \end{subfigure} 
   \caption{Trade-off curve of cumulative payments vs cold starts for Poisson process (1-8)}
    \label{poisson_figure1}
\end{figure*}

\begin{figure*}[t!]
    \centering           
    \begin{subfigure}[t]{0.44\textwidth}
        \centering
        \includegraphics[height=1.73in]{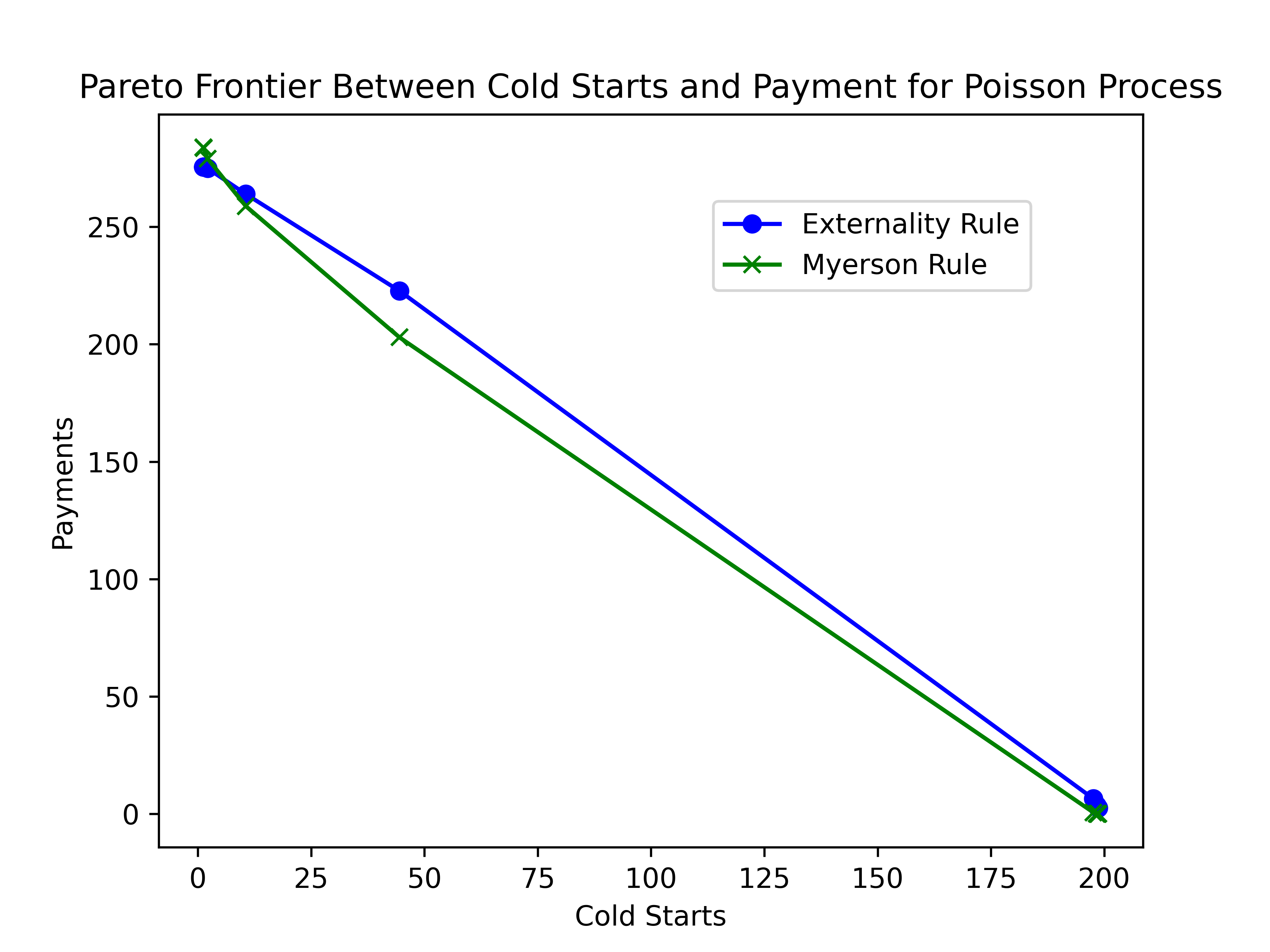}
        \caption{Poisson process $\lambda = 0.669$}
    \end{subfigure}%
    ~ 
    \begin{subfigure}[t]{0.44\textwidth}
        \centering
        \includegraphics[height=1.73in]{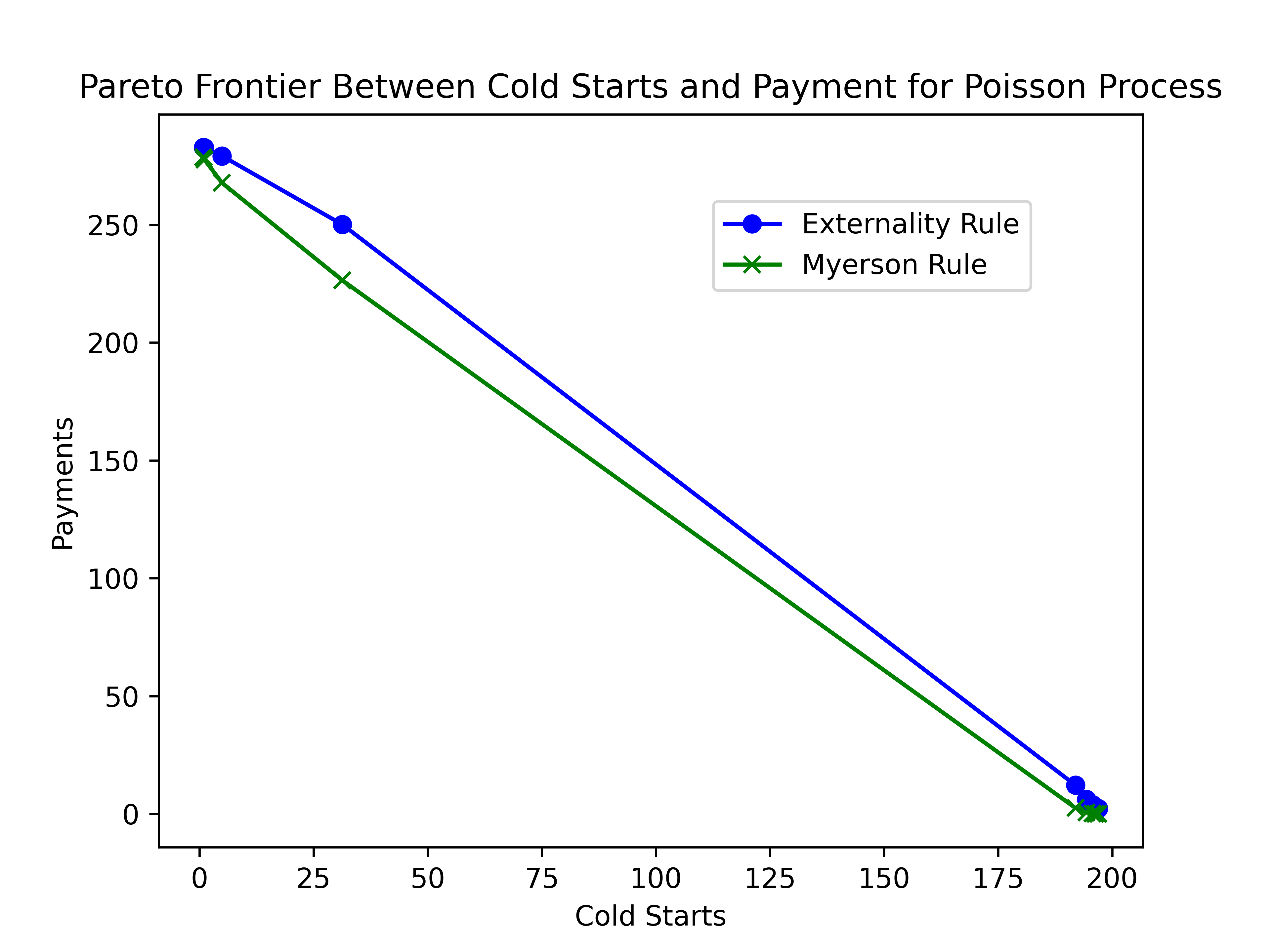}
        \caption{Poisson process $\lambda = 0.742$}
    \end{subfigure}    
     
    \begin{subfigure}[t]{0.44\textwidth}
        \centering
        \includegraphics[height=1.73in]{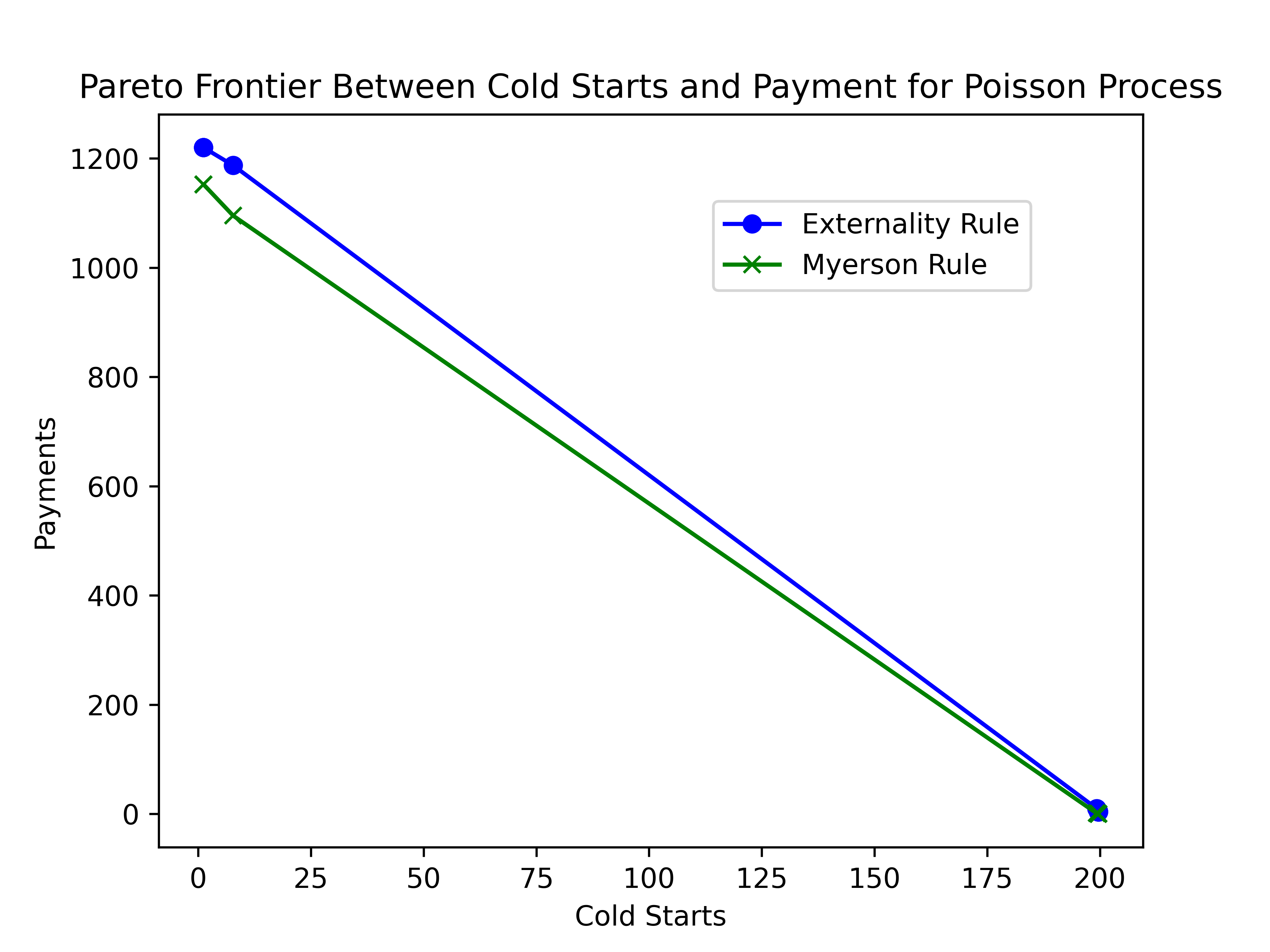}
        \caption{Poisson process $\lambda = 0.162$}
    \end{subfigure}        
    ~
    \begin{subfigure}[t]{0.44\textwidth}
        \centering
        \includegraphics[height=1.73in]{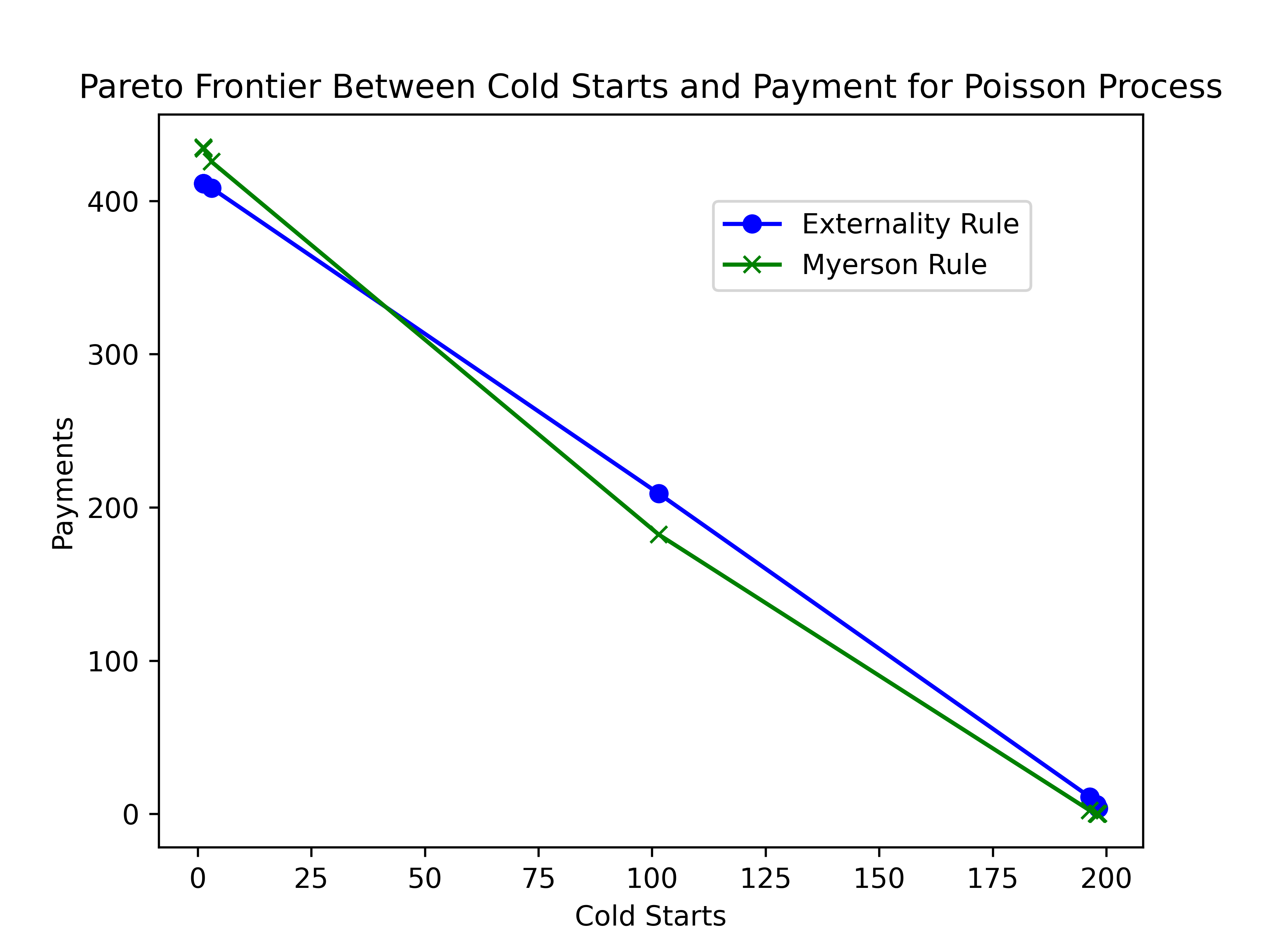}
        \caption{Poisson process $\lambda = 0.439$}
    \end{subfigure}   

    \begin{subfigure}[t]{0.44\textwidth}
        \centering
        \includegraphics[height=1.73in]{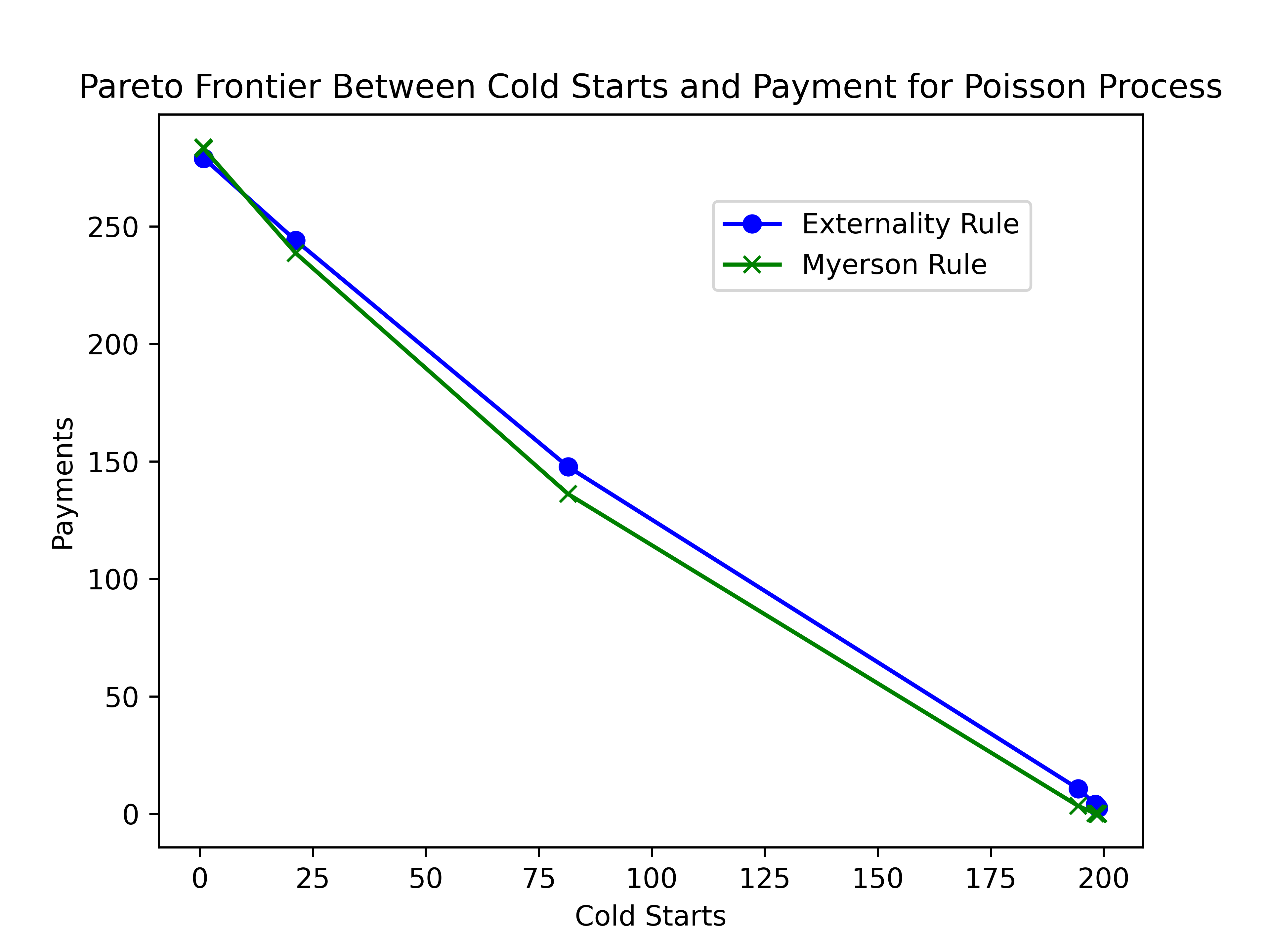}
        \caption{Poisson process $\lambda = 0.725$}
    \end{subfigure}    
    ~ 
    \begin{subfigure}[t]{0.44\textwidth}
        \centering
        \includegraphics[height=1.73in]{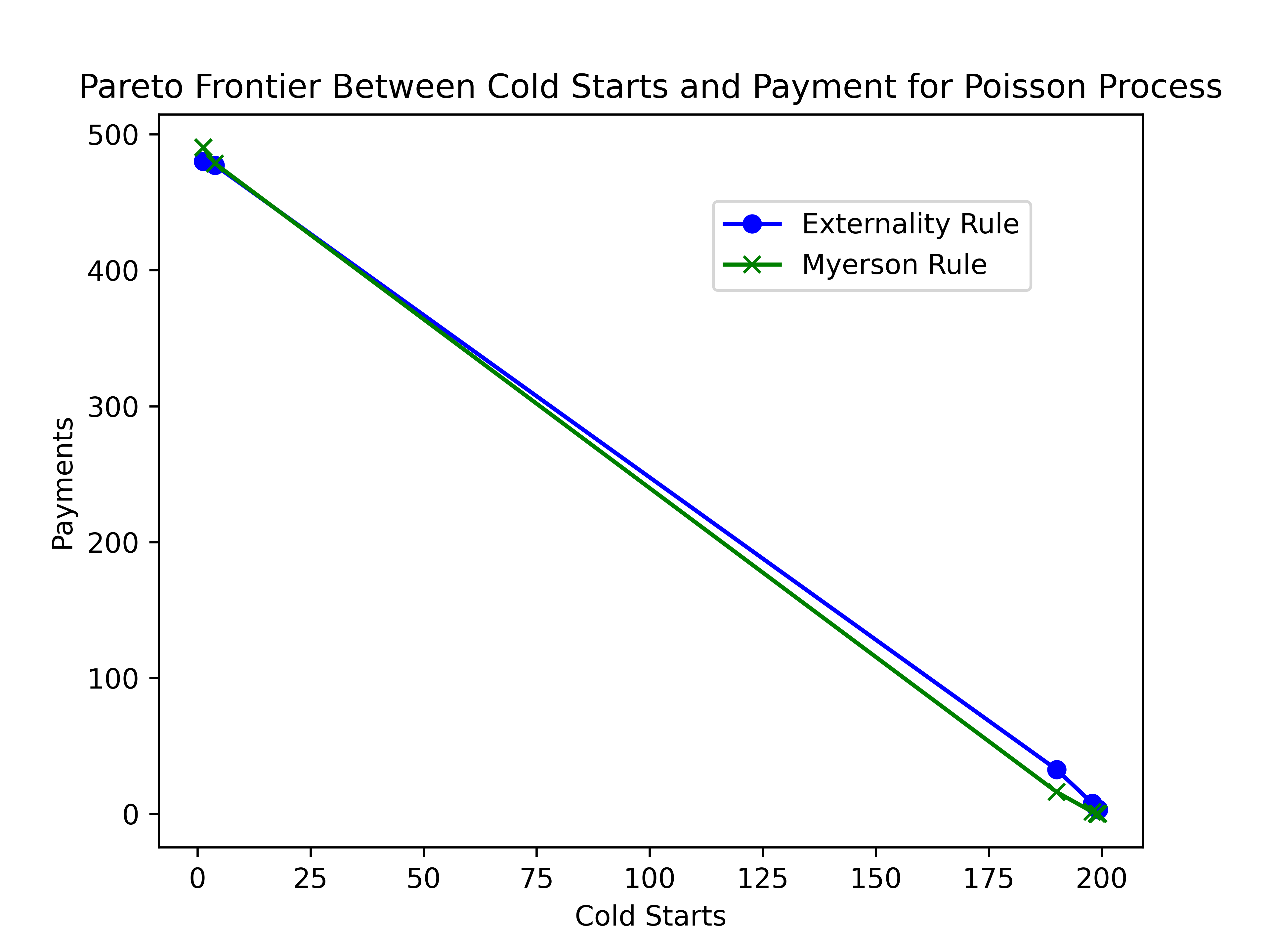}
        \caption{Poisson process $\lambda = 0.413$}
    \end{subfigure}            

    \begin{subfigure}[t]{0.44\textwidth}
        \centering
        \includegraphics[height=1.73in]{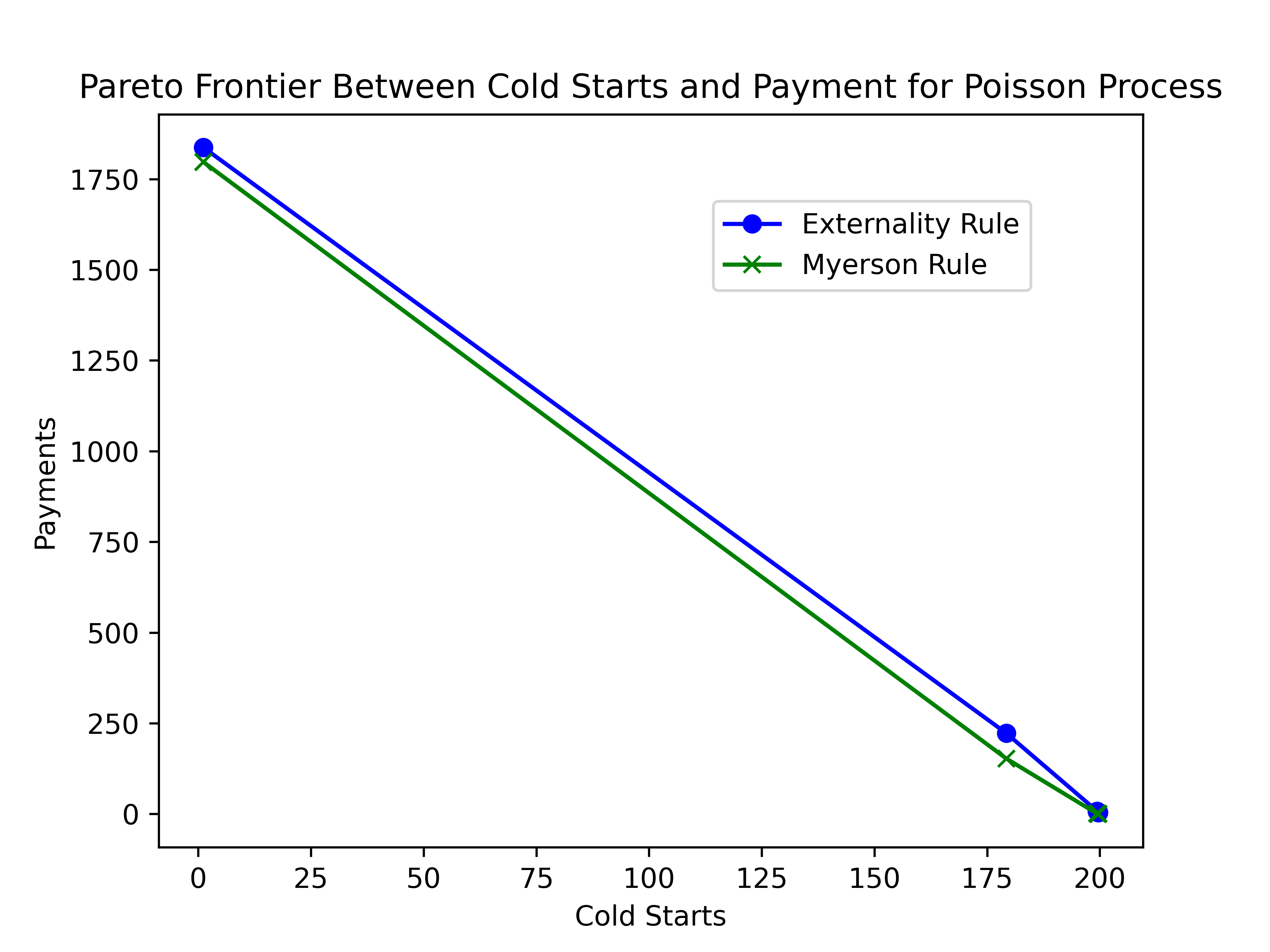}
        \caption{Poisson process $\lambda = 0.100$}
    \end{subfigure}    
    ~ 
    \begin{subfigure}[t]{0.44\textwidth}
        \centering
        \includegraphics[height=1.73in]{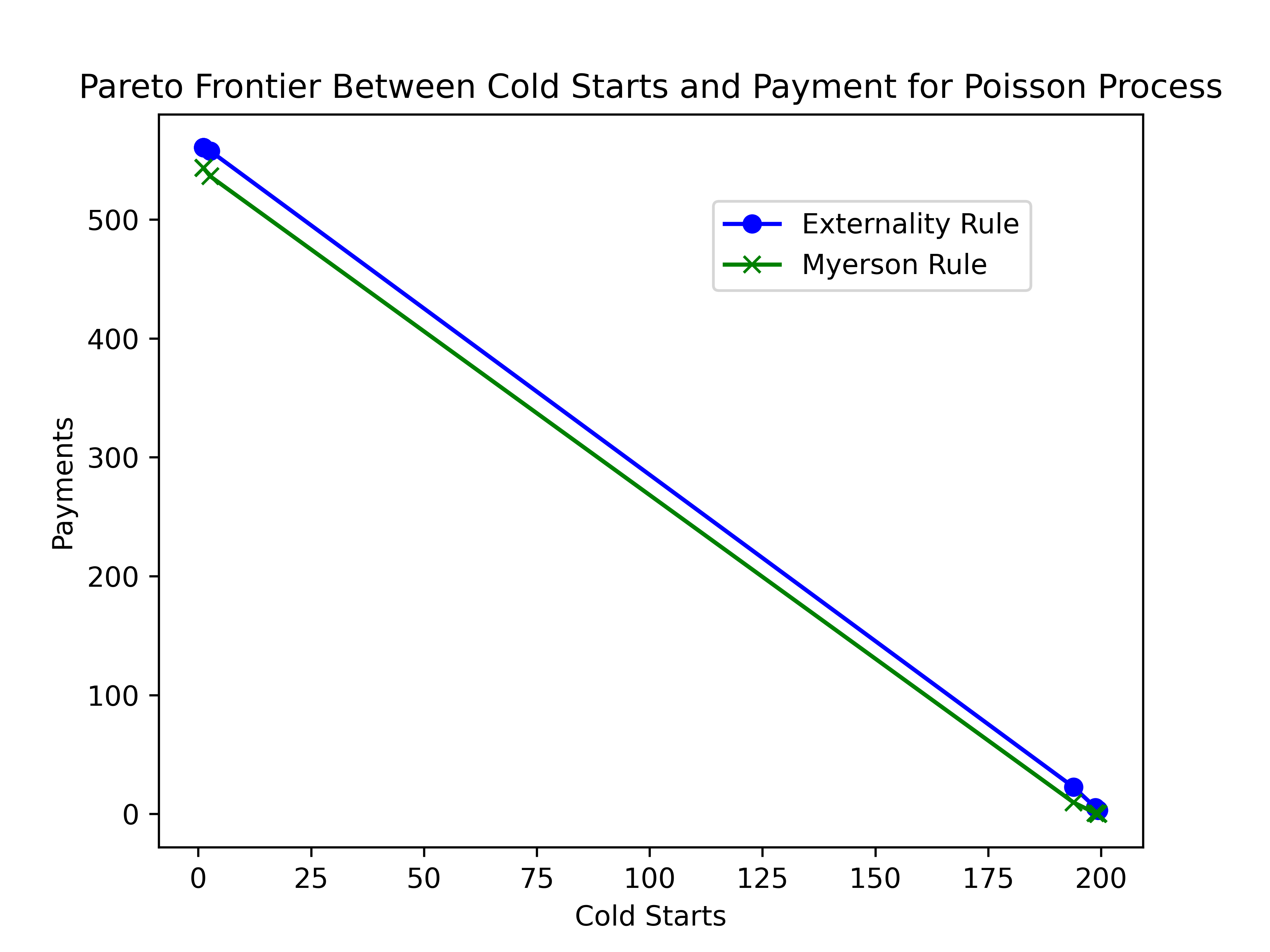}
        \caption{Poisson process $\lambda = 0.374$}
    \end{subfigure}            

   \caption{Trade-off curve of cumulative payments vs cold starts for Poisson process (9-16)}
    \label{poisson_figure2}
\end{figure*}

\begin{figure*}[t!]
    \centering
    \begin{subfigure}[t]{0.44\textwidth}
        \centering
        \includegraphics[height=1.73in]{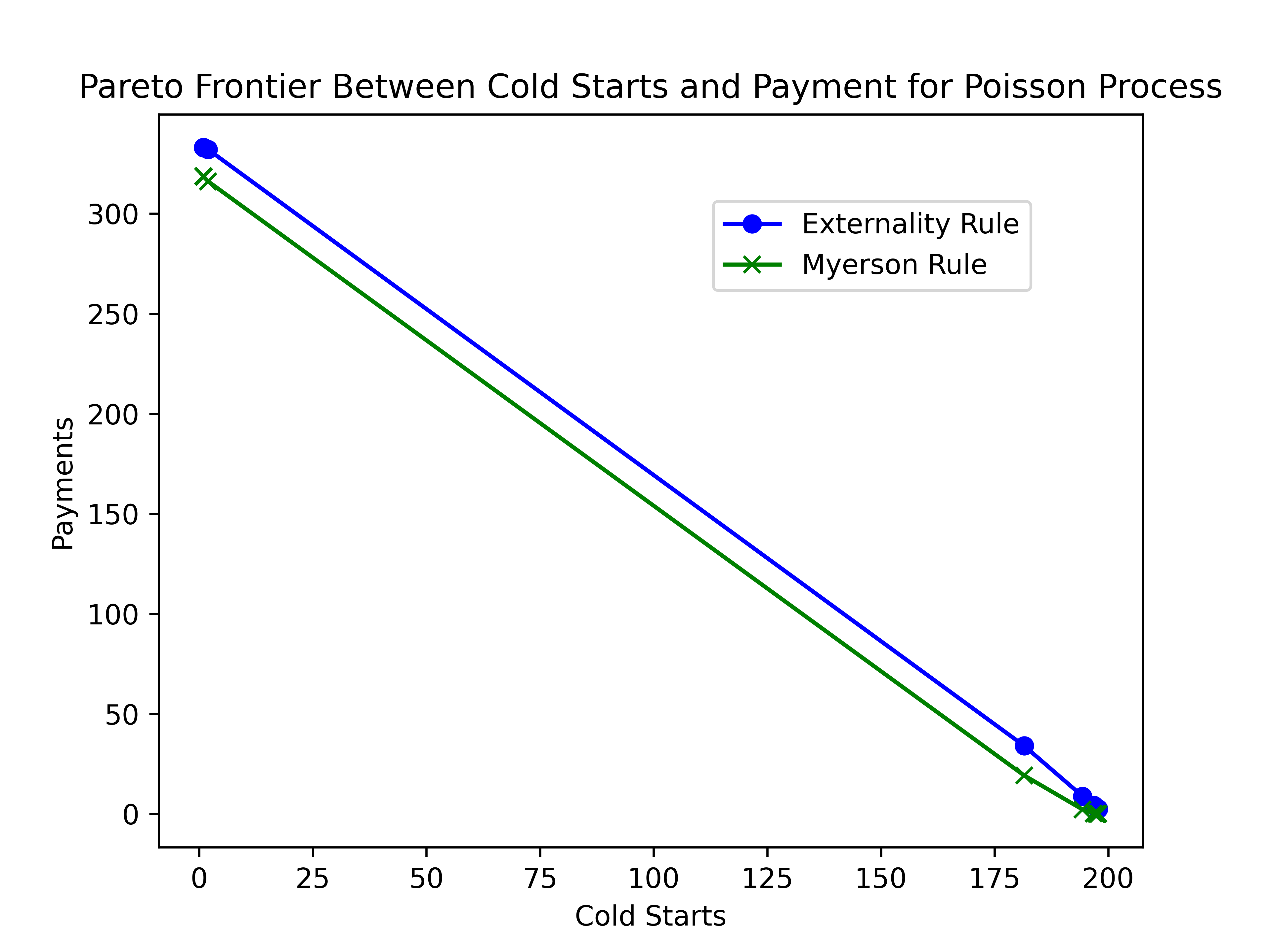}
        \caption{Poisson process $\lambda = 0.584$}
    \end{subfigure}%
    ~ 
    \begin{subfigure}[t]{0.44\textwidth}
        \centering
        \includegraphics[height=1.73in]{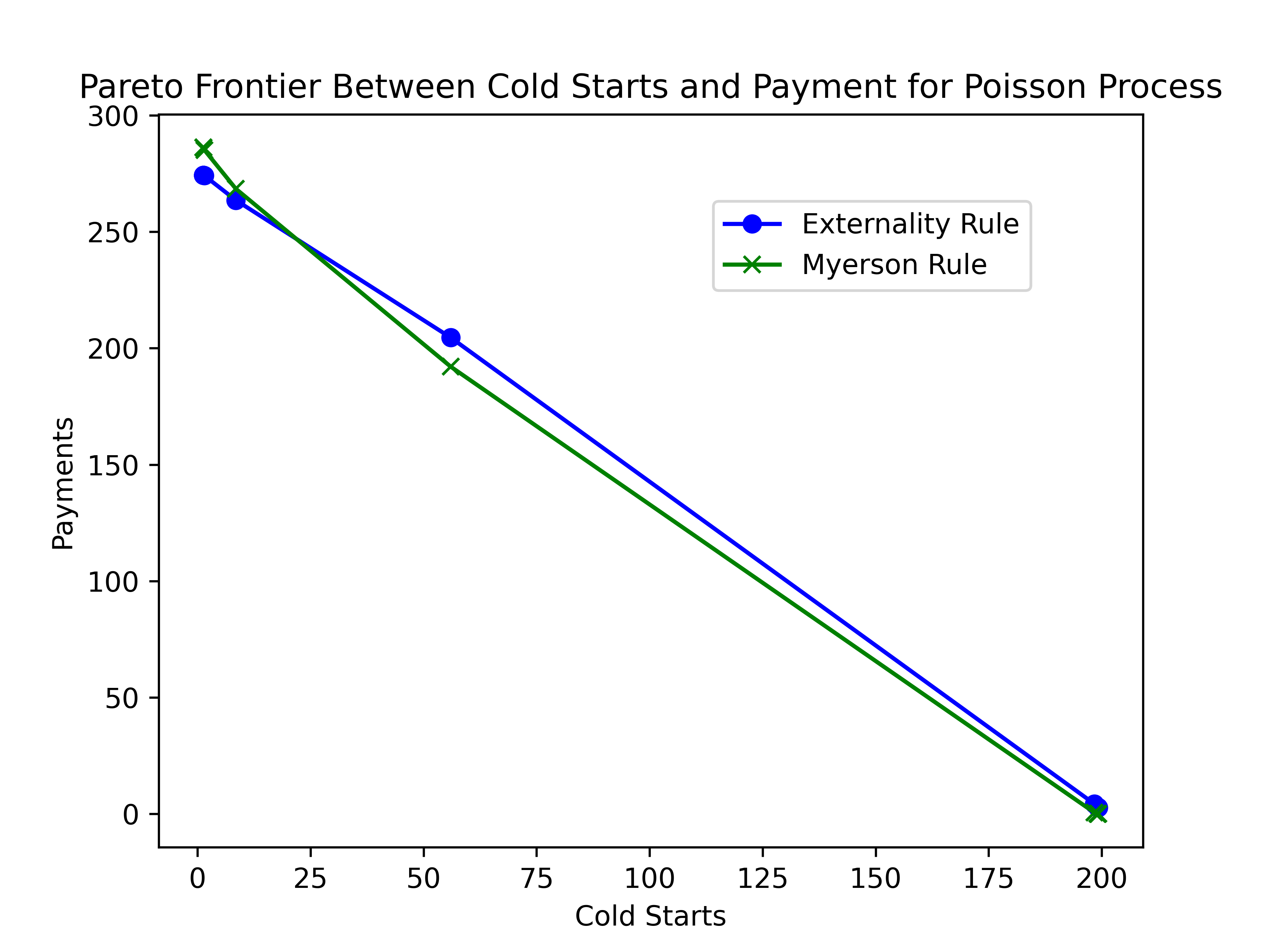}
        \caption{Poisson process $\lambda = 0.671$}
    \end{subfigure}    
    
    \begin{subfigure}[t]{0.8\textwidth}
        \centering
        \includegraphics[height=2.03in]{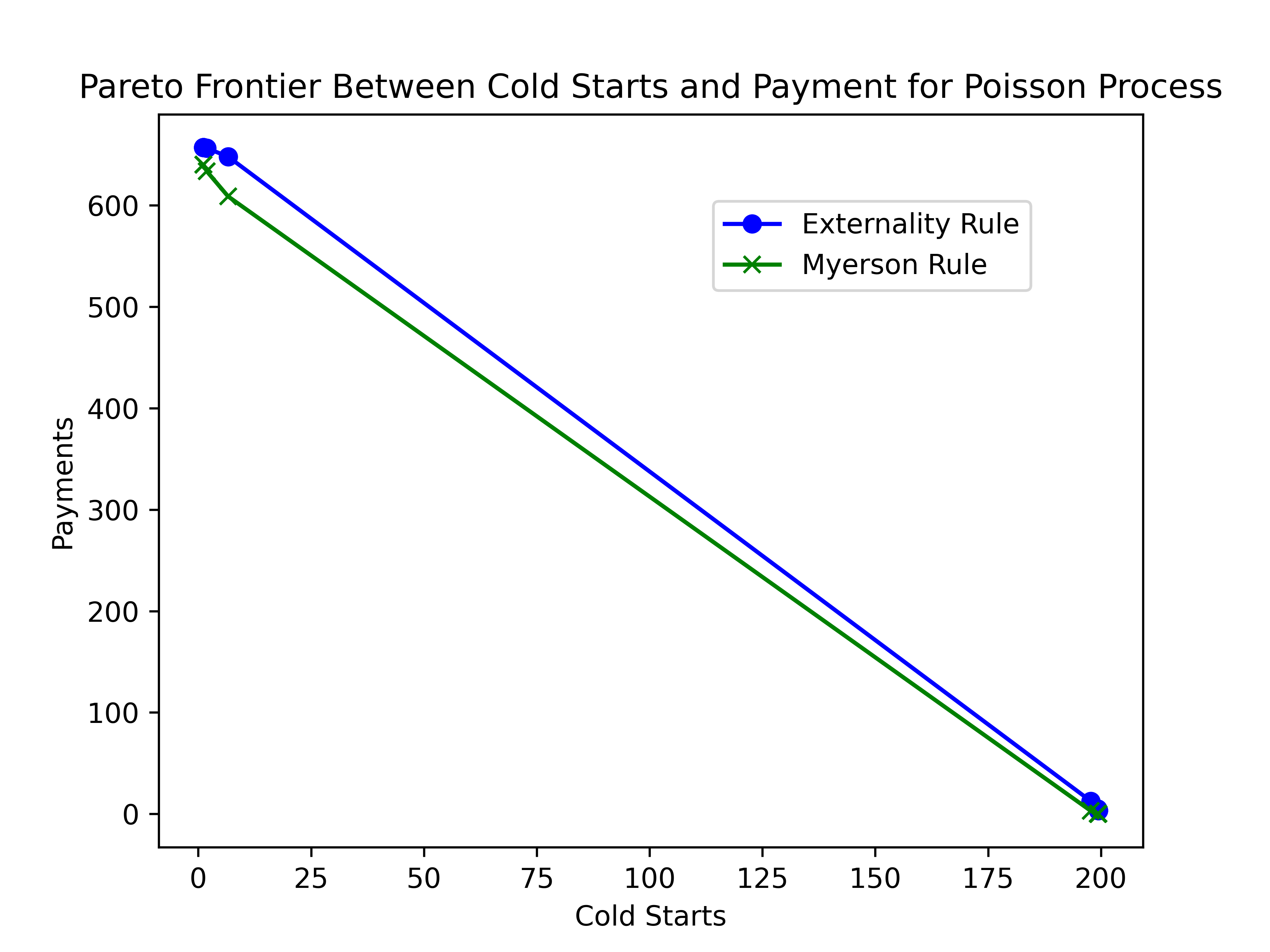}
        \caption{Poisson process $\lambda = 0.323$}
    \end{subfigure}%
    \caption{Trade-off curve of cumulative payments vs cold starts for Poisson process (17-19)}
    \label{poisson_figure3}
\end{figure*}

\begin{figure*}[t!]
    \centering
    \begin{subfigure}[t]{0.44\textwidth}
        \centering
        \includegraphics[height=1.73in]{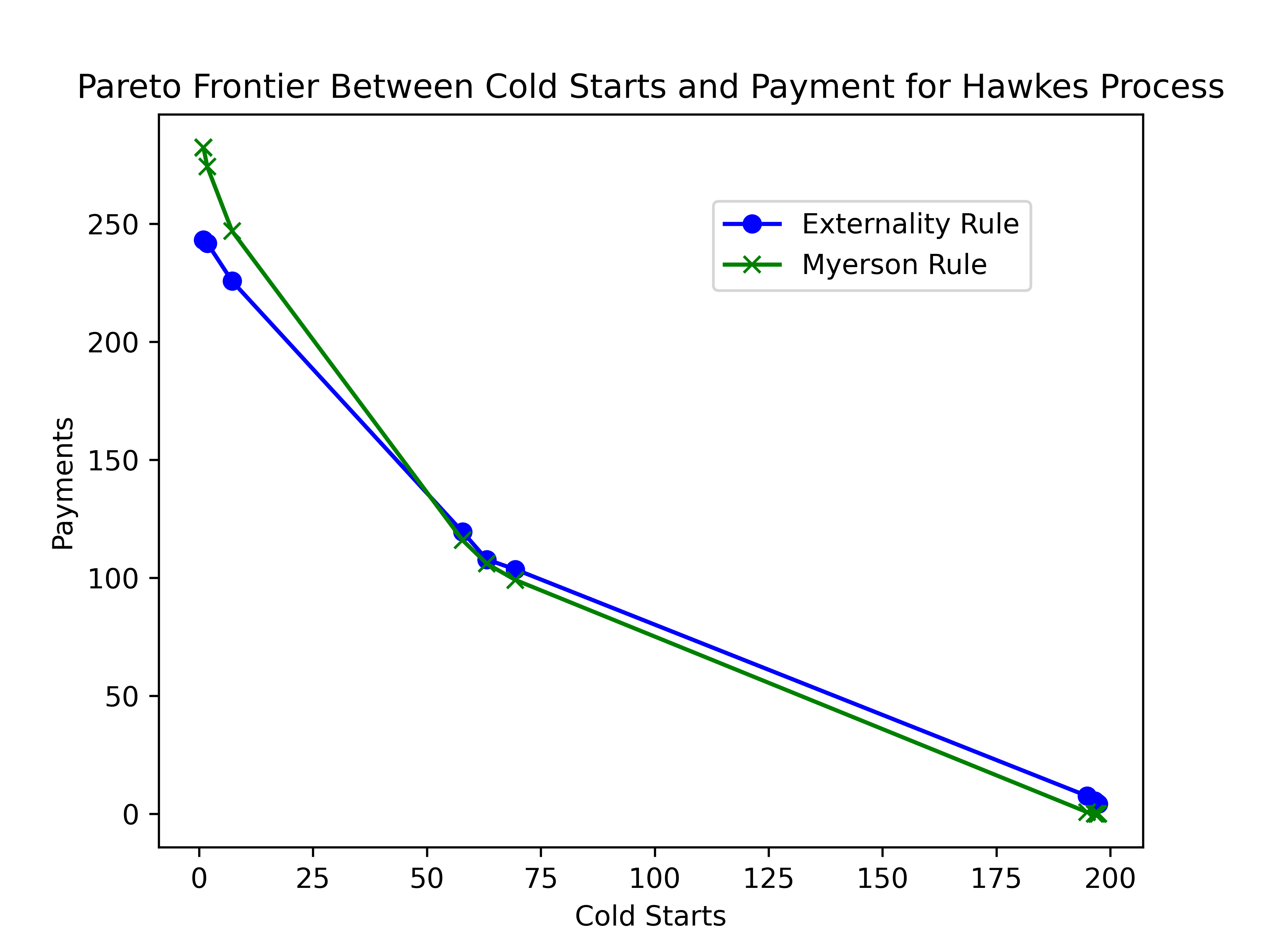}
        \caption{$\lambda_0 = 0.40, \alpha = 0.72, \beta = 1.75$}
    \end{subfigure}%
    ~ 
    \begin{subfigure}[t]{0.44\textwidth}
        \centering
        \includegraphics[height=1.73in]{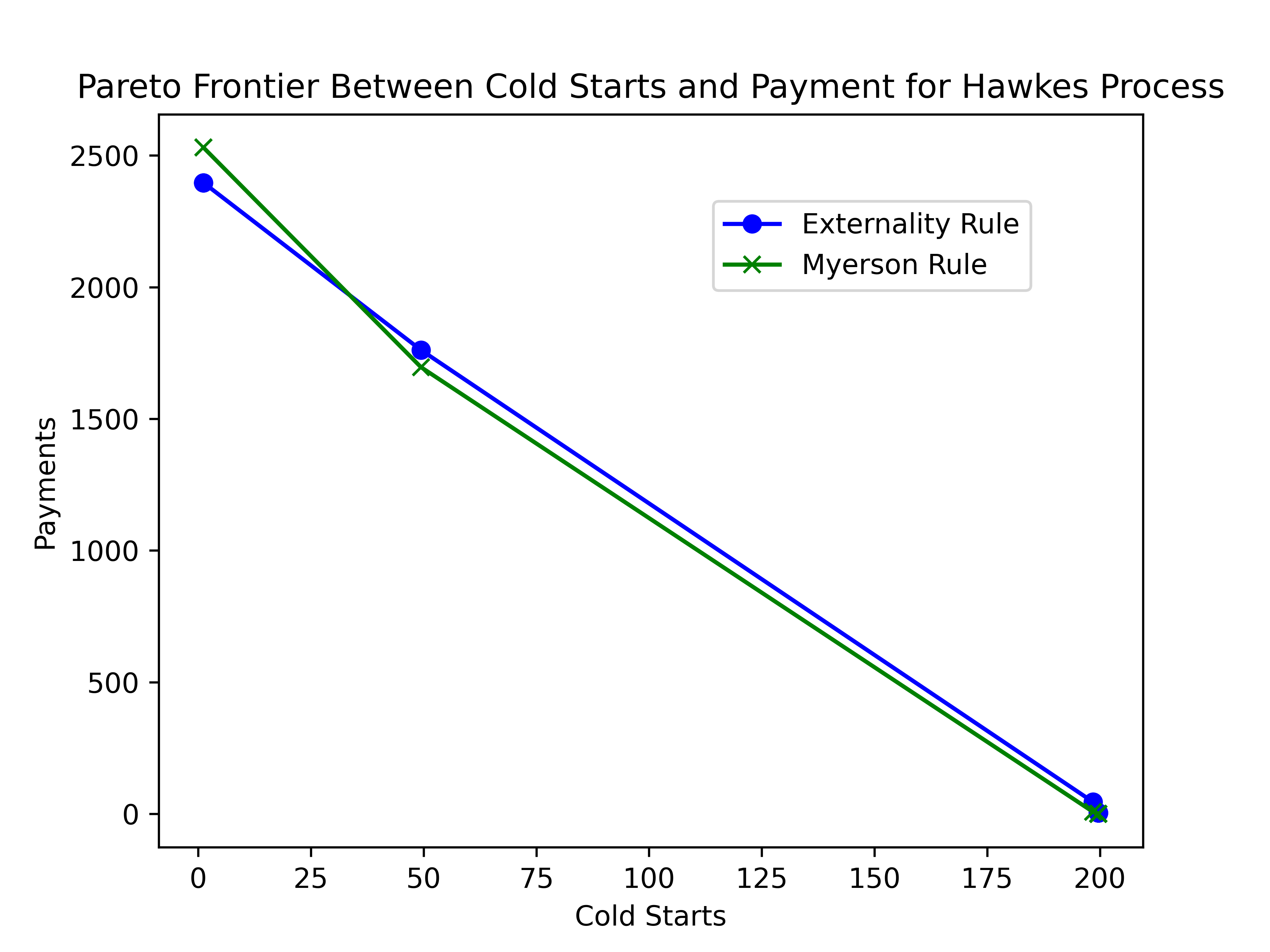}
        \caption{$ \lambda_0 = 0.09, \alpha = 0.029, \beta = 1.76$}
    \end{subfigure}    
     
    \begin{subfigure}[t]{0.44\textwidth}
        \centering
        \includegraphics[height=1.73in]{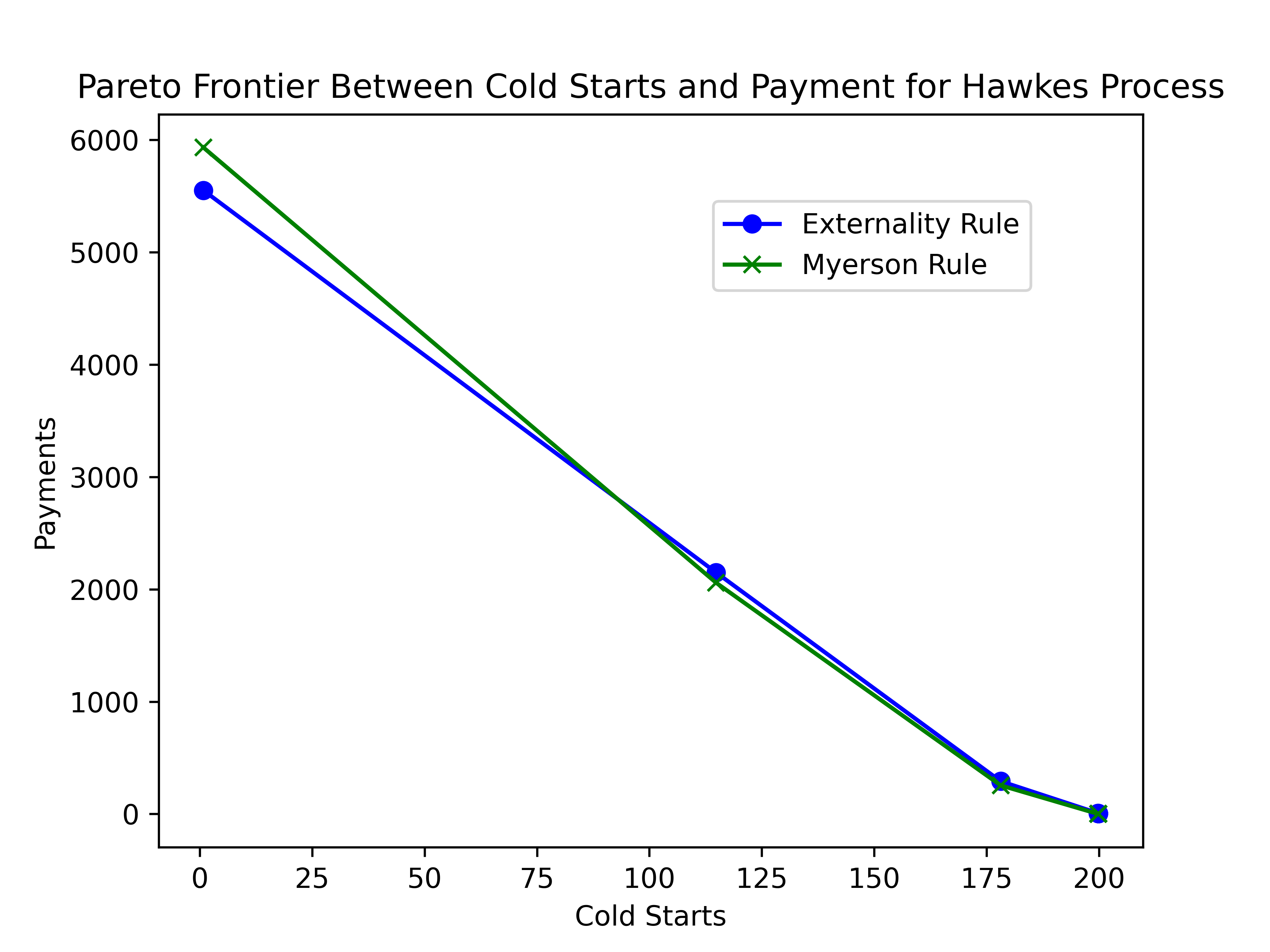}
        \caption{$\lambda_0 = 0.03, \alpha = 0.06, \beta = 2.45$}
    \end{subfigure}        
    ~
    \begin{subfigure}[t]{0.44\textwidth}
        \centering
        \includegraphics[height=1.73in]{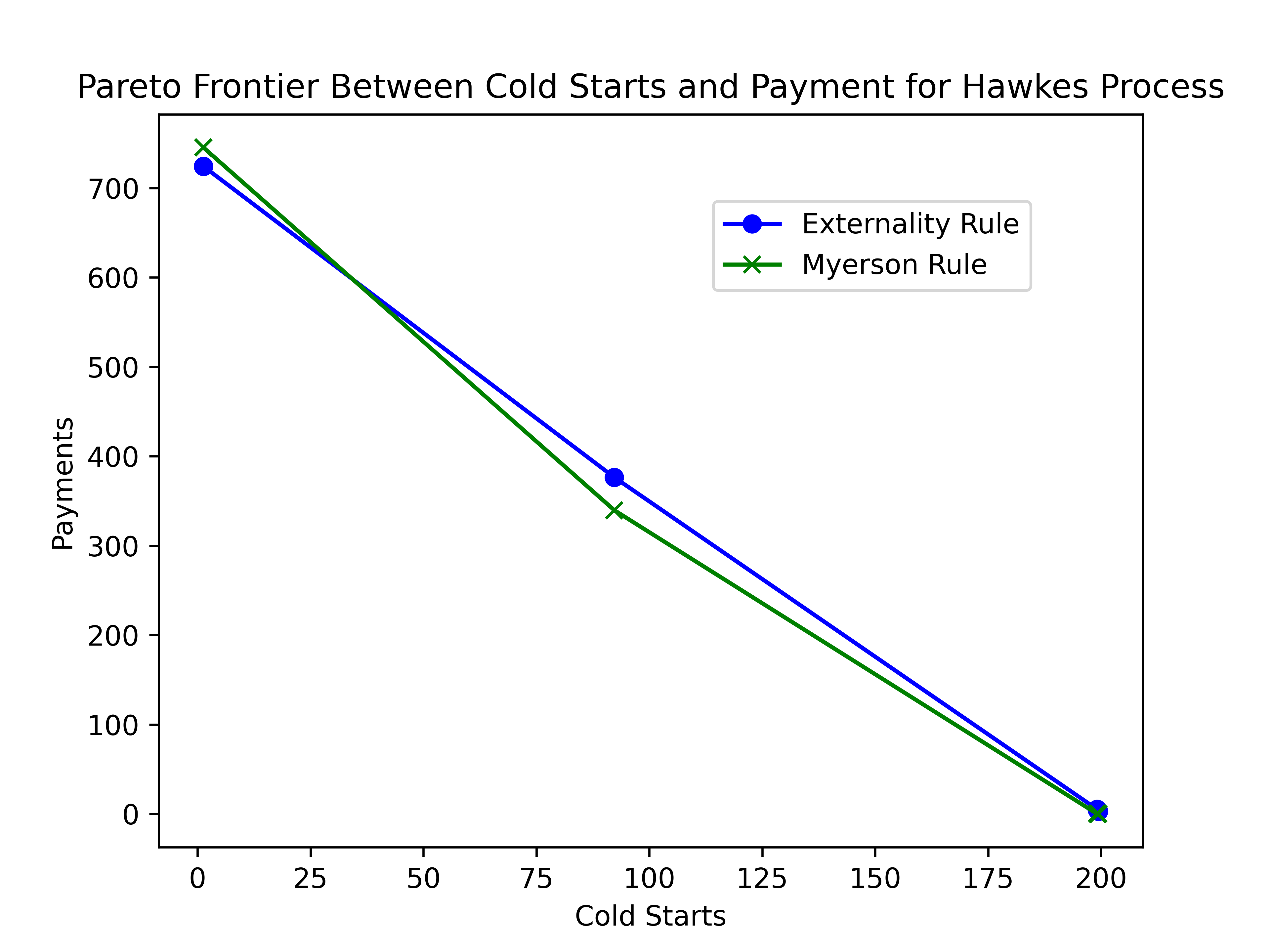}
        \caption{$ \lambda_0 = 0.22, \alpha = 0.54, \beta = 2.92$}
    \end{subfigure}    
     
    \begin{subfigure}[t]{0.44\textwidth}
        \centering
        \includegraphics[height=1.73in]{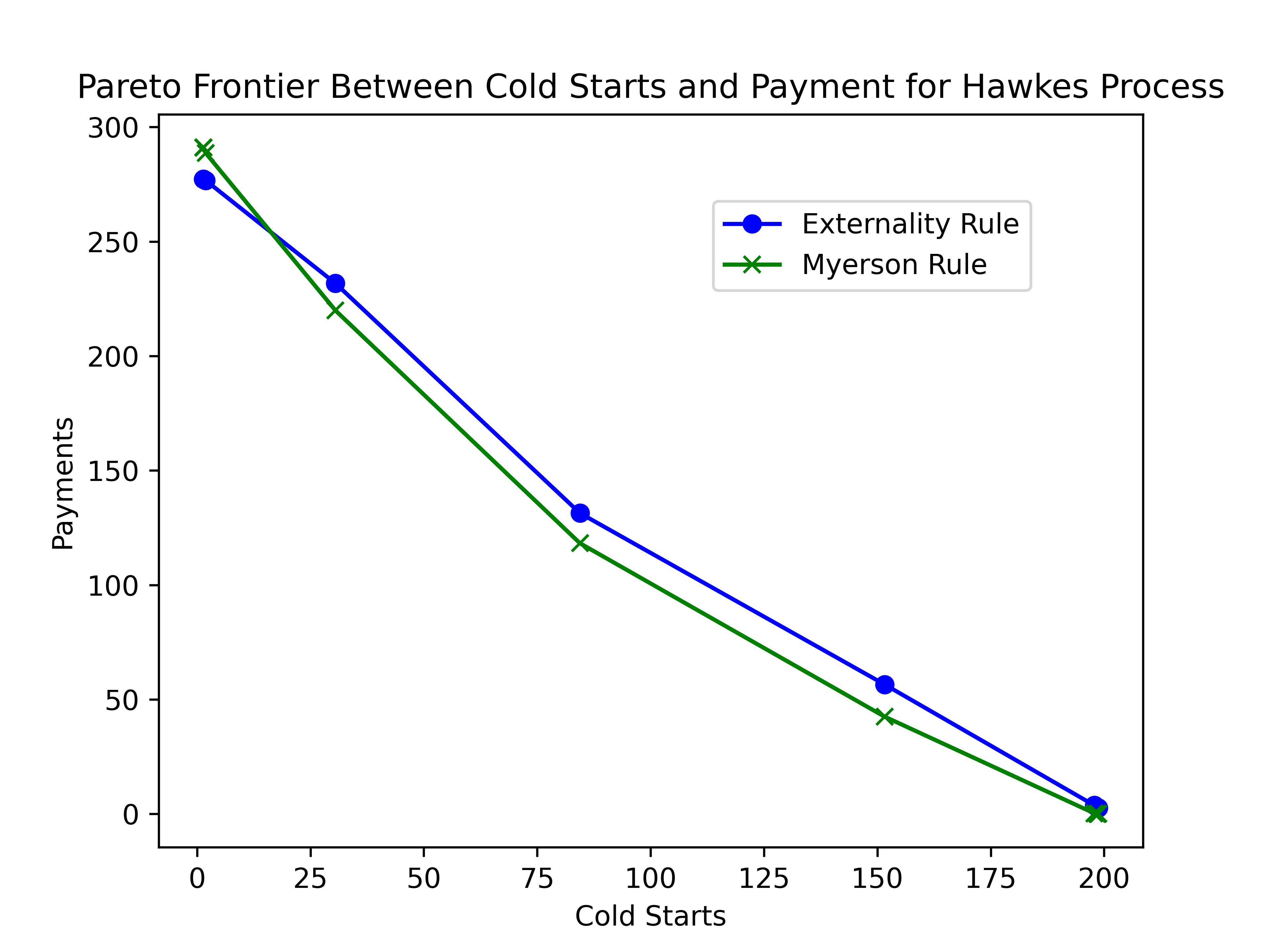}
        \caption{$ \lambda_0 = 0.63, \alpha = 0.53, \beta = 2.91$}
    \end{subfigure}    
    ~ 
    \begin{subfigure}[t]{0.44\textwidth}
        \centering
        \includegraphics[height=1.73in]{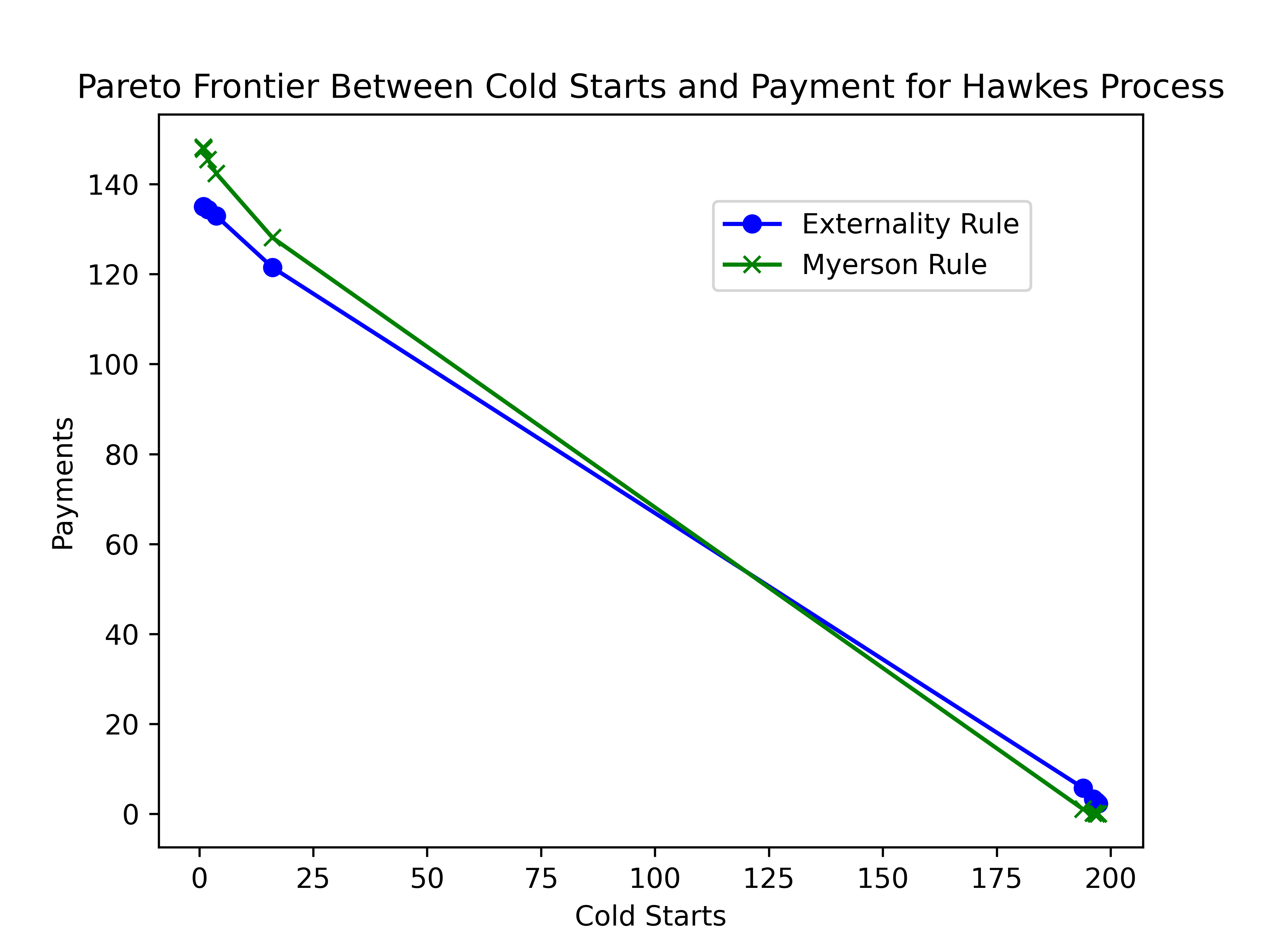}
        \caption{$ \lambda_0 = 0.95, \alpha = 0.88, \beta = 2.04$}
    \end{subfigure}  
    
    \begin{subfigure}[t]{0.44\textwidth}
        \centering
        \includegraphics[height=1.73in]{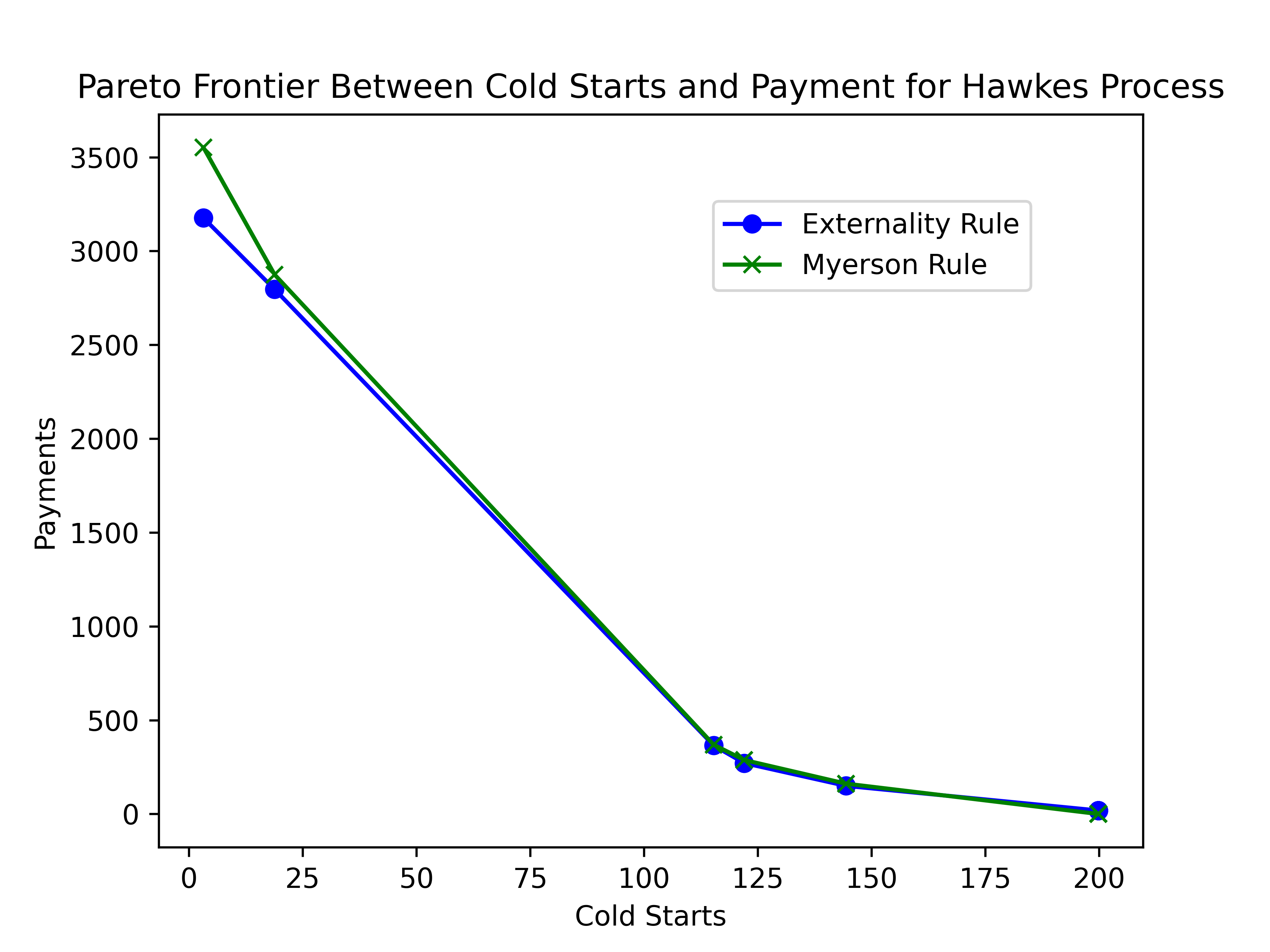}
        \caption{$ \lambda_0 = 0.03, \alpha = 0.56, \beta = 1.25$}
    \end{subfigure}    
    ~ 
    \begin{subfigure}[t]{0.44\textwidth}
        \centering
        \includegraphics[height=1.73in]{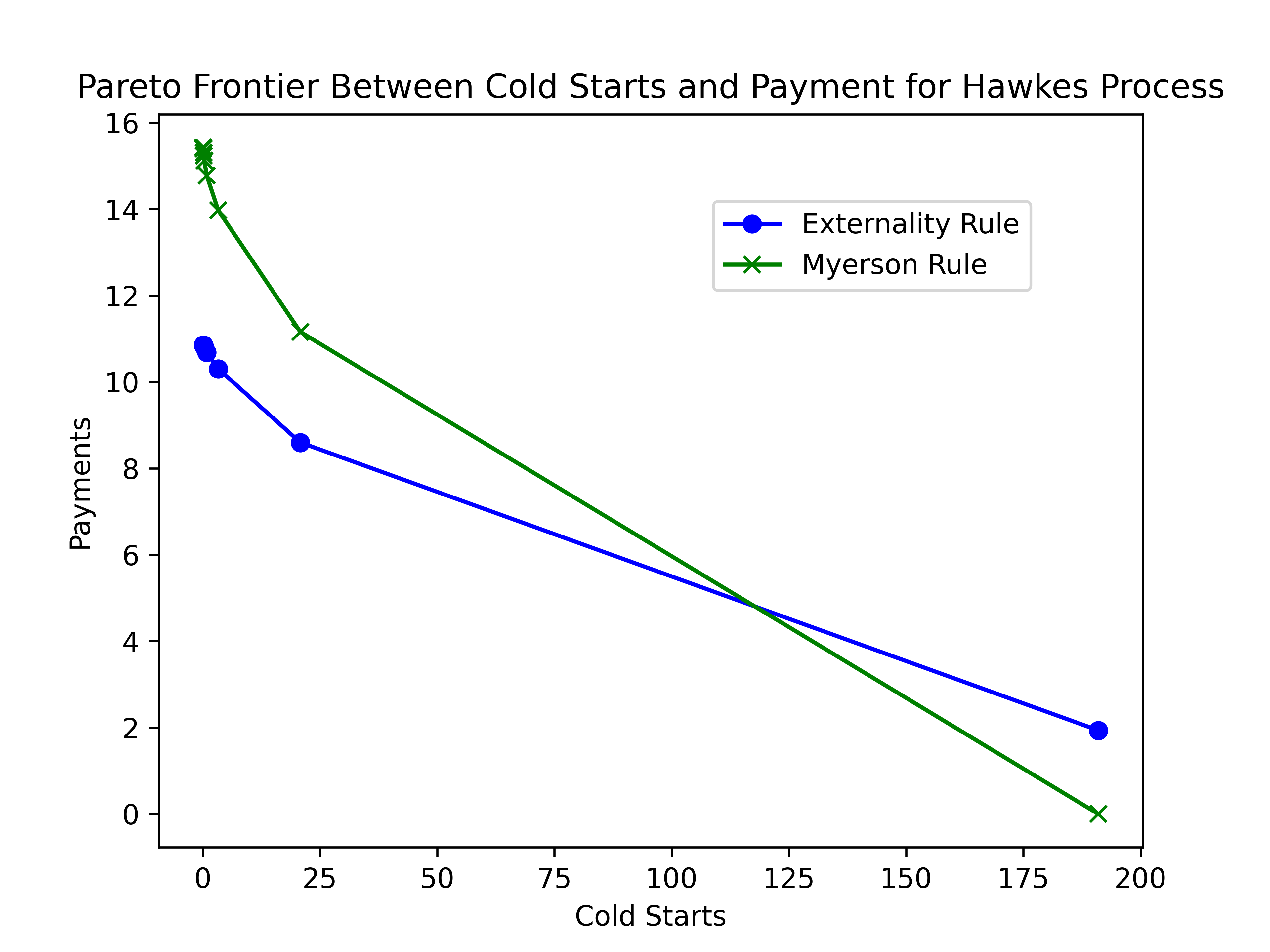}
        \caption{$ \lambda_0 = 0.98, \alpha = 0.88 , \beta = 0.64 $}
    \end{subfigure}   
   \caption{Trade-off curve of cumulative payments vs cold starts for Hawkes process (1-8)}
    \label{hawkes_figure1}
\end{figure*}

\begin{figure*}[t!]
    \centering
    \begin{subfigure}[t]{0.44\textwidth}
        \centering
        \includegraphics[height=1.73in]{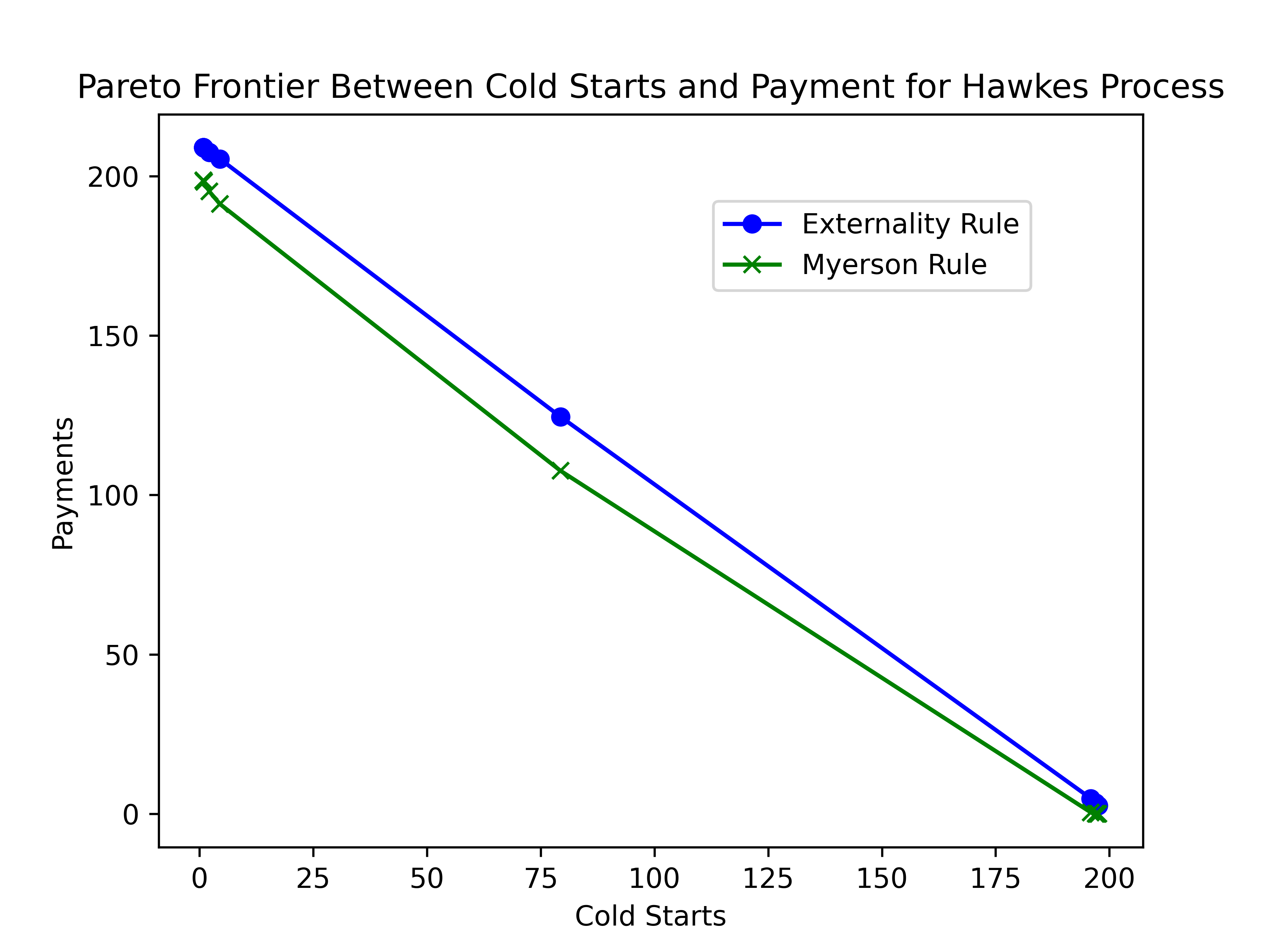}
        \caption{$\lambda_0 = 0.82 , \alpha = 0.97 , \beta = 4.43 $}
    \end{subfigure}%
    ~ 
    \begin{subfigure}[t]{0.44\textwidth}
        \centering
        \includegraphics[height=1.73in]{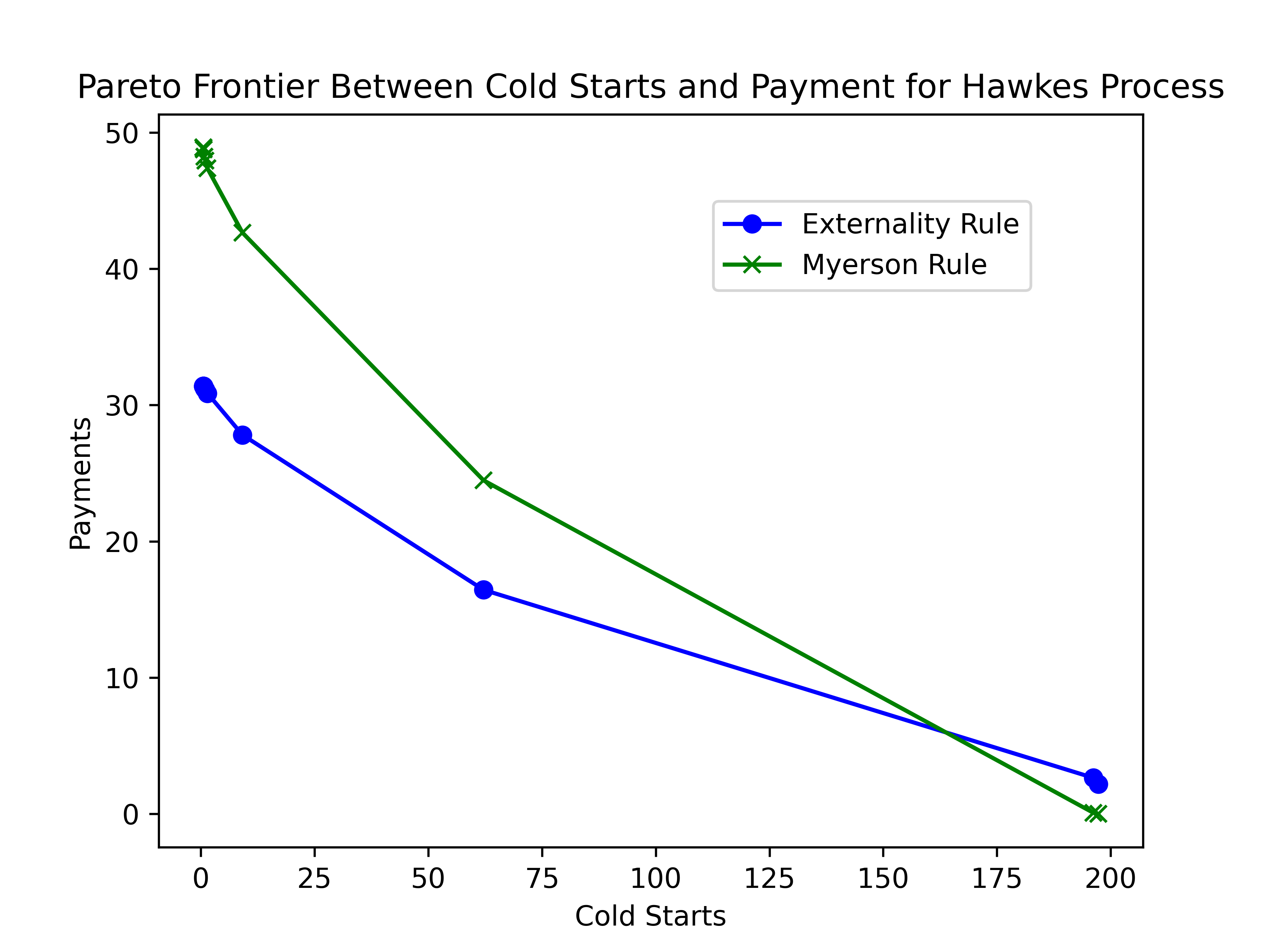}
        \caption{$ \lambda_0 = 0.99, \alpha = 0.40, \beta = 0.46$}
    \end{subfigure}    
     
    \begin{subfigure}[t]{0.44\textwidth}
        \centering
        \includegraphics[height=1.73in]{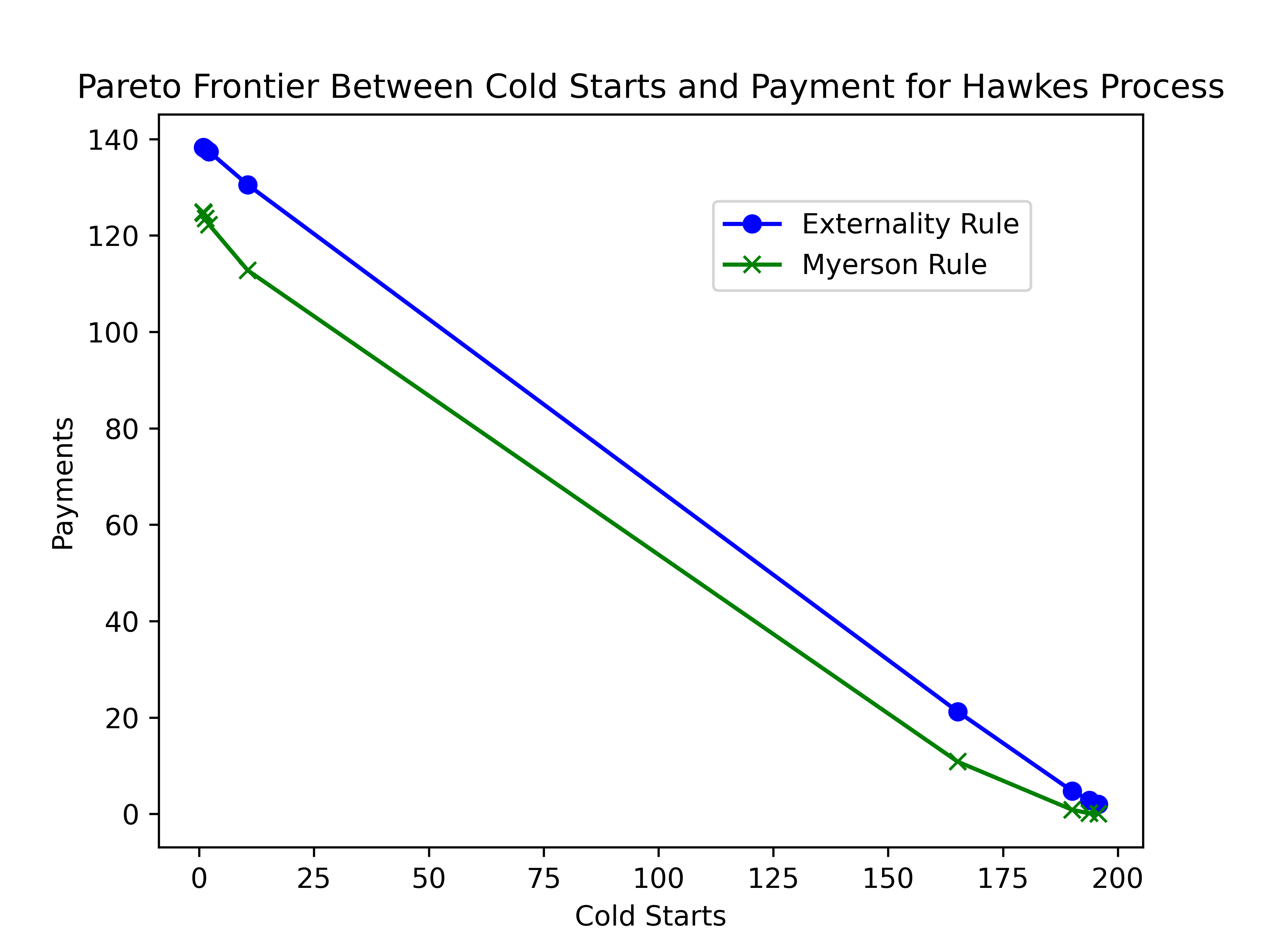}
        \caption{$ \lambda_0 = 0.89 , \alpha = 0.50 , \beta = 1.11$}
    \end{subfigure}        
    ~
    \begin{subfigure}[t]{0.44\textwidth}
        \centering
        \includegraphics[height=1.73in]{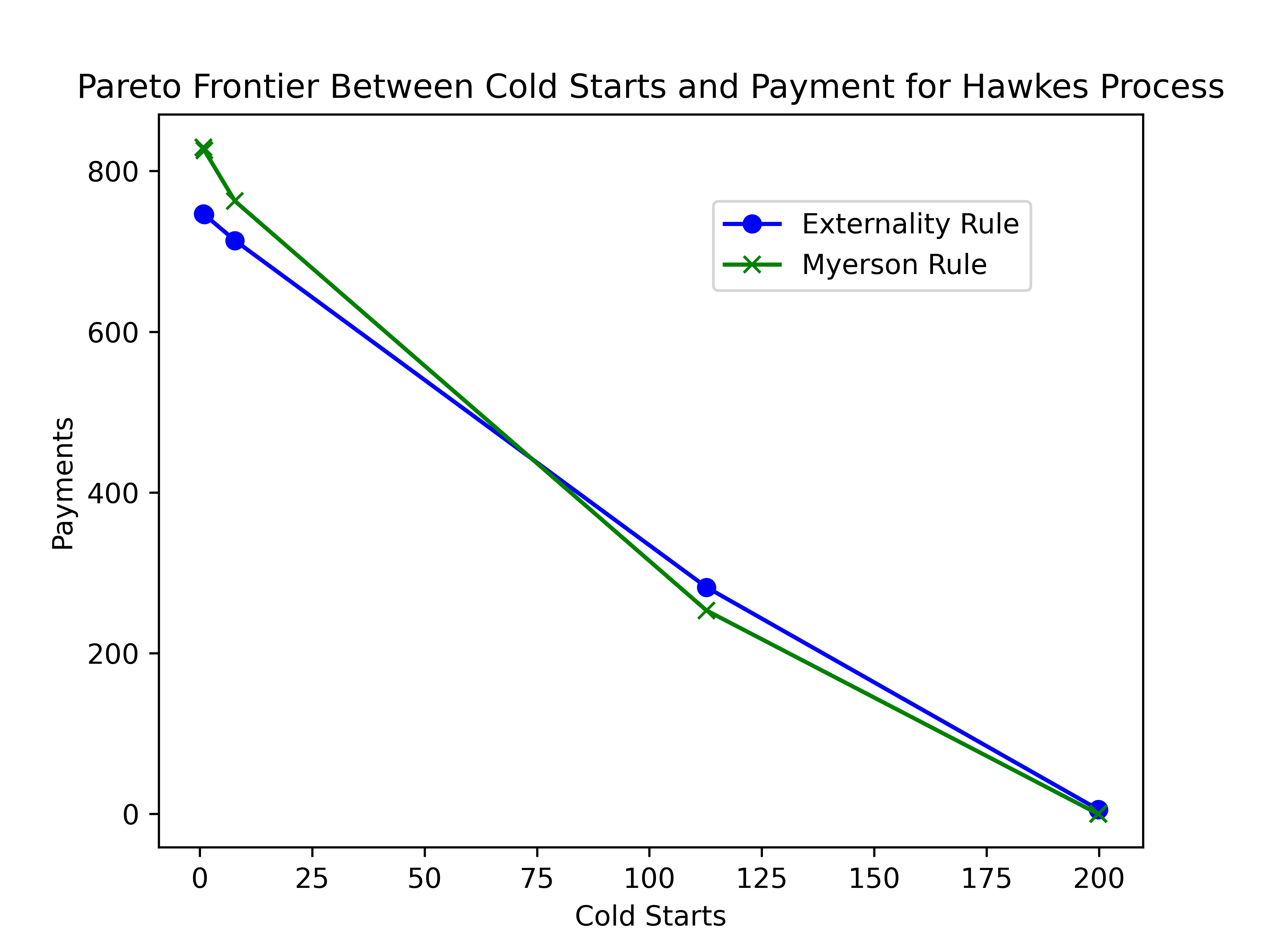}
        \caption{$ \lambda_0 = 0.26 , \alpha = 0.41 , \beta = 3.49$}
    \end{subfigure}    

    \begin{subfigure}[t]{0.44\textwidth}
        \centering
        \includegraphics[height=1.73in]{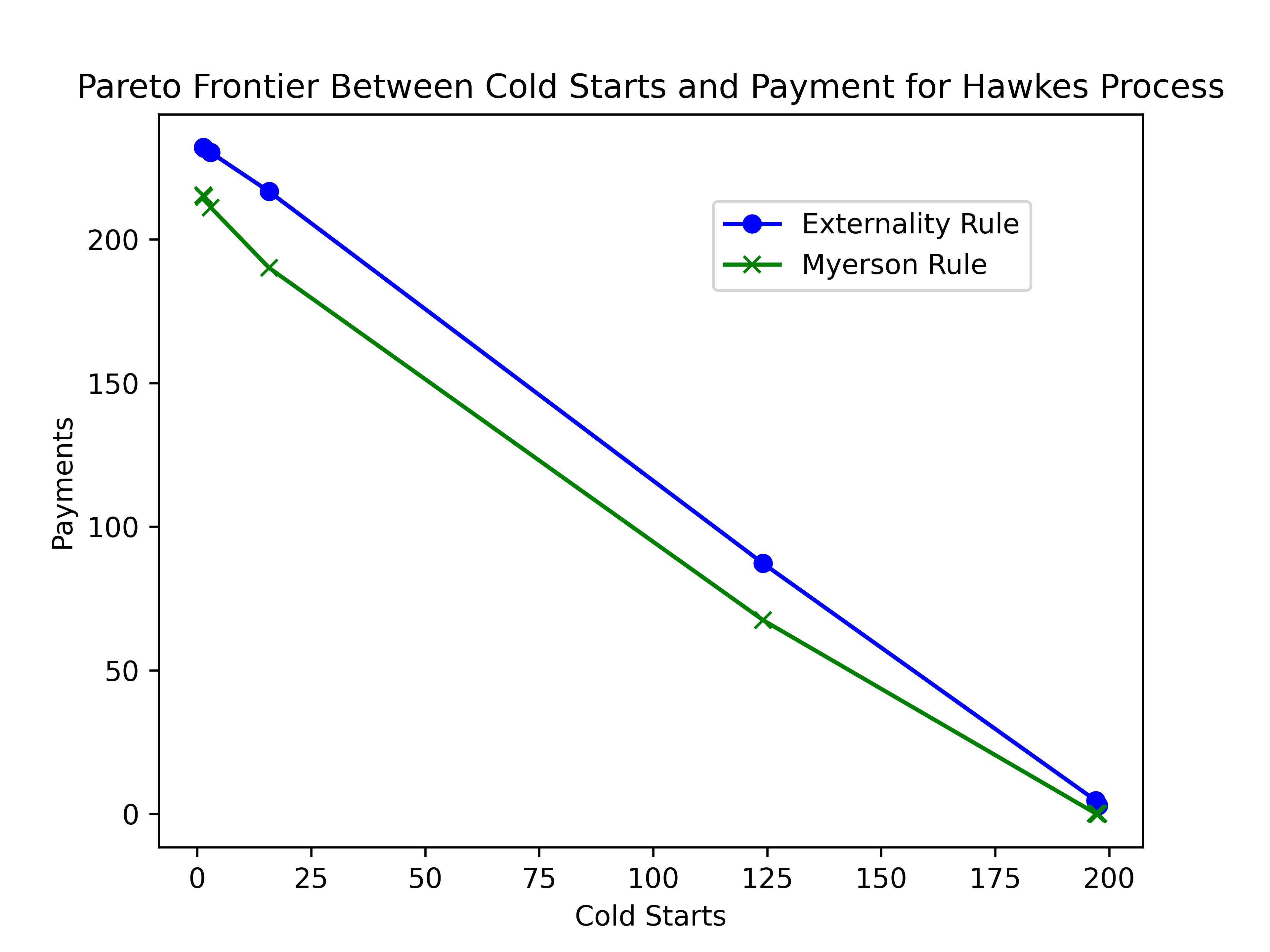}
        \caption{$ \lambda_0 = 0.74, \alpha = 0.88 , \beta = 4.71$}
    \end{subfigure}    
    ~ 
    \begin{subfigure}[t]{0.44\textwidth}
        \centering
        \includegraphics[height=1.73in]{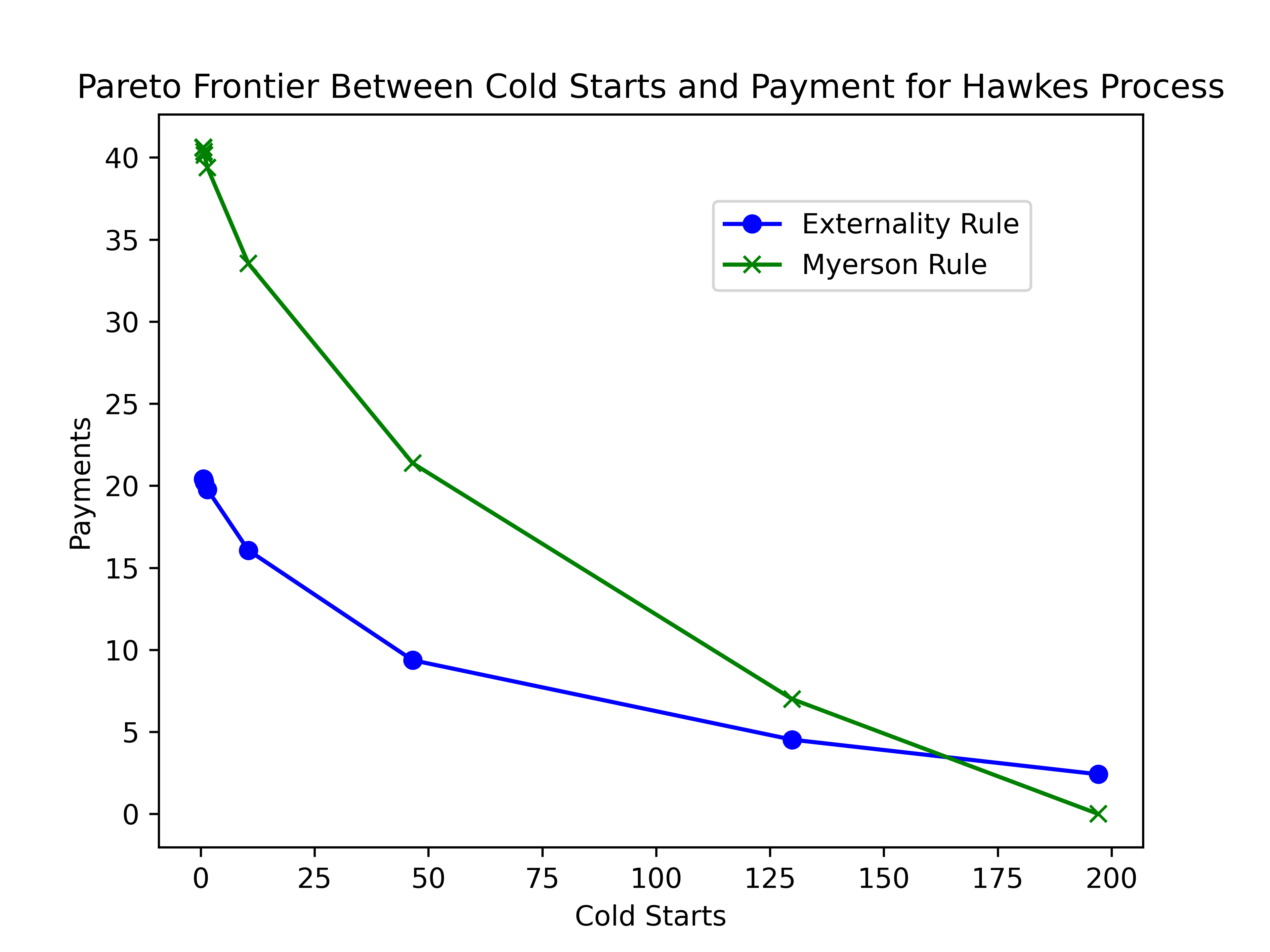}
        \caption{$\lambda_0 = 0.08, \alpha = 0.46, \beta = 0.27 $}
    \end{subfigure}            

    \begin{subfigure}[t]{0.44\textwidth}
        \centering
        \includegraphics[height=1.73in]{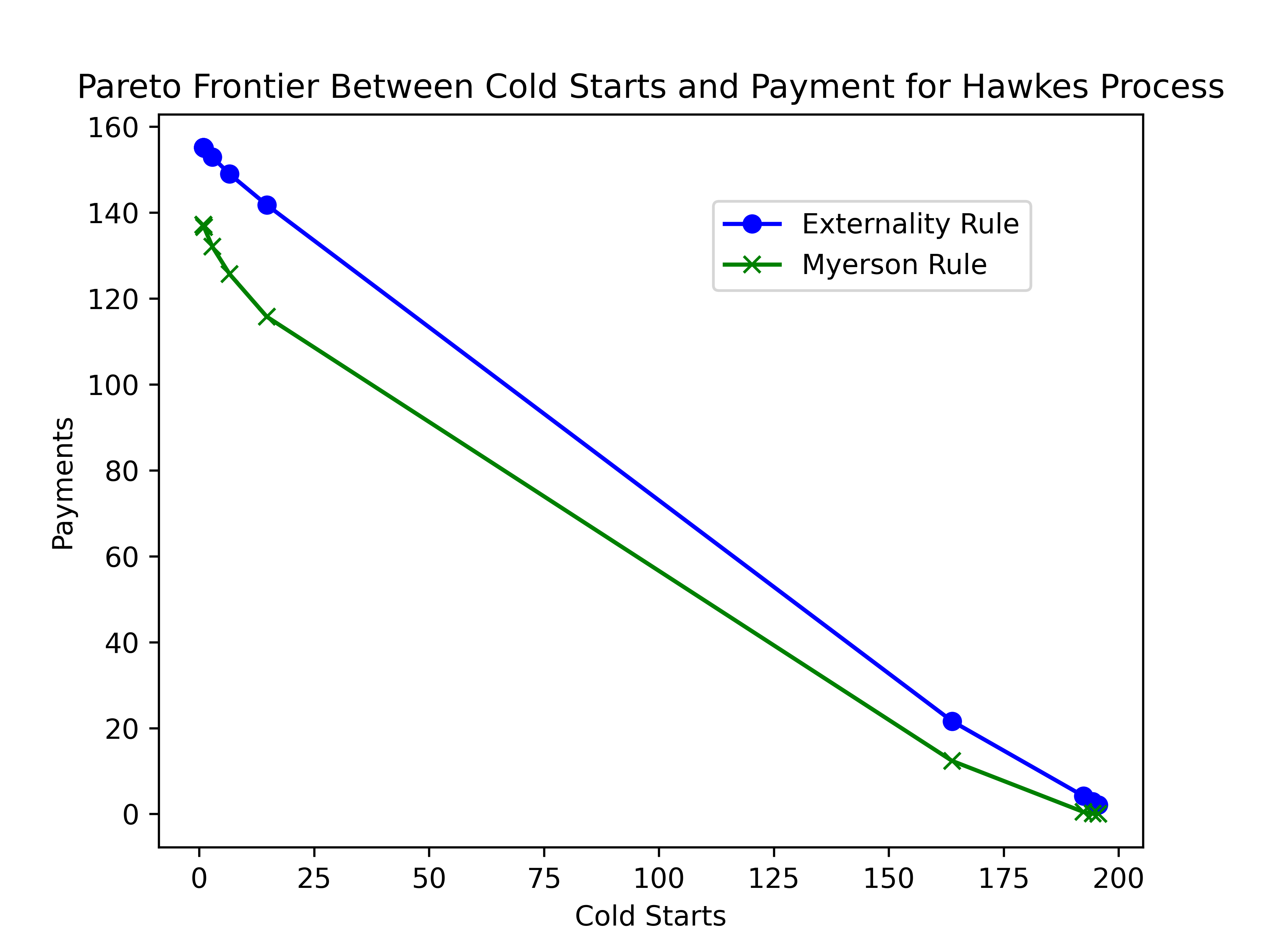}
        \caption{$ \lambda_0 = 0.90, \alpha = 0.86, \beta = 2.43$}
    \end{subfigure}    
    ~ 
    \begin{subfigure}[t]{0.44\textwidth}
        \centering
        \includegraphics[height=1.73in]{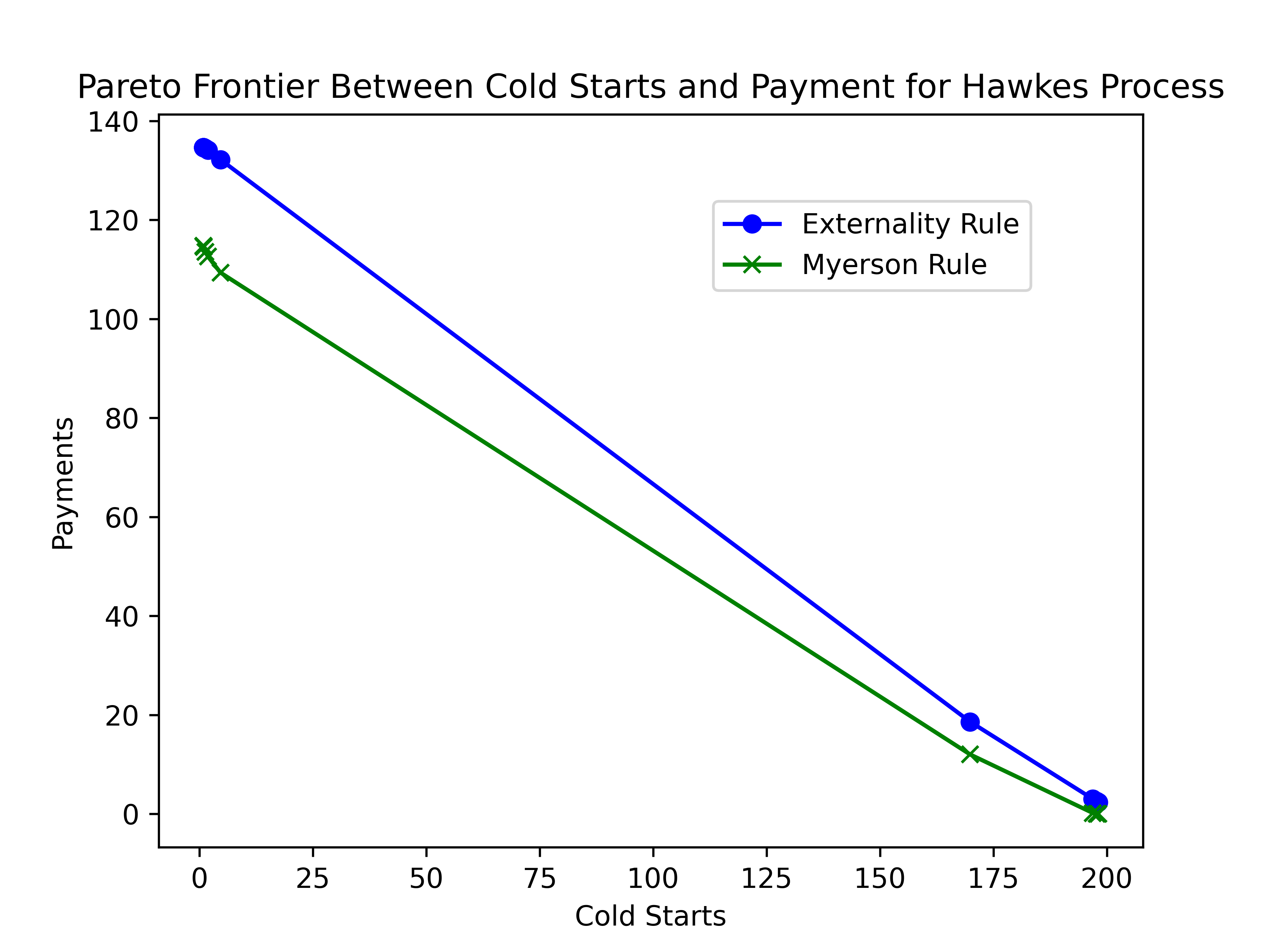}
        \caption{Hawkes  $ \lambda_0 = 0.88, \alpha = 0.55, \beta = 1.55 $}
    \end{subfigure}            

   \caption{Trade-off curve of cumulative payments vs cold starts for Hawkes process (9-16)}
    \label{hawkes_figure2}
\end{figure*}

\begin{figure*}[t!]
    \centering

    \begin{subfigure}[t]{0.44\textwidth}
        \centering
        \includegraphics[height=1.73in]{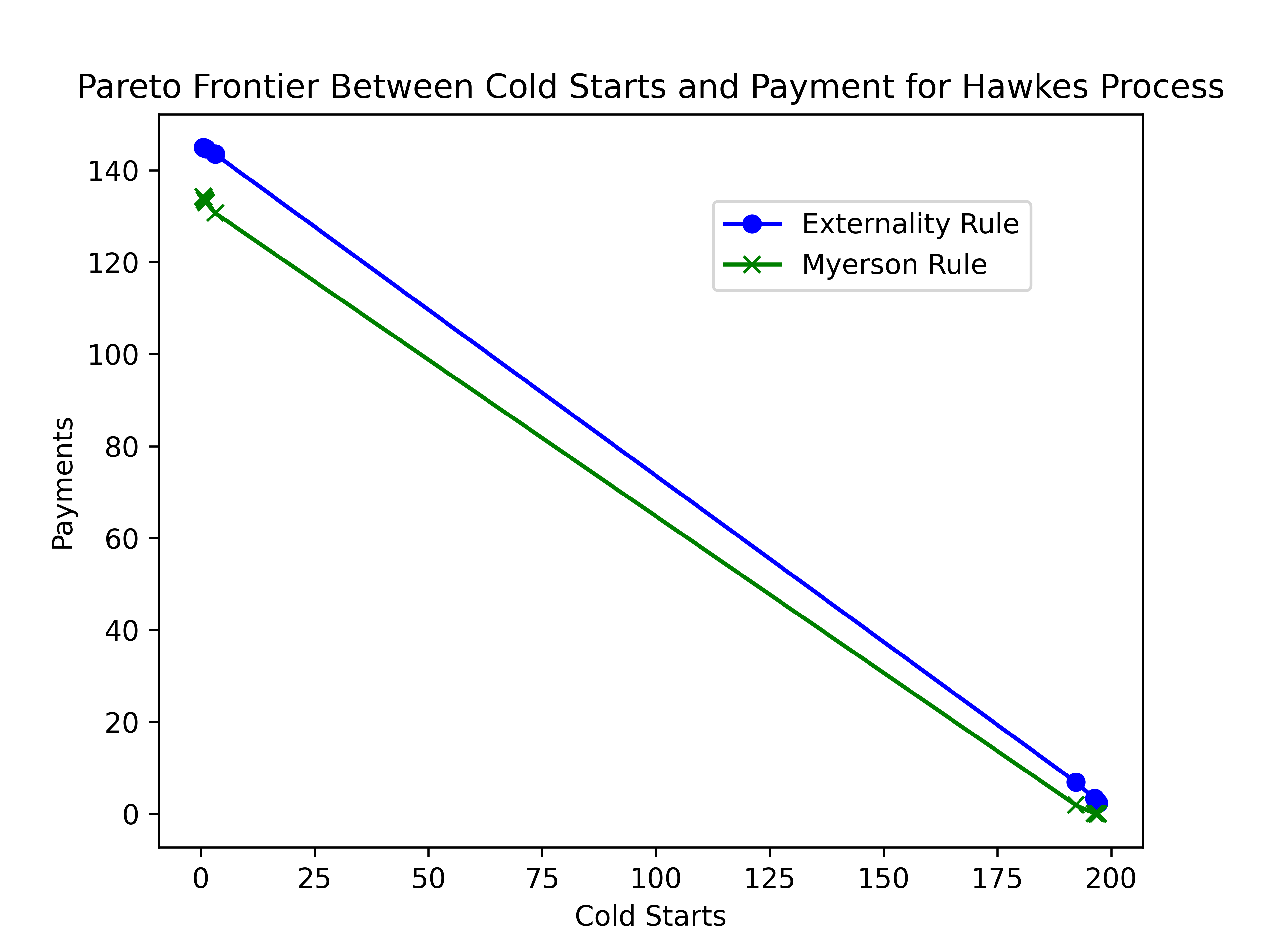}
        \caption{ $ \lambda_0=0.90, \alpha=0.39, \beta = 1.57$}
    \end{subfigure}%
    ~ 
    \begin{subfigure}[t]{0.44\textwidth}
        \centering
        \includegraphics[height=1.73in]{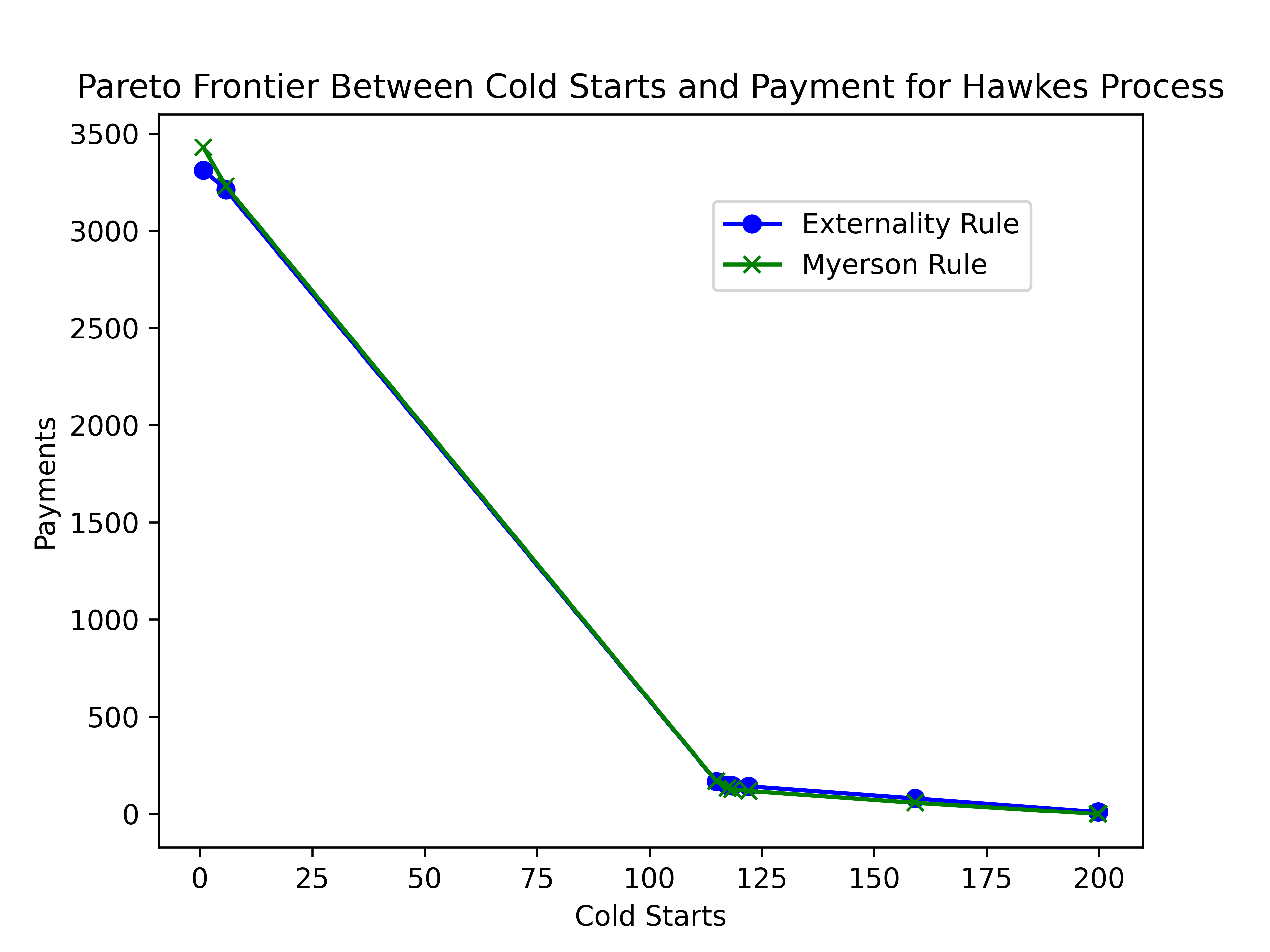}
        \caption{$ \lambda_0 = 0.03 , \alpha = 0.90 , \beta = 2.14$}
    \end{subfigure}   
    
    \begin{subfigure}[t]{0.8\textwidth}
        \centering
        \includegraphics[height=2.03in]{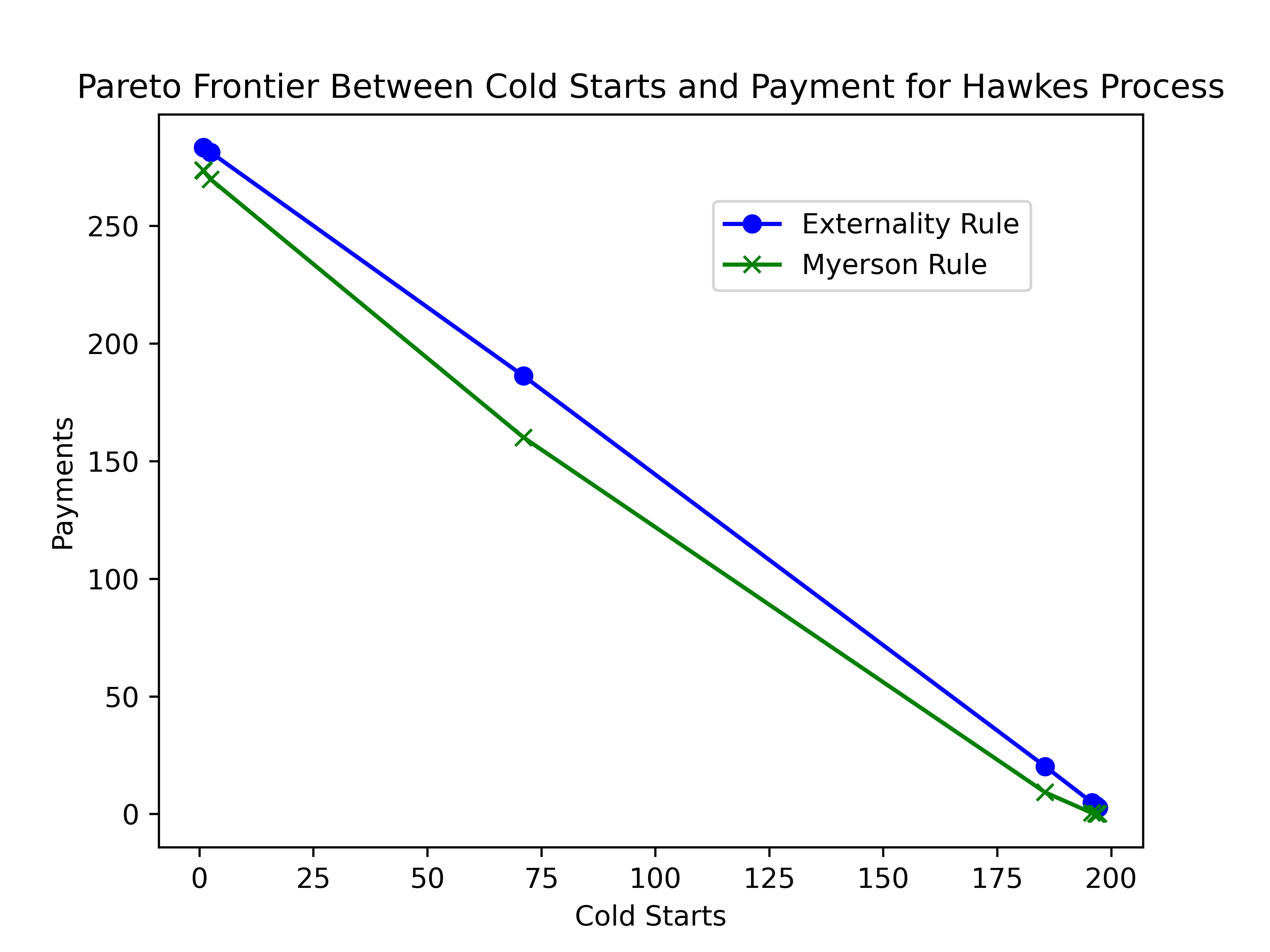}
        \caption{$ \lambda_0 = 0.63, \alpha = 0.42, \beta = 4.06$}
    \end{subfigure}       
   \caption{Trade-off curve of cumulative payments vs cold starts for Hawkes process (17 - 19)}
    \label{hawkes_figure3}
\end{figure*}

\begin{figure*}[t!]
    \centering
    \begin{subfigure}[t]{0.44\textwidth}
        \centering
        \includegraphics[height=1.73in]{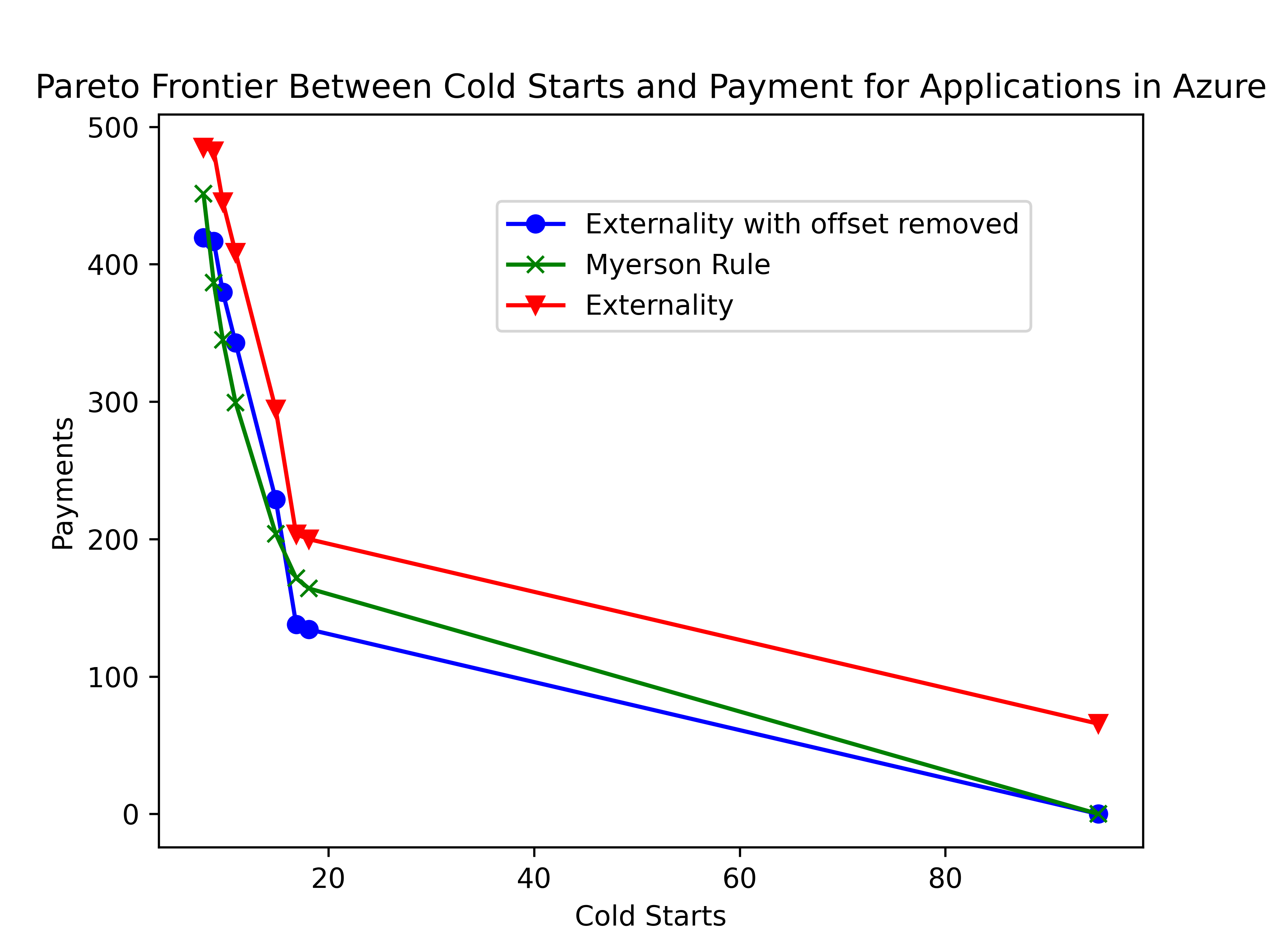}
        \caption{Azure application 3}
    \end{subfigure}%
    ~ 
    \begin{subfigure}[t]{0.44\textwidth}
        \centering
        \includegraphics[height=1.73in]{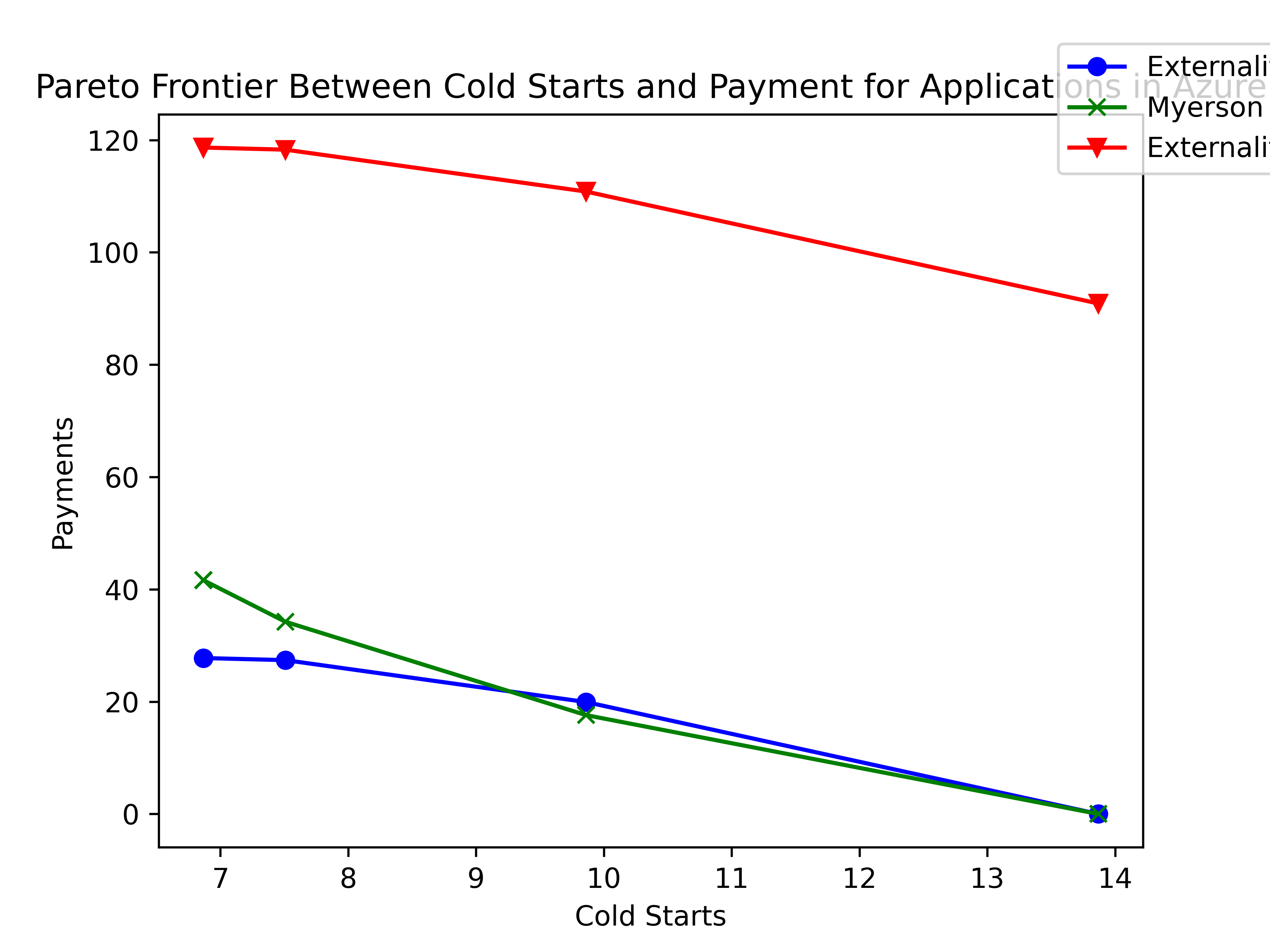}
        \caption{Azure application 4}
    \end{subfigure}    
     
    \begin{subfigure}[t]{0.44\textwidth}
        \centering
        \includegraphics[height=1.73in]{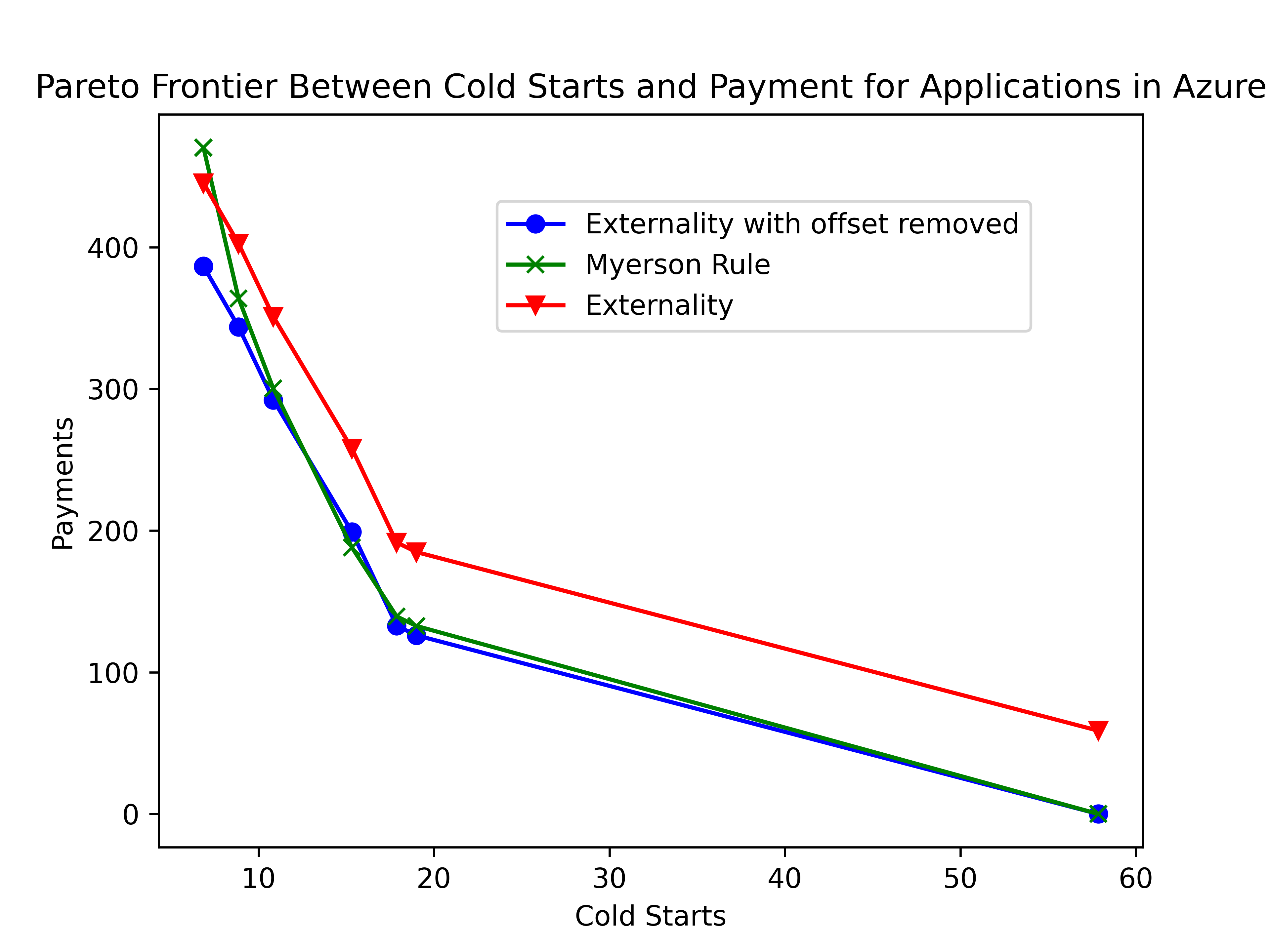}
        \caption{Azure application 5}
    \end{subfigure}        
    ~
    \begin{subfigure}[t]{0.44\textwidth}
        \centering
        \includegraphics[height=1.73in]{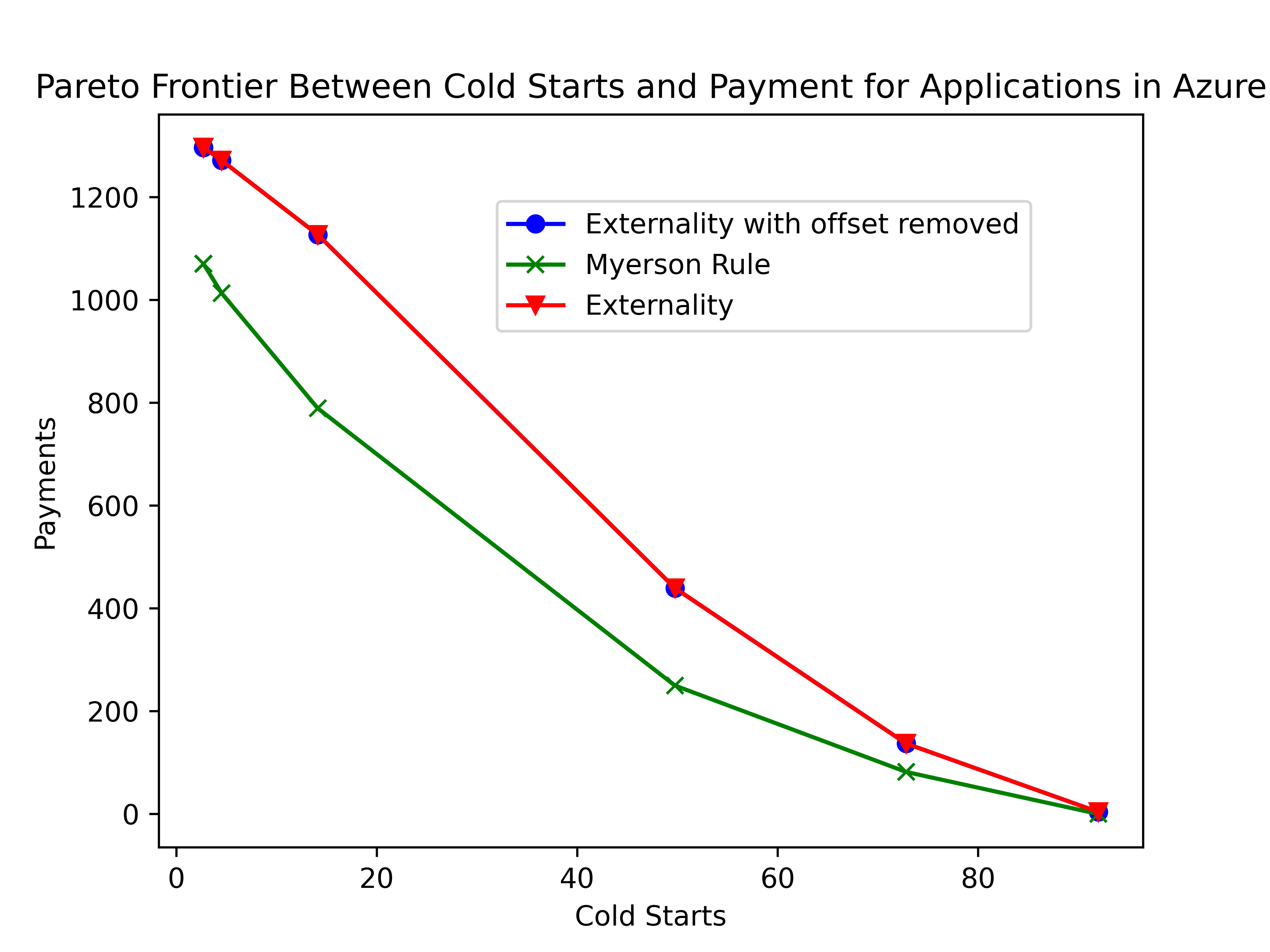}
        \caption{Azure application 6}
    \end{subfigure}    
     
   \caption{Trade-off curve of cumulative payments vs cold starts for Azure applications (3-6)}
    \label{figure_azure_appendix1}
\end{figure*}

\begin{figure*}[t!]
    \centering     
    \begin{subfigure}[t]{0.44\textwidth}
        \centering
        \includegraphics[height=1.73in]{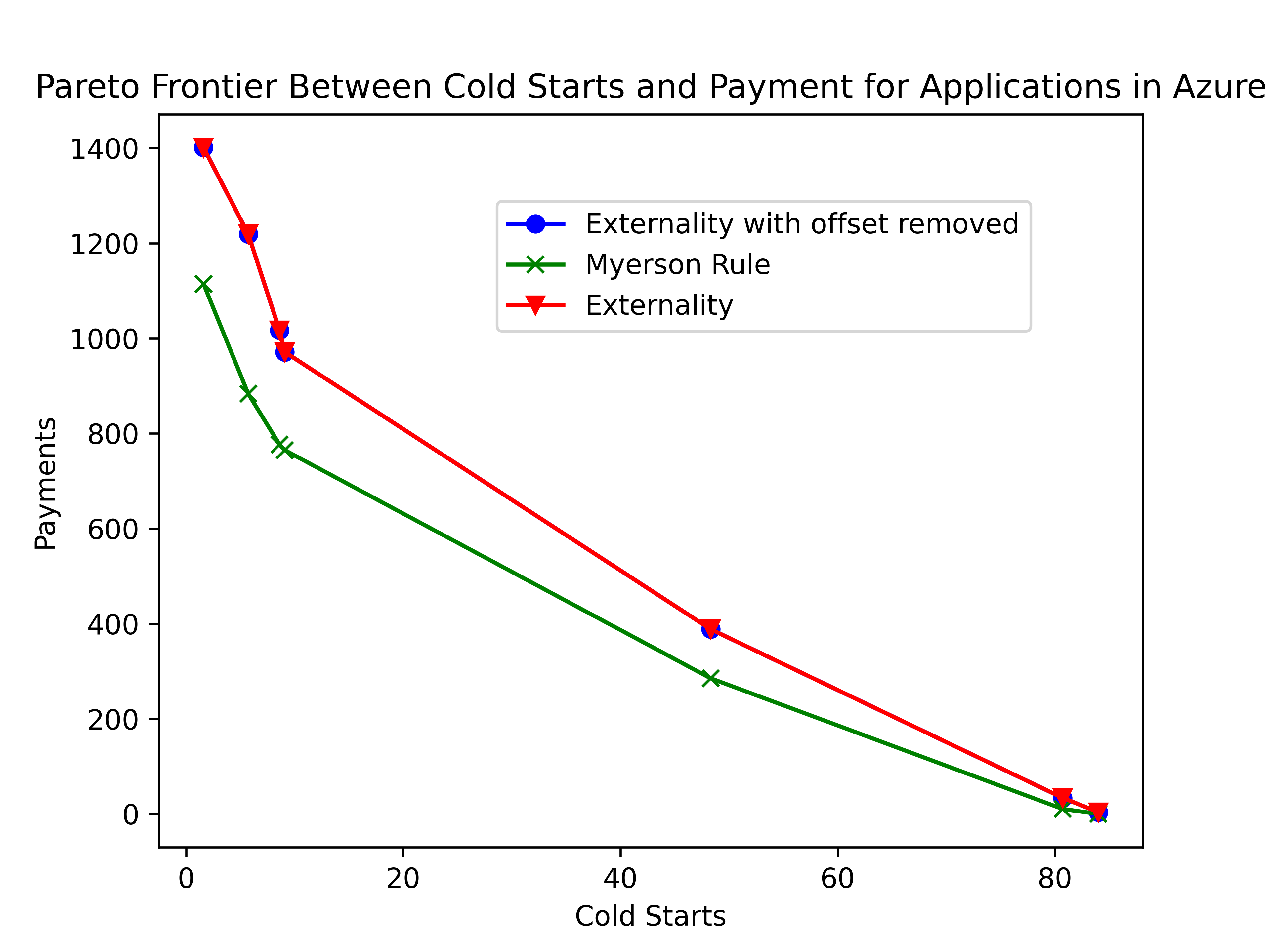}
        \caption{Azure application 7}
    \end{subfigure}    
    ~ 
    \begin{subfigure}[t]{0.44\textwidth}
        \centering
        \includegraphics[height=1.73in]{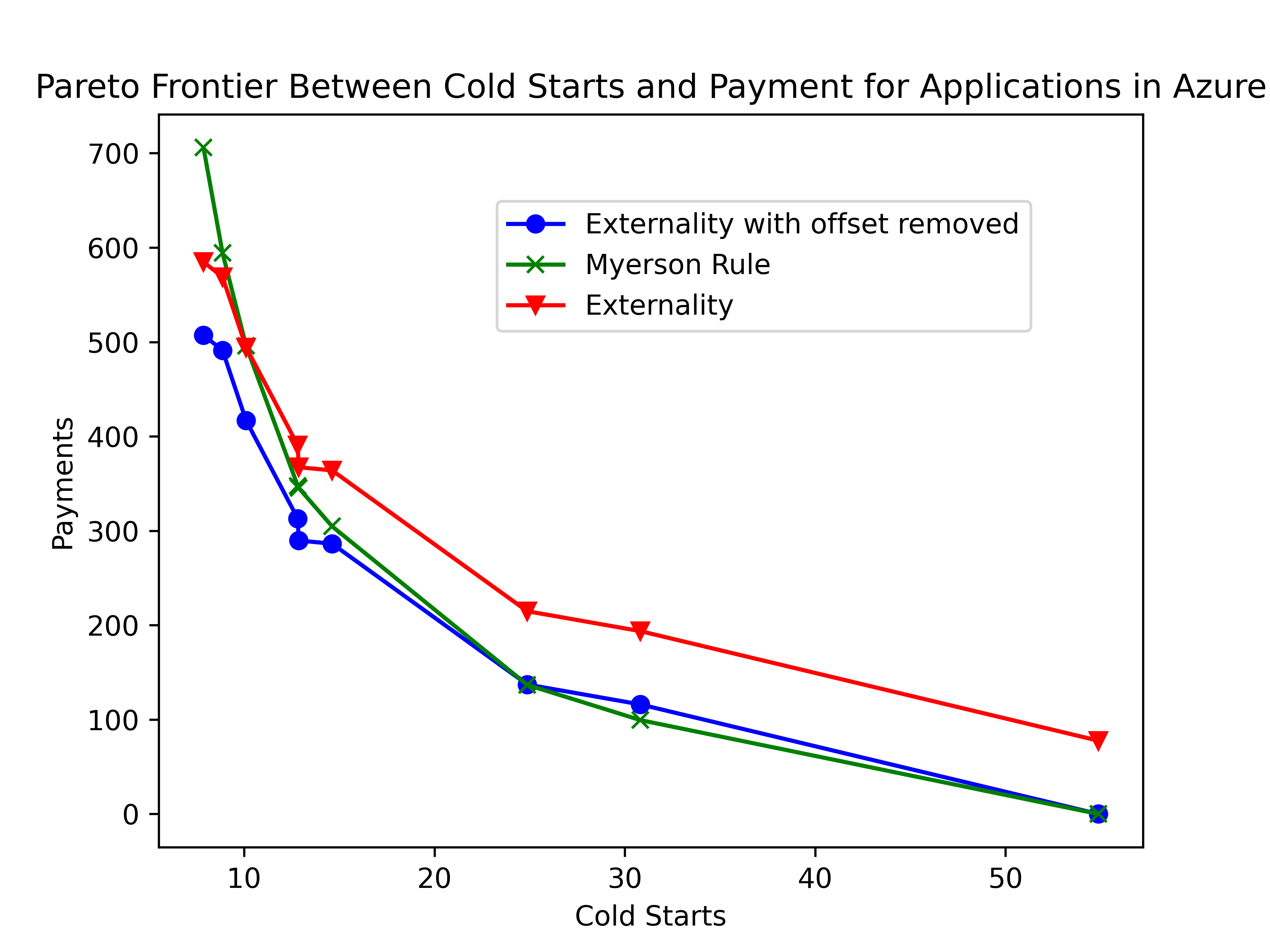}
        \caption{ Azure application 8}
    \end{subfigure}            

    \begin{subfigure}[t]{0.44\textwidth}
        \centering
        \includegraphics[height=1.73in]{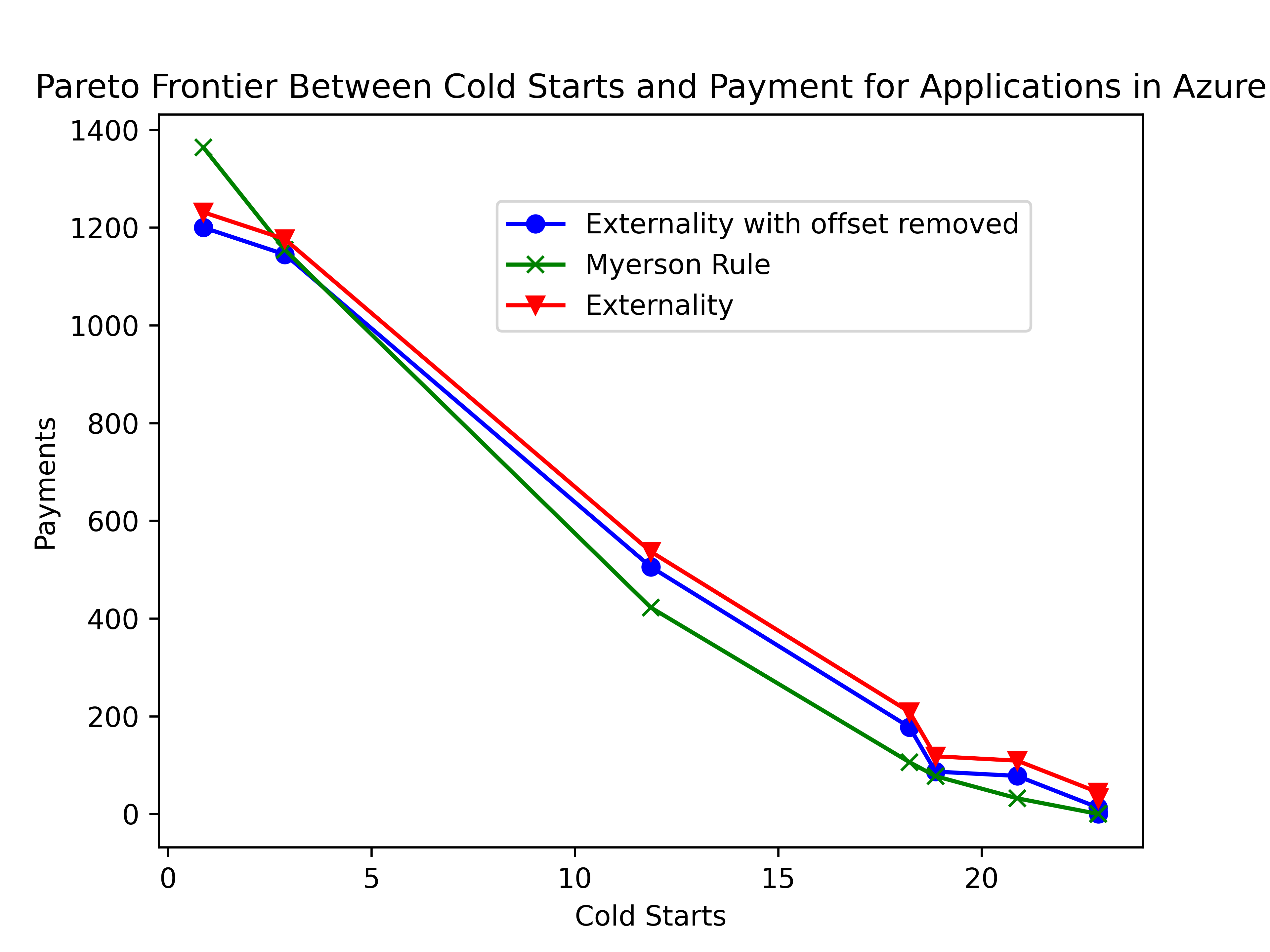}
        \caption{Azure application 9}
    \end{subfigure}    
    ~ 
    \begin{subfigure}[t]{0.44\textwidth}
        \centering
        \includegraphics[height=1.73in]{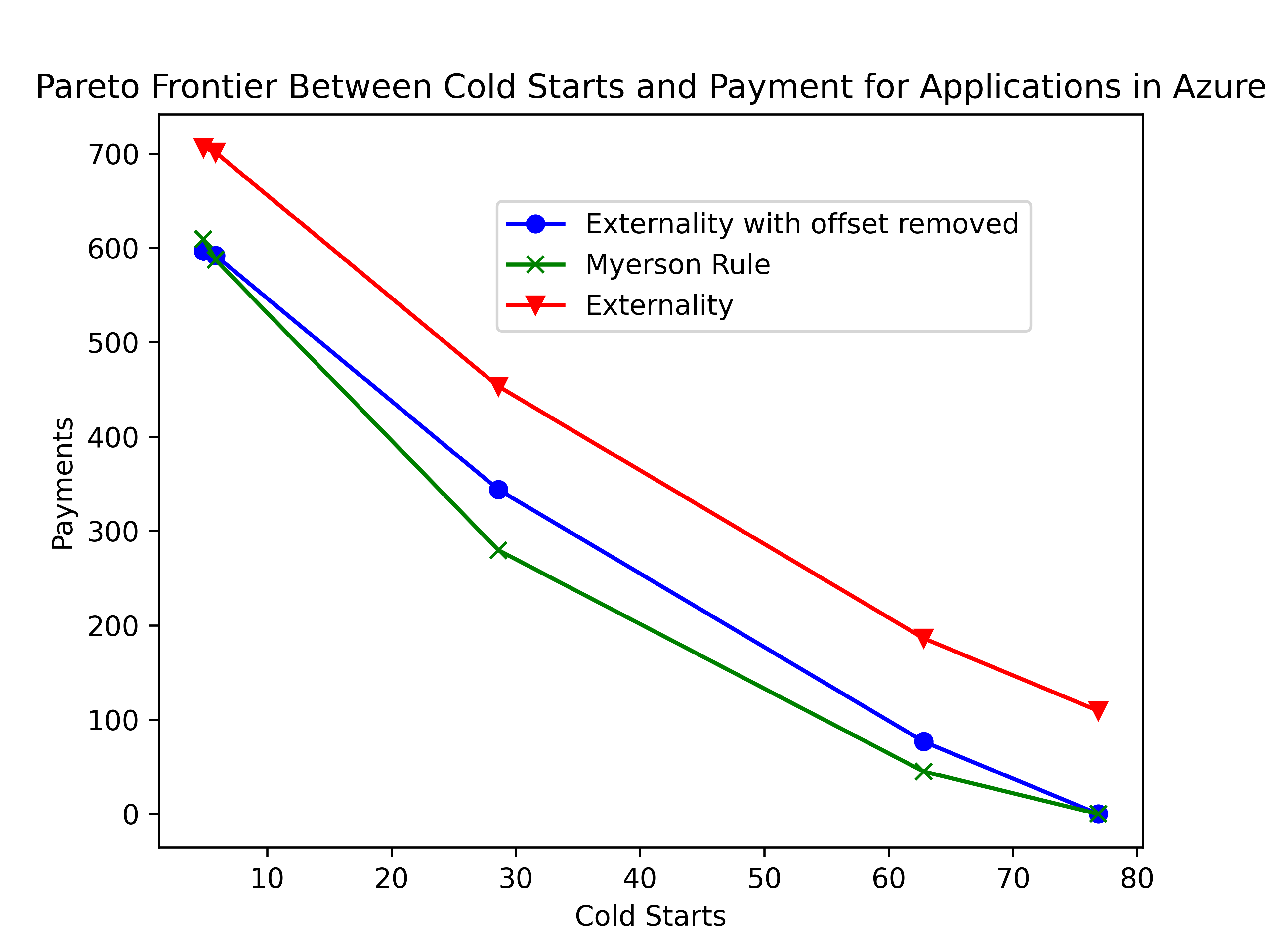}
        \caption{Azure application 10}
    \end{subfigure}            
   \caption{Trade-off curve of cumulative payments vs cold starts for Azure applications (7-10)}
    \label{figure_azure_appendix2}
\end{figure*}

\section{Code}

The experimental simulations were conducted on an Intel® CoreTM i5-8500U CPU with four cores operating at a base frequency of 1.60 GHz on a 12 GB RAM system. All the code was programmed in Python version 3.12, and executed on a Linux-based operating system.  

\end{document}